\def\as{\xrightarrow{a.s.}} 
\def\cp{\xrightarrow{p.}} 
\def\cd{\xrightarrow{d.}} 
\def\b{\mathbb}
\def\sinr{\textrm{SINR}}
\def\snr{\textrm{SNR}}
\newtheoremstyle{slplain}
  {3pt}
  {3pt}
  {\slshape}
  {}
  {\bfseries}
  {.}%
  { }
  {}
\theoremstyle{slplain}
\newtheorem{cor}{Corollary}
\newtheorem{lem}{Lemma}
\newtheorem{pro}{Proposition}
\def\b{\mathbb}
\def\dint{{\rm d}}
\def\bg{{\mathbf{g}}}
\def\bh{{\mathbf{h}}}
\def\bq{{\mathbf{q}}}
\def\bu{{\mathbf{u}}}
\def\bv{{\mathbf{v}}}
\def\bw{{\mathbf{w}}}
\def\by{{\mathbf{y}}}
\def\b0{{\mathbf{0}}}
\def\bI{{\mathbf{I}}}
\def\bQ{{\mathbf{Q}}}
\def\bV{{\mathbf{V}}}
\def\bY{{\mathbf{Y}}}
\def\bepsilon{{\boldsymbol{\epsilon}}}
\begin{document}

\title{   
The Interplay between Massive MIMO and Underlaid D2D Networking
}

\author{
\IEEEauthorblockA{Xingqin Lin, Robert W. Heath Jr., and Jeffrey G. Andrews}
\thanks{Xingqin Lin, Robert W. Heath Jr., and Jeffrey G. Andrews are with Department of Electrical $\&$ Computer Engineering, The University of Texas at Austin, USA. (Email: \{xlin, rheath\}@utexas.edu, jandrews@ece.utexas.edu).  
}
}

\maketitle

\begin{abstract}
In a device-to-device (D2D) underlaid cellular network, the uplink spectrum is reused by the D2D transmissions, causing mutual interference with the ongoing cellular transmissions. Massive MIMO is appealing in such a context as the base station's (BS's) large antenna array can nearly null the D2D-to-BS interference. The multi-user transmission in massive MIMO, however, may lead to increased cellular-to-D2D interference. This paper
studies the interplay between massive MIMO and underlaid D2D networking in a multi-cell setting. We investigate cellular and D2D spectral efficiencies under both perfect and imperfect channel state information (CSI) at the receivers that employ partial zero-forcing. 
Compared to the case without D2D, there is a loss in cellular spectral efficiency due to D2D underlay. With perfect CSI, the loss can be completely overcome if the number of canceled D2D interfering signals is scaled with the number of BS antennas at an arbitrarily slow rate. With imperfect CSI, in addition to pilot contamination, a new asymptotic effect termed \textit{underlay contamination} arises.
In the non-asymptotic regime, simple analytical lower bounds are derived for both the cellular and D2D spectral efficiencies.
\end{abstract}

\section{Introduction}

\subsection{Background}

Device-to-device (D2D) communication enables nearby mobile devices to establish direct links in cellular networks \cite{3gppD2dRF, Corson2012toward, lin2013overview}, unlike traditional cellular communication where all traffic is routed via base stations (BSs). D2D has the potential to improve spectrum utilization, shorten packet delay, and reduce energy consumption, while enabling new peer-to-peer and location-based applications and services \cite{Fodor2012Design, lin2013overview} and being a required feature in public safety networks \cite{doumi2013lte}. Introducing D2D poses many challenges and risks to the existing cellular architecture. In particular, in a D2D underlaid cellular network where the spectrum is reused D2D transmission may cause interference to cellular transmission and vice versa. Existing operator services may be severely affected if the newly introduced D2D interference is not appropriately controlled.


The distinctive traits of massive MIMO make it appealing to enable D2D communication in the uplink resources of cellular networks. In a massive  MIMO system, each BS uses a very large antenna array to serve multiple users in each time-frequency resource block \cite{marzetta2010noncooperative}. If the number of antennas at a BS is significantly larger than the number of served users, the channel of each user to/from the BS is nearly orthogonal to that of any other user. This allows for very simple transmit or receive processing techniques like matched filtering to be nearly optimal with enough antennas even in the presence of interference \cite{marzetta2010noncooperative, ngo2011energy, hoydis2013massive, bai2013asymptotic, madhusudhanan2013stochastic}. This implies that, with a large antenna array at a BS, D2D signals possibly result in close-to-zero interference at the uplink massive MIMO BS, making D2D very simple and appealing in massive MIMO systems. 

Though D2D-to-cellular interference may be effectively handled by the large antenna array at a BS, cellular-to-D2D interference persists and may be worse in a massive MIMO system. Specifically, massive MIMO is a multi-user transmission strategy designed to support multiple users in each time-frequency block; the number of simultaneously active uplink users is scalable with the number of antennas at the BS. With this increased number of uplink transmitters, the D2D links reusing uplink radio resources will experience increased interference. To protect D2D links, the number of simultaneously active uplink users might have to be limited, eating into massive MIMO gain. It is not \textit{a priori} clear to what extent the D2D signals would be affected by the multiuser transmission and the tradeoff between supporting D2D communication and scaling up the uplink capacity in a massive MIMO system. Further, if cochannel D2D signals are present when estimating massive MIMO channels, the estimated channel state information (CSI) would become less accurate, which may hurt massive MIMO performance. It is not \textit{a priori} clear however to what extent the D2D signals would affect the channel estimation and consequently the performance of the massive MIMO system.

Existing research on D2D networking is mainly focused on single-antenna networks (see, e.g., \cite{xu2010effective, yu2011resource, kaufman2013spectrum, lin2013multicast, ji2013fundamental, lin2013comprehensive}) while research on the use of antenna arrays has just begun \cite{janis2009interferenceMIMO, tang2013cooperative, li2012device, jayasinghe2013mimo, min2011capacity}. To mitigate or avoid mutual interference between cellular and D2D transmissions, \cite{janis2009interferenceMIMO, tang2013cooperative} considered precoding while \cite{li2012device, jayasinghe2013mimo} studied various relaying strategies. In contrast, \cite{min2011capacity} proposed not to schedule uplink users that may generate excessive interference to D2D users. How D2D MIMO and cellular MIMO interact, especially in the massive MIMO context, is still largely open.

\subsection{Contributions and Outcomes}

The main contributions and outcomes of this paper are summarized as follows.

\subsubsection{A tractable hybrid network model}

We introduce a tractable hybrid network model consisting of both ad hoc nodes and cellular infrastructure, which extends our previous single-antenna D2D model \cite{lin2013comprehensive, lin2013multicast} to multi-antenna transmission. We consider a multi-cell setting and focus on the uplink which is better than the downlink for D2D underlay \cite{lin2013overview}.
The spatial positions of the underlaid D2D transmitters are modeled by a Poisson point process (PPP). Such a random PPP model is well motivated by the random and unpredictable D2D user locations \cite{andrews2011tractable, baccelli2012optimizing}.  All the transmissions (both cellular and D2D) in this model are SIMO (i.e., single-input multiple-output) with each BS having a very large antenna array. For the receive processing, we extend the partial zero-forcing (PZF) receiver studied in ad hoc networks \cite{jindal2011multi} to the hybrid network in question. Spectral efficiency is used as the sole metric throughout this paper.

\subsubsection{Spectral efficiency with perfect CSI}

In the asymptotic regime where the number of BS antennas $M \to \infty$ and with perfect CSI, we find that the received signal-to-interference-plus-noise ratio (SINR) of any cellular user increases unboundedly and the effects of noise, fast fading, and the interfering signals from the other co-channel cellular users and the \textit{infinite} D2D transmitters vanish completely. Equivalently, it is possible to reduce cellular transmit power as $\Theta(1/M)$ but still achieve a non-vanishing cellular spectral efficiency, as in the case without D2D underlay \cite{ngo2011energy}. Compared to the case without D2D, with scaled cellular transmit power $\Theta(1/M)$, there is a loss in cellular spectral efficiency if a constant number of D2D interfering signals is canceled. The loss can be overcome if the number of canceled D2D interfering signals is scaled appropriately (e.g., $\Theta(\sqrt{M})$). In the non-asymptotic regime, we derive simple analytical lower bounds for both cellular and D2D spectral efficiencies; the derived bounds allow for very efficient numerical evaluation.

\subsubsection{Spectral efficiency with imperfect CSI}

We study pilot-based CSI estimation in which known training sequences are transmitted and the receivers use minimum mean squared error (MMSE) estimators for channel estimation. In the asymptotic regime with the estimated CSI, it is known that the received SINR of any cellular user is bounded due to pilot contamination \cite{marzetta2010noncooperative}. With D2D underlay, the bounded SINR is further degraded due to a 
new asymptotic effect which we term \textit{underlay contamination}. Due to the underlay contamination, we find that scaling down cellular transmit power results in a vanishing cellular spectral efficiency, no matter how slow the scaling rate is. This is dramatically different from the case without D2D underlay, for which \cite{ngo2011energy} shows that cellular transmit power can be scaled down as $\Theta(1/\sqrt{M})$. To recover the power scaling law $\Theta(1/\sqrt{M})$, one possible approach is to deactivate the D2D links in the training phase of massive MIMO; however, compared to the case without D2D,  there is a loss in cellular spectral efficiency due to  D2D-to-cellular interference in the data transmission phase. Instead, if the cellular transmit power is not scaled down and D2D links are deactivated in the training phase, massive MIMO automatically eliminates the effect of D2D-to-cellular interference in the data transmission phase.


%
%
%
%

\section{Mathematical Models}
\label{sec:model}

\subsection{Network Model}

Consider a multi-cell D2D underlaid massive MIMO system shown in Fig. \ref{fig:1}. In this system, there are $B+1$ cells; in each cell $b, b = 0,1,...,B$, $K$ cellular user equipments (UEs) transmit to the BS $b$. We denote by $\mathcal{K}_{b}$ the set of the $K$ cellular UEs in the cell $b$, and $\mathcal{C}_b$ the coverage area of the cell $b$ satisfying that $\mathcal{C}_b \cap \mathcal{C}_{b'} = \emptyset, \forall b \neq b'$. We assume that the $K$ cellular UEs are uniformly distributed in each cell; this assumption is not essential in the analysis but will be used in the simulation.
Specifically, as spatial division multiple access would be challenging for cellular UEs of high mobility, it makes more sense to consider static or low-mobility scenarios. Therefore, we condition on cellular UE positions when studying the achievable performance of a particular cellular link. But we still average over all possible realizations of cellular UE positions in the simulation to compute an average overall performance.

The cellular system is underlaid with D2D UEs. The locations of D2D transmitters are distributed as a homogeneous PPP $\Phi$ with density $\lambda$, as they are random and unpredictable. We partition $\Phi$ into $B+2$ disjoint PPPs $\Phi_0,...,\Phi_{B+1}$, where
$
\Phi_b = \Phi \cup \mathcal{C}_b, \forall b=0,...,B, \textrm{ and } \Phi_{B+1} = \Phi \backslash \cup_{i=0}^B \Phi_i .
$
Each D2D receiver is located at a random distance of $D$ meters from its  associated D2D transmitter with uniformly distributed direction. 

We focus on SIMO in this paper, i.e., a transmitter (either cellular or D2D) uses one antenna for transmission, while a BS and a D2D receiver respectively use $M$ and $N$ antennas for receiving. The analysis and results in this paper can be extended to MIMO transmission, i.e., spatial multiplexing, by treating a UE with multiple data streams as multiple co-located virtual UEs, each sending one data stream. We are interested in the performance regime where $M$ is large and the assumption $M \gg K$ is made throughout this paper, as in the seminal work on massive MIMO \cite{marzetta2010noncooperative}. Note that the scenario where the ratio $K/M$ converges to some constant has also been widely assumed when studying the asymptotic behaviors of MIMO performance \cite{tulino2004random, couillet2011random}. Studying this scenario is an interesting topic, which we leave to future work.

In this system, all the transmitters use the same time-frequency resource block, leading to cochannel interference.
We assume that cellular and D2D UEs transmit at constant powers $P_{\textrm{c}}$ and $P_{\textrm{d}}$ respectively.

\begin{figure}
\centering
\includegraphics[width=14cm]{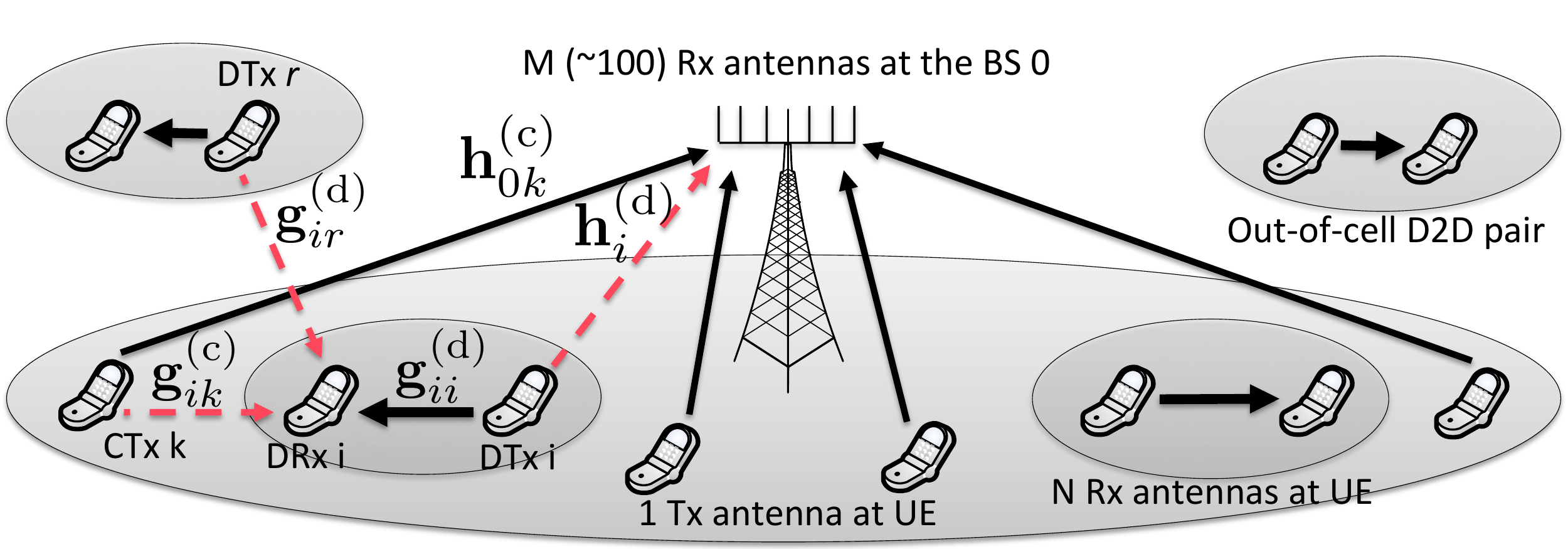}
\caption{A D2D underlaid massive MIMO system consisting of both cellular and D2D links. For clarity, we only show the central cell. D2D pairs located outside of the cells are out of cellular coverage but still contribute to the total aggregate D2D interference.}
\label{fig:1}
\end{figure}

\subsection{Baseband Channel Models}


Without loss of generality, we focus on the central cell, whose BS is indexed by $b = 0$ and located at the origin. This helps simplify the notation.
The  $M \times 1$ dimensional baseband received signal at the central BS is
\begin{align}
\by^{(\textrm{c})}_0 = \sum_{b=0}^{B} \sum_{k\in \mathcal{K}_b} \sqrt{P_\textrm{c} \Xi^{(\textrm{c})}_{bk}} \|x^{(\textrm{c})}_{bk}\|^{-\frac{\alpha_{\textrm{c}}}{2}}  \bh^{(\textrm{c})}_{bk} u^{(\textrm{c})}_{bk}  + \sum_{i \in \Phi} \sqrt{P_\textrm{d}\Xi_i^{(\textrm{d})}} \|x^{(\textrm{d})}_i\|^{-\frac{\alpha_{\textrm{c}}}{2}}  \bh_i^{(\textrm{d})} u_i^{(\textrm{d})}  + \bv^{(\textrm{c})}_0 ,
\label{eq:1}
\end{align}
where $\Xi^{(\textrm{c})}_{bk}$ denotes the shadowing from cellular transmitter $k$ in the cell $b$ to the BS $0$, $x_{bk}^{(\textrm{c})}$ denotes the position of cellular transmitter $k$ in the cell $b$, $\alpha_\textrm{c} >2$ denotes the pathloss exponent of UE-BS links, $\bh_{bk}^{(\textrm{c})} \in \mathcal{C}^{M \times 1}$ is the vector channel  from cellular transmitter $k$ in the cell $b$ to the BS $0$, $u_{bk}^{(\textrm{c})}$ denotes the zero-mean unit-variance transmit symbol of cellular transmitter $k$ in the cell $b$, $\Xi_i^{(\textrm{d})}, x_i^{(\textrm{d})},  \bh_i^{(\textrm{d})} \in \mathcal{C}^{M \times 1}$ and $u_i^{(\textrm{d})}$ are similarly defined for D2D transmitter $i$,
and $\bv_0^{(\textrm{c})} \in \mathcal{C}^{M \times 1}$ is  complex Gaussian noise at the BS $0$ with covariance $N_0 \bI_M$, where  $\bI_M$ denotes the $M$ dimensional identity matrix. 


Similarly, the  $N \times 1$ dimensional baseband received signal at D2D receiver $r$ is
\begin{align}
\by_r^{(\textrm{d})} = \sum_{b=0}^{B} \sum_{k\in \mathcal{K}_b } \sqrt{P_\textrm{c}\Xi^{(\textrm{c})}_{rbk}} (d_{rbk}^{(\textrm{c})})^{-\frac{\alpha_\textrm{d}}{2}}  \bg_{rbk}^{(\textrm{c})} u_{bk}^{(\textrm{c})}  + \sum_{i \in \Phi} \sqrt{P_\textrm{d}\Xi_{ri}^{(\textrm{d})}} (d_{ri}^{(\textrm{d})})^{-\frac{\alpha_\textrm{d}}{2}}  \bg_{ri}^{(\textrm{d})}  u_i^{(\textrm{d})}  + \bv_r^{(\textrm{d})} ,
\label{eq:2}
\end{align}
where  $\Xi^{(\textrm{c})}_{rbk}, \Xi_{ri}^{(\textrm{d})}$ are the shadowing from cellular transmitter $k$ in the cell $b$ to D2D receiver $r$ and from D2D transmitter $i$ to D2D receiver $r$ respectively, $d_{rbk}^{(\textrm{c})} \triangleq \|x^{(\textrm{c})}_{bk} - z_r^{(\textrm{d})}\|$ and $d_{ri}^{(\textrm{d})} \triangleq \|x^{(\textrm{d})}_i - z_r^{(\textrm{d})}\|$ with $z_r^{(\textrm{d})}$ denoting the position of D2D receiver $r$,  $\alpha_\textrm{d}>2$ denotes the pathloss exponent of UE-UE links, $\bg_{rbk}^{(\textrm{c})}, \bg_{ri}^{(\textrm{d})} \in \mathcal{C}^{N \times 1}$ are the vector channels from  cellular transmitter $k$ in the cell $b$ to D2D receiver $r$ and from  D2D transmitter $i$ to   D2D receiver $r$ respectively,  and $\bv_r^{(\textrm{d})} \in \mathcal{C}^{N \times 1}$ is  complex Gaussian noise with covariance $N_0 \bI_N$. 

Note that we have used different pathloss exponents $\alpha_\textrm{c}$ and $\alpha_\textrm{d}$ for UE-BS and UE-UE links (cf. (\ref{eq:1}) and (\ref{eq:2})) due to their different propagation characteristics. Specifically, the antenna height of a macro BS is tens of meters, while the typical antenna height at a UE is under 2 m. As a result, both terminals of a UE-UE link are low and see similar near street scattering environment, which is different from the radio environment around a macro BS \cite{lin2013overview}.

In this paper, we assume Gaussian signaling, i.e., $\{u_{bk}^{(\textrm{c})}\}, \{u_{i}^{(\textrm{d})}\}$ are i.i.d. complex Gaussian $\mathcal{CN}(0,1)$,  and i.i.d. shadowing with mean $\bar{\Xi}$.
We also assume that all the vector channels have i.i.d. $\mathcal{CN}(0,1)$ elements, independent across transmitters. It follows that the \textit{favorable propagation} condition \cite{rusek2013scaling} desired in massive MIMO systems holds in our model:
\[ \frac{1}{M} \bh_{br}^{(s)*} \bh_{b'\ell}^{(s')} \as \left\{ \begin{array}{ll}
         1 & \mbox{if $s=s'$, $b=b'$ and $r = \ell$};\\
         0 & \mbox{otherwise},\end{array} \right. \] 
where $s \in \{\textrm{c},\textrm{d}\}$, $\as$ denotes the almost sure convergence as $M \to \infty$, and when $s = \textrm{d}$ the first subindex $b$ in $\bh_{br}^{(s)}$ should be understood as null. Recent measurement campaigns have given evidence to validate favorable propagation for massive MIMO in practice \cite{larsson2013massive}.

\subsection{Receive Filters}

Denote by $\bw^{(\textrm{c})}_k$ the filter used by the central BS for receiving the signal of cellular transmitter $k$ in the central cell, i.e., the central BS detects the symbol $u_{0k}^{(\textrm{c})}$ based on $\bw^{(\textrm{c})*}_k \by_0^{(\textrm{c})}$. Similarly,  D2D receiver $r$ detects the symbol $u_r^{(\textrm{d})}$ based on $\bw^{(\textrm{d})*}_r \by_r^{(\textrm{d})}$, where $\bw^{(\textrm{d})}_r$ denotes the filter used by D2D receiver $r$. The performance of the D2D underlaid massive MIMO system depends on the receive filters. In general, either  the receive filters can be designed to boost desired signal power or they can be used to cancel undesired interference. In this paper, we focus on a particular type of linear filters: the PZF receiver, which uses a subset of the degrees of freedom for boosting received signal power and the remainder for interference cancellation.

The central BS uses $m_\textrm{c}$ and $m_\textrm{d}$ degrees of freedom to cancel the interference from the \textit{nearest} $m_\textrm{c}$ cellular interferers and the nearest $m_\textrm{d}$ D2D interferers. A feasible choice of $(m_\textrm{c},m_\textrm{d})$ needs to be in the following set:
\begin{align}
\mathcal{Z}_{\textrm{c}} = \{ (m_\textrm{c}, m_\textrm{d}) \in \mathbb{N} \times \mathbb{N}: m_\textrm{c} \leq (B+1)K-1, m_\textrm{c} + m_\textrm{d} \leq M-1 \}.
\end{align}
The PZF filter $\bw^{(\textrm{c})}_k$ is the projection of the channel vector $\bh_{0k}^{(\textrm{c})}$ onto the subspace orthogonal to the one spanned by the channel vectors of canceled interferers.
For ease of reference, we denote by $\mathcal{K}^{(\textrm{c})}_{bk}$ the set of \textit{uncanceled} cellular interferers in the cell $b$ and $\Phi^{(\textrm{c})}_k$ the set of  uncanceled D2D interferers when detecting the symbol $u^{(\textrm{c})}_{0k}$ of cellular transmitter $k$ in the central cell.

Similarly, each D2D receiver uses $n_\textrm{c}$ and $n_\textrm{d}$ degrees of freedom to cancel the interference from the nearest $n_\textrm{c}$ cellular interferers and the nearest $n_\textrm{d}$ D2D interferers, and $(n_\textrm{c},n_\textrm{d})$ needs to be in the following set:
\begin{align}
\mathcal{Z}_{\textrm{d}} = \{ (n_\textrm{c}, n_\textrm{d}) \in \mathbb{N} \times \mathbb{N}: n_\textrm{c} \leq (B+1)K, n_\textrm{c} + n_\textrm{d} \leq N-1 \}.
\end{align}
The PZF filter $\bw^{(\textrm{d})}_r$ of D2D receiver $r$ is the projection of the channel vector $\bg_{rr}^{(\textrm{d})}$ onto the subspace orthogonal to the one spanned by the channel vectors of canceled interferers.
For ease of reference, we denote by $\mathcal{K}^{(\textrm{d})}_{br}$ the set of uncanceled cellular interferers in the cell $b$ and $\Phi^{(\textrm{d})}_r$ the set of uncanceled D2D interferers at D2D receiver $r$.

\textbf{Remark on PZF receiver.}
Although suboptimal, PZF receivers have several advantages that motivate us to focus on them in this paper. On the one hand, PZF receivers are relatively general: they reduce to maximum ratio combining (MRC) receivers when $m_\textrm{c}+m_\textrm{d}=0$ and $n_\textrm{c}+n_\textrm{d}=0$ and to conventional fully ZF receivers when $m_\textrm{c}+m_\textrm{d}=M-1$ and $n_\textrm{c}+n_\textrm{d}=N-1$. PZF receivers are conceptually similar to MMSE in that they use the degrees-of-freedom for both interference suppression and signal enhancement. It has been shown that PZF receivers can achieve the same scaling law in terms of transmission capacity as MMSE receivers \cite{jindal2011multi}, which is not true for either MRC or fully ZF receivers. On the other hand, PZF receivers are analytically more tractable than other more sophisticated receivers like MMSE receivers from a system point of view. This analytical tractability allows us to develop an explicit characterization of the performance of the massive MIMO system with D2D underlay. 
Nevertheless, as noted in \cite{jindal2011multi}, MMSE fitlers should be used in practice because they have less stringent CSI requirements while being the optimal linear filters. Specifically, a PZF receiver requires CSI of each canceled interferer, while a MMSE receiver only requires knowledge of the covariance of the aggregate interference. The stringent CSI required by a BS's PZF may be satisfied; we will discuss how UE-BS channels can be estimated in Section \ref{sec:imperfect}. In contrast, it may be too expensive to satisfy the stringent CSI requirement of a UE's PZF. To cancel the nearest interferers, a D2D receiver has to know the positions of the interferers; this requirement may be relaxed to some extent by using a PZF that cancels the \textit{strongest} interferers instead of the nearest interferers. Further, we need to design reference signals for estimating UE-UE channels and the network should coordinate UE-UE channel estimation process. Considering these complexities, a general PZF at a D2D receiver is not practical. Instead, MRC may be preferred, or a MMSE receiver can be used.

\section{Spectral Efficiency with Perfect Channel State Information}
\label{sec:perfect}

In this section, we derive the spectral efficiency of cellular and D2D links under the assumption of perfect CSI; the case of imperfect CSI will be treated in the next section.

\subsection{Asymptotic Cellular Spectral Efficiency}

For cellular UE $k$ in the central cell, the  post-processing SINR with the PZF filter $\bw_k^{(\textrm{c})}$ is 
\begin{align}
\sinr_k^{(\textrm{c})} = \frac{ S^{(\textrm{c})}_k }{ I^{(\textrm{c}\to \textrm{c})}_k +  I^{(\textrm{d}\to \textrm{c})}_k  + \| \bw_k^{(\textrm{c})}  \|^2 N_0  }, 
\end{align}
where $S^{(\textrm{c})}_k = P_\textrm{c} \Xi_{0k}^{(\textrm{c})} \|x_{0k}^{(\textrm{c})}\|^{-\alpha_\textrm{c}}  \| \bw_{k}^{(\textrm{c})*} \bh_{0k}^{(\textrm{c})} \|^2$ denotes the desired signal power of cellular UE $k$, $I^{(\textrm{c}\to \textrm{c})}_k$ and $ I^{(\textrm{d}\to \textrm{c})}_k$ respectively denote the cochannel cellular and D2D interference powers experienced by cellular UE $k$ and are given by
\begin{align}
I^{(\textrm{c}\to \textrm{c})}_k &=\sum_{b=0}^B \sum_{\ell \in \mathcal{K}^{(\textrm{c})}_{bk}  } P_\textrm{c} \Xi_{b\ell}^{(\textrm{c})} \|x_{b\ell}^{(\textrm{c})} \|^{-\alpha_\textrm{c}} |\bw_k^{(\textrm{c})*} \bh_{b\ell}^{(\textrm{c})} |^2   \\  
I^{(\textrm{d}\to \textrm{c})}_k &= \sum_{i \in \Phi_k^{(\textrm{c})}}  P_\textrm{d} \Xi_i^{(\textrm{d})} \|x_i^{(\textrm{d})}\|^{-\alpha_\textrm{c}} |\bw_k^{(\textrm{c})*}  \bh_i^{(\textrm{d})} |^2 .
\label{eq:19}
\end{align}
The spectral efficiency of cellular UE $k$ in the central cell is defined as
\begin{align}
R^{(\textrm{c})}_{k} = \mathbb{E} \left[ \log ( 1 + \sinr_k^{(\textrm{c})}  ) \right] , 
\end{align}
where the expectation is taken with respect to the fast fading, shadowing and random locations of UEs.

\begin{pro}
With perfect CSI and conditioned on $\{x_{bk}^{(\textrm{c})}\}$, as $M \to \infty$, the desired signal power $S^{(\textrm{c})}_k$ when normalized by $M^2$ and conditioned on $\Xi_{0k}^{(\textrm{c})}$ converges to
\begin{align}
&\lim_{M \to \infty} \frac{1}{M^2}  S^{(\textrm{c})}_k \as  P_\textrm{c} \Xi_{0k}^{(\textrm{c})} \|x_{0k}^{(\textrm{c})}\|^{-\alpha_\textrm{c}}, 
\end{align}
and the cellular interference power $I^{(\textrm{c}\to \textrm{c})}_k$, the D2D interference power $I^{(\textrm{d}\to \textrm{c})}_k$, and the noise power $\| \bw_k^{(\textrm{c})}  \|^2 N_0$ when normalized by $M^2$ converge as follows.  
\begin{align}
&\lim_{M \to \infty} \frac{1}{M^2}  I^{(\textrm{c}\to \textrm{c})}_k \as  0, \quad \lim_{M \to \infty} \frac{1}{M^2}  I^{(\textrm{d}\to \textrm{c})}_k \cp  0,  \quad \lim_{M \to \infty} \frac{1}{M^2} \| \bw_k^{(\textrm{c})}  \|^2 N_0 \as  0 ,
\end{align}
where $\cp$ denotes the convergence in probability. 
\label{pro:1}
\end{pro}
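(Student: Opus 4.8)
The plan is to reduce every term to the asymptotics of the PZF filter norm, and then handle the D2D term separately by a conditional second–moment/Markov argument. Write $J:=m_\textrm{c}+m_\textrm{d}$ and let $\bP$ be the orthogonal projection of $\b{C}^{M}$ onto the orthogonal complement of the span of the channel vectors of the $J$ canceled interferers, so that $\bw_k^{(\textrm{c})}=\bP\bh_{0k}^{(\textrm{c})}$. Conditioning on the node positions (which alone determine \emph{which} interferers are canceled, since they are the nearest ones), $\bh_{0k}^{(\textrm{c})}$ is independent of $\bP$, and the channels of the uncanceled cellular and D2D interferers are independent of both $\bP$ and $\bh_{0k}^{(\textrm{c})}$. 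The first step is: conditioned on $\bP$, $\|\bw_k^{(\textrm{c})}\|^{2}=\bh_{0k}^{(\textrm{c})*}\bP\bh_{0k}^{(\textrm{c})}$ is a sum of $M-J$ i.i.d. unit–mean exponentials (the squared magnitudes of the coordinates of $\bh_{0k}^{(\textrm{c})}$ in an orthonormal basis of $\mathrm{range}(\bP)$), so the strong law of large numbers gives $\tfrac1M\|\bw_k^{(\textrm{c})}\|^{2}\as 1$ (using that $m_\textrm{c}+m_\textrm{d}=o(M)$ in the regimes of interest, so $M-J\to\infty$ and $(M-J)/M\to1$). The noise claim is then immediate since $N_0/M\to0$. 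For the signal, since $\bP$ is Hermitian idempotent, $\bw_k^{(\textrm{c})*}\bh_{0k}^{(\textrm{c})}=\|\bw_k^{(\textrm{c})}\|^{2}$, hence $S_k^{(\textrm{c})}=P_\textrm{c}\Xi_{0k}^{(\textrm{c})}\|x_{0k}^{(\textrm{c})}\|^{-\alpha_\textrm{c}}\|\bw_k^{(\textrm{c})}\|^{4}$, and dividing by $M^{2}$ and squaring the previous limit gives the stated constant.

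For the cellular interference I would condition on $\bw_k^{(\textrm{c})}$: each uncanceled interferer $\ell$ contributes $|\bw_k^{(\textrm{c})*}\bh_{b\ell}^{(\textrm{c})}|^{2}=\|\bw_k^{(\textrm{c})}\|^{2}|Z_{b\ell}|^{2}$ with $Z_{b\ell}\sim\c{CN}(0,1)$ independent of $\|\bw_k^{(\textrm{c})}\|$ (by rotational invariance of $\bh_{b\ell}^{(\textrm{c})}$), so $\tfrac1{M^{2}}|\bw_k^{(\textrm{c})*}\bh_{b\ell}^{(\textrm{c})}|^{2}=\tfrac{\|\bw_k^{(\textrm{c})}\|^{2}}{M}\cdot\tfrac{|Z_{b\ell}|^{2}}{M}\as0$, the first factor going to $1$ and the second to $0$ a.s. by Borel–Cantelli applied to $\b{P}(|Z_{b\ell}|^{2}>M\epsilon)=e^{-M\epsilon}$. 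Since there are at most $(B+1)K-1-m_\textrm{c}$ such terms — a fixed finite number — and the shadowing and path–loss weights are a.s. finite, summing gives $\tfrac1{M^{2}}I_k^{(\textrm{c}\to\textrm{c})}\as0$.

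The only delicate term, and the one that forces convergence in probability rather than a.s., is $I_k^{(\textrm{d}\to\textrm{c})}$, because $\Phi_k^{(\textrm{c})}$ is still an infinite point set. Here I would first show that the aggregate weight $\Psi:=\sum_{i\in\Phi}\Xi_i^{(\textrm{d})}\|x_i^{(\textrm{d})}\|^{-\alpha_\textrm{c}}$ is a.s. finite by splitting $\Phi$ at a fixed radius $\delta$ around the BS: the a.s.\ finitely many points inside contribute an a.s.\ finite amount, and the points outside have expected contribution $\lambda\bar\Xi\int_{\|x\|>\delta}\|x\|^{-\alpha_\textrm{c}}\,\d x<\infty$ precisely because $\alpha_\textrm{c}>2$, hence also an a.s.\ finite contribution. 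Next, letting $\c{G}$ be the $\sigma$–field generated by $\bw_k^{(\textrm{c})}$, the point process $\Phi$, and the shadowing marks, and using $|\bw_k^{(\textrm{c})*}\bh_i^{(\textrm{d})}|^{2}=\|\bw_k^{(\textrm{c})}\|^{2}|Z_i|^{2}$ with $\{|Z_i|^{2}\}$ i.i.d.\ unit–mean exponential given $\c{G}$,
\[
\mathbb{E}\!\left[\tfrac1{M^{2}}I_k^{(\textrm{d}\to\textrm{c})}\,\big|\,\c{G}\right]=\tfrac{\|\bw_k^{(\textrm{c})}\|^{2}}{M^{2}}\,P_\textrm{d}\!\!\sum_{i\in\Phi_k^{(\textrm{c})}}\!\!\Xi_i^{(\textrm{d})}\|x_i^{(\textrm{d})}\|^{-\alpha_\textrm{c}}\;\le\;\tfrac{\|\bw_k^{(\textrm{c})}\|^{2}}{M}\cdot\tfrac{P_\textrm{d}\Psi}{M}\;\as\;0 .
\]
Then the conditional Markov inequality gives $\b{P}(\tfrac1{M^{2}}I_k^{(\textrm{d}\to\textrm{c})}>\epsilon\mid\c{G})\to0$ a.s., and since these conditional probabilities lie in $[0,1]$, bounded convergence yields $\b{P}(\tfrac1{M^{2}}I_k^{(\textrm{d}\to\textrm{c})}>\epsilon)\to0$, i.e.\ $\tfrac1{M^{2}}I_k^{(\textrm{d}\to\textrm{c})}\cp0$.

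I expect the D2D step to be the main obstacle: it is the only place where one cannot argue term by term, and it requires both the a.s.\ finiteness of the aggregate interference weight $\Psi$ (which is exactly where $\alpha_\textrm{c}>2$ is used) and the passage from an a.s.\ statement about the conditional mean to an unconditional convergence in probability via Markov's inequality and bounded convergence — the almost-sure version being unavailable precisely because of the infinitely many fluctuating fading terms. All the remaining terms are routine once $\tfrac1M\|\bw_k^{(\textrm{c})}\|^{2}\as1$ is in hand.
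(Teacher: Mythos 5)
Your proposal is correct, and for the only delicate term --- the D2D interference --- it takes a genuinely different route from the paper. The signal, noise, and cellular-interference steps match the paper's in substance (law of large numbers for the filter norm, term-by-term convergence of the finite cellular sum), except that you carry out the argument for a general PZF projection while the paper simply fixes $m_\textrm{c}=m_\textrm{d}=0$ and works with the MRC filter $\bw_k^{(\textrm{c})}=\bh_{0k}^{(\textrm{c})}$, for which $\tfrac{1}{M}\|\bh_{0k}^{(\textrm{c})}\|^2\as 1$ is a literal nested-sum SLLN; your projected version needs a concentration-plus-Borel--Cantelli justification rather than the SLLN per se, but this is a cosmetic repair. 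For $I_k^{(\textrm{d}\to\textrm{c})}$ the paper truncates \emph{spatially}: it splits $\Phi$ at a radius $r_o$, applies Markov's inequality to the far group using the unconditional mean $\mathbb{E}[|\bh_{0k}^{(\textrm{c})*}\bh_i^{(\textrm{d})}|^2]=M$ together with Campbell's formula (which is where $\alpha_\textrm{c}>2$ enters), and handles the near group by conditioning on the Poisson count being at most some $C(\delta)$ so that it becomes a finite sum converging a.s.\ to zero, then assembles a $\delta/2+\delta/4+\delta/4$ bound. You instead condition on the entire configuration (filter, point process, shadowing), observe that the aggregate weight $\Psi=\sum_{i\in\Phi}\Xi_i^{(\textrm{d})}\|x_i^{(\textrm{d})}\|^{-\alpha_\textrm{c}}$ is a.s.\ finite even though its mean is infinite, and pass from the a.s.\ vanishing of the conditional mean to convergence in probability via conditional Markov and bounded convergence. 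Both arguments are sound and both correctly locate the reason only $\cp$ (not $\as$) is available; yours avoids the two-parameter $(r_o, C)$ bookkeeping at the price of invoking a.s.\ finiteness of a shot-noise field with infinite mean, while the paper's keeps every expectation it uses finite by construction.
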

\begin{proof}
See Appendix \ref{proof:pro:1}.
\end{proof}

Prop. \ref{pro:1} shows that with perfect CSI, as $M\to \infty$, the post-processing $\sinr_k^{(\textrm{c})}$ increases unboundedly in probability (as almost sure convergence implies convergence in probability). More specifically, a deterministic received power of the desired signal from cellular UE $k$ (conditioned on its pathloss and shadowing) can be achieved and the effects of noise, fast fading, and the interfering signals from the other $K-1$ cellular UEs and the \textit{infinite} D2D transmitters vanish completely. Therefore, Prop. \ref{pro:1} validates the intuition that with perfect CSI D2D-to-cellular interference can be made arbitrarily small with a large enough antenna array at the BS. Note that the D2D-to-cellular interference can be completely nulled out, even though (i) the number of the PPP distributed D2D interferers is infinite and (ii) the mean of the aggregate D2D interference is infinite. Further, the proof of Prop. \ref{pro:1} shows that a simple MRC filter with $m_\textrm{c}=m_\textrm{d}=0$ suffices. 

Though Prop. \ref{pro:1} shows that arbitrarily large received SINR can be achieved with massive MIMO, this in practice will ultimately fail since the received power cannot be larger than the transmit power. Further, it may not be possible to fully exploit a very high SINR due to  practical constraints such as the highest order of modulation and coding schemes. Nevertheless, the large array gains may be translated into power savings for cellular UEs: with a given SNR target we can lower the transmit powers of cellular UEs and thus improve their energy efficiency, as shown in the following proposition.

\begin{pro}
With perfect CSI, fixed PZF parameters $(m_\textrm{c}, m_\textrm{d})$, scaled cellular transmit power ${P_\textrm{c}}/{M}$, and conditioned on $\Xi_{0k}^{(\textrm{c})}$ and $x_{0k}^{(\textrm{c})}$, as $M \to \infty$, the spectral efficiency $R^{(\textrm{c})}_{k}$ of cellular UE $k$ in the central cell converges to
\begin{align}
R^{(\textrm{c})}_{k}  &\to \mathbb{E}_{\Phi, \eta} \left[ \log \left( 1 + \frac{ \snr_{0k}^{(\textrm{c})} }{ \sum_{i \in \Phi_k^{(\textrm{c})}}   \frac{P_\textrm{d}}{N_0}  \|x_i^{(\textrm{d})}\|^{-\alpha_\textrm{c}} \eta_i  + 1 }  \right)  \right], \label{eq:14} 
\end{align}
where $\snr_{bk}^{(\textrm{c})} =  {P_\textrm{c} \Xi_{bk}^{(\textrm{c})}  \|x_{bk}^{(\textrm{c})}\|^{-\alpha_\textrm{c}}  }/{N_0}$, $\{\eta_i\}$ are i.i.d. random variables distributed as $\eta_i \sim \textrm{Exp}(1)$. Further, if $m_\textrm{d}+1 > \frac{\alpha_\textrm{c}}{2}$,
\begin{align}
\lim_{M\to \infty} R^{(\textrm{c})}_{k}  
&\geq \log \left( 1 + \frac{ \snr_{0k}^{(\textrm{c})} }{ \rho ( m_\textrm{d}, \alpha_\textrm{c} )  + 1 }  \right),
\label{eq:15}
\end{align}
where
\begin{align}
\rho ( m, \alpha ) = \frac{2 (\pi \lambda)^{\frac{\alpha}{2}} P_\textrm{d}\bar{\Xi} \Gamma ( m + 1 -\frac{\alpha}{2} ) }{ (\alpha - 2) N_0 \Gamma(m)}   ,
\label{eq:13}
\end{align}
where the Gamma function $\Gamma (x) = \int_0^{\infty} t^{x-1} e^{-t} \dint t$.
\label{pro:4}
\end{pro}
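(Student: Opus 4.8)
The plan is to establish the limit (\ref{eq:14}) first and then obtain (\ref{eq:15}) from it by Jensen's inequality. Throughout I work conditionally on $\Xi_{0k}^{(\textrm{c})}$, $x_{0k}^{(\textrm{c})}$ and, as an intermediate step, on the D2D point process $\Phi$, the shadowing, the positions $\{x_{b\ell}^{(\textrm{c})}\}$, and the channels that define the filter $\bw_k^{(\textrm{c})}$ (namely $\bh_{0k}^{(\textrm{c})}$ together with the $m_\textrm{c}+m_\textrm{d}$ cancelled channels), removing the last batch of conditioning at the end by the tower rule. The key inputs are the same ones used to establish Prop.\ \ref{pro:1}: $\bw_k^{(\textrm{c})*}\bh_{0k}^{(\textrm{c})}=\|\bw_k^{(\textrm{c})}\|^2$, so under the scaling $P_\textrm{c}\mapsto P_\textrm{c}/M$ the numerator of $\sinr_k^{(\textrm{c})}$ is $\tfrac{P_\textrm{c}}{M}\Xi_{0k}^{(\textrm{c})}\|x_{0k}^{(\textrm{c})}\|^{-\alpha_\textrm{c}}\|\bw_k^{(\textrm{c})}\|^4$; $\|\bw_k^{(\textrm{c})}\|^2$ is $\tfrac12\chi^2_{2(M-m_\textrm{c}-m_\textrm{d})}$-distributed, so $\|\bw_k^{(\textrm{c})}\|^2/M\as 1$ for fixed $(m_\textrm{c},m_\textrm{d})$; and because $\bw_k^{(\textrm{c})}$ is a projection of $\bh_{0k}^{(\textrm{c})}$ onto a subspace independent of every \emph{uncancelled} interferer's channel, for each such interferer $|\bw_k^{(\textrm{c})*}\bh|^2=\|\bw_k^{(\textrm{c})}\|^2\,\eta$ with $\eta\sim\textrm{Exp}(1)$, the $\eta$'s being i.i.d.\ and independent of $\bw_k^{(\textrm{c})}$. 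Substituting and cancelling one factor $\|\bw_k^{(\textrm{c})}\|^2$,
\[
\sinr_k^{(\textrm{c})}\;\overset{d}{=}\;\frac{\tfrac{P_\textrm{c}}{M}\|\bw_k^{(\textrm{c})}\|^2\,\Xi_{0k}^{(\textrm{c})}\|x_{0k}^{(\textrm{c})}\|^{-\alpha_\textrm{c}}}{\tfrac{P_\textrm{c}}{M}\sum_{b=0}^{B}\sum_{\ell\in\mathcal{K}_{bk}^{(\textrm{c})}}\Xi_{b\ell}^{(\textrm{c})}\|x_{b\ell}^{(\textrm{c})}\|^{-\alpha_\textrm{c}}\eta_{b\ell}+\sum_{i\in\Phi_k^{(\textrm{c})}}P_\textrm{d}\Xi_i^{(\textrm{d})}\|x_i^{(\textrm{d})}\|^{-\alpha_\textrm{c}}\eta_i+N_0},
\]
with all the $\eta$'s i.i.d.\ $\textrm{Exp}(1)$, and this representation may be realized on a probability space not depending on $M$.

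Next I would let $M\to\infty$ in this representation with $\|\bw_k^{(\textrm{c})}\|^2/M\as 1$. The numerator converges a.s.\ to $P_\textrm{c}\Xi_{0k}^{(\textrm{c})}\|x_{0k}^{(\textrm{c})}\|^{-\alpha_\textrm{c}}$; in the denominator the cellular term is a sum of only $(B+1)K-1-m_\textrm{c}$ terms multiplied by $P_\textrm{c}/M\to0$, so it vanishes a.s., the D2D term does not depend on $M$ and is a.s.\ finite (since $\Phi_k^{(\textrm{c})}$ drops the $m_\textrm{d}$ points of $\Phi$ nearest the origin, $\sum_{i\in\Phi_k^{(\textrm{c})}}\|x_i^{(\textrm{d})}\|^{-\alpha_\textrm{c}}<\infty$ a.s.\ because $\alpha_\textrm{c}>2$), and the noise term is $N_0$. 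Dividing numerator and denominator by $N_0$ gives $\log(1+\sinr_k^{(\textrm{c})})\to\log\!\big(1+\snr_{0k}^{(\textrm{c})}/(\sum_{i\in\Phi_k^{(\textrm{c})}}\tfrac{P_\textrm{d}}{N_0}\Xi_i^{(\textrm{d})}\|x_i^{(\textrm{d})}\|^{-\alpha_\textrm{c}}\eta_i+1)\big)$ a.s. To interchange this limit with the expectation defining $R_k^{(\textrm{c})}$ I would use uniform integrability: $\log(1+\sinr_k^{(\textrm{c})})\le\sinr_k^{(\textrm{c})}\le\tfrac{P_\textrm{c}}{MN_0}\Xi_{0k}^{(\textrm{c})}\|x_{0k}^{(\textrm{c})}\|^{-\alpha_\textrm{c}}\|\bh_{0k}^{(\textrm{c})}\|^2$ (using $\|\bw_k^{(\textrm{c})}\|^2\le\|\bh_{0k}^{(\textrm{c})}\|^2$), and $\|\bh_{0k}^{(\textrm{c})}\|^2/M$ is bounded in $L^2$ uniformly in $M$; dominated/uniformly integrable convergence then yields (\ref{eq:14}), the remaining expectation being over $\Phi$ and $\{\eta_i,\Xi_i^{(\textrm{d})}\}$.

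For (\ref{eq:15}), note that $x\mapsto\log(1+a/(1+x))$ is convex (and decreasing) on $[0,\infty)$ for every $a>0$, so Jensen's inequality applied to the outer expectation in (\ref{eq:14}) gives $\lim_{M}R_k^{(\textrm{c})}\ge\log\!\big(1+\snr_{0k}^{(\textrm{c})}/(\mathbb{E}[I_\textrm{D}]+1)\big)$, where $I_\textrm{D}=\sum_{i\in\Phi_k^{(\textrm{c})}}\tfrac{P_\textrm{d}}{N_0}\Xi_i^{(\textrm{d})}\|x_i^{(\textrm{d})}\|^{-\alpha_\textrm{c}}\eta_i$. Using independence, $\mathbb{E}[\eta_i]=1$ and $\mathbb{E}[\Xi_i^{(\textrm{d})}]=\bar\Xi$, the claim reduces to $\mathbb{E}\big[\sum_{i\in\Phi_k^{(\textrm{c})}}\|x_i^{(\textrm{d})}\|^{-\alpha_\textrm{c}}\big]=(\pi\lambda)^{\alpha_\textrm{c}/2}\cdot\tfrac{2}{\alpha_\textrm{c}-2}\cdot\tfrac{\Gamma(m_\textrm{d}+1-\alpha_\textrm{c}/2)}{\Gamma(m_\textrm{d})}$. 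Ordering the points of $\Phi$ by distance to the origin as $R_1<R_2<\cdots$, the standard PPP fact $\pi\lambda R_j^2\sim\Gamma(j,1)$ yields $\mathbb{E}[R_j^{-\alpha_\textrm{c}}]=(\pi\lambda)^{\alpha_\textrm{c}/2}\Gamma(j-\alpha_\textrm{c}/2)/\Gamma(j)$, which is finite precisely when $j>\alpha_\textrm{c}/2$ --- true for every $j\ge m_\textrm{d}+1$ exactly under the hypothesis $m_\textrm{d}+1>\alpha_\textrm{c}/2$. Summing over $j\ge m_\textrm{d}+1$ and using $\Gamma(j-\tfrac\alpha2)/\Gamma(j)=\tfrac2{\alpha-2}\big(\Gamma(j-\tfrac\alpha2)/\Gamma(j-1)-\Gamma(j+1-\tfrac\alpha2)/\Gamma(j)\big)$, the series telescopes (its tail $\Gamma(j-\alpha_\textrm{c}/2)/\Gamma(j-1)\to0$ since $\alpha_\textrm{c}>2$) to $\tfrac2{\alpha_\textrm{c}-2}\Gamma(m_\textrm{d}+1-\alpha_\textrm{c}/2)/\Gamma(m_\textrm{d})$; reinserting the prefactor $P_\textrm{d}\bar\Xi/N_0$ gives exactly $\rho(m_\textrm{d},\alpha_\textrm{c})$ in (\ref{eq:13}).

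I expect the main obstacle to be making the limit in the second paragraph rigorous rather than heuristic: the D2D interference is an \emph{infinite} PPP sum whose fading gains nominally depend on $M$, so one must pass through the $M$-free distributional representation above (and a coupling that freezes the $\eta_i$ while sending $\|\bw_k^{(\textrm{c})}\|^2/M\to1$) to get a genuine almost-sure limit, and then the interchange of limit and expectation must be justified with a uniform-integrability bound that survives the $1/M$ power scaling. By contrast the special-function computation in the third paragraph is routine once the telescoping identity is recognized.
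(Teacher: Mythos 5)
Your proof is correct, and it reaches the limit (\ref{eq:14}) by a genuinely different route than the paper. The paper treats the D2D interference as a shot-noise field whose marks $F_{i,M}=P_\textrm{d}\Xi_i^{(\textrm{d})}\tfrac{1}{M}|\bw_k^{(\textrm{c})*}\bh_i^{(\textrm{d})}|^2$ depend on $M$: it invokes the Central Limit Theorem to get $F_{i,M}\cd P_\textrm{d}\Xi_i^{(\textrm{d})}\eta_i$, proves a separate lemma via the Laplace functional of the PPP showing that convergence of the marks together with convergence of $\mathbb{E}[F_{i,M}^{2/\alpha_\textrm{c}}]$ implies convergence in distribution of the shot-noise sum, and verifies the moment condition by a uniform-integrability argument. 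You instead exploit the fact that, conditionally on $\bw_k^{(\textrm{c})}$, each uncancelled interferer's projected channel is \emph{exactly} $\mathcal{CN}(0,\|\bw_k^{(\textrm{c})}\|^2)$, so the interference admits an exact, $M$-free distributional representation with i.i.d.\ $\textrm{Exp}(1)$ marks independent of $\|\bw_k^{(\textrm{c})}\|^2$; the entire limit then reduces to $\|\bw_k^{(\textrm{c})}\|^2/M\as 1$ and $P_\textrm{c}/M\to 0$, with no CLT and no shot-noise convergence lemma. This is cleaner, it handles the joint convergence of all SINR components automatically (the paper establishes the components' limits separately), and, unlike the paper, you explicitly justify the interchange of limit and expectation with an $L^2$ uniform-integrability bound. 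For the bound (\ref{eq:15}) your Jensen step is the same as the paper's, but the evaluation of $\mathbb{E}[\sum_{i\in\Phi_k^{(\textrm{c})}}\|x_i^{(\textrm{d})}\|^{-\alpha_\textrm{c}}]$ differs: the paper conditions on the $m_\textrm{d}$-th nearest point, applies Campbell's formula outside that radius, and deconditions with the density of $\|x_{m_\textrm{d}}^{(\textrm{d})}\|$, whereas you sum $\mathbb{E}[R_j^{-\alpha_\textrm{c}}]=(\pi\lambda)^{\alpha_\textrm{c}/2}\Gamma(j-\alpha_\textrm{c}/2)/\Gamma(j)$ over the ordered points $j\ge m_\textrm{d}+1$ and telescope; both yield $\rho(m_\textrm{d},\alpha_\textrm{c})$ under the same hypothesis $m_\textrm{d}+1>\alpha_\textrm{c}/2$. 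One cosmetic remark: your limit retains the D2D shadowing $\Xi_i^{(\textrm{d})}$ inside the interference sum, which is missing from the displayed (\ref{eq:14}) but present in the paper's own derivation and in $\rho$, so you have matched the intended statement rather than its typo.
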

\begin{proof}
See Appendix \ref{proof:pro:4}.
\end{proof}

Note that in Prop. \ref{pro:4}, if the underlaid D2D transmitters did not exist, the spectral efficiency $R^{(\textrm{c})}_{k}$ of cellular UE $k$  (conditioned on its pathloss and shadowing)  in the central cell would converge to $\log \left( 1 +    \snr_{0k}^{(\textrm{c})} \right)$, the maximum achievable spectral efficiency of a point-to-point SISO (single-input single-output) Gaussian channel. It is as if massive MIMO could simultaneously support $K$ interference-free SISO links while reducing the power of each cellular UE by $10\log_{10}M$ dB. This result is consistent with Prop. 1 in \cite{ngo2011energy} without D2D underlay.

With D2D underlay, the asymptotic result (\ref{eq:14}) shows that there is a loss in cellular spectral efficiency due to the uncanceled interfering signals from the D2D transmitters in $\Phi_k^{(\textrm{c})}$, i.e., D2D transmitters in $\Phi$ except the nearest $m_\textrm{d}$ ones whose signals are canceled by the PZF filter. Though it is possible to derive an exact analytical expression (involving integrals) for (\ref{eq:14}), we give a more intuitive lower bound  (\ref{eq:15}), which succinctly characterizes the loss due to the D2D underlay through a single term $\rho ( m_\textrm{d}, \alpha_\textrm{c} )$. Several remarks are in order.

\textbf{Remark 1.} The term $\rho ( m_\textrm{d}, \alpha_\textrm{c} )$ corresponding to the uncanceled D2D interference increases with $P_\textrm{d}$ and $\lambda$ and decreases with $m_\textrm{d}$, agreeing with intuition: larger transmit power or larger density of D2D interferers or smaller number of canceled D2D interferers leads to higher D2D-to-cellular interference, thus lowering the cellular spectral efficiency. Further, $\rho ( m_\textrm{d}, \alpha_\textrm{c} ) \sim \lambda^{\frac{\alpha_c}{2}}$, because a linear increase in $\lambda$ implies that the distances of the PPP distributed D2D transmitters to the BS decrease as $\lambda^{\frac{1}{2}}$ and thus the D2D-to-cellular interference power increases as $\lambda^{\frac{\alpha_\textrm{c}}{2}}$.

\textbf{Remark 2.} Note that the lower bound (\ref{eq:15}) is meaningful only if $m_\textrm{d} + 1 > \frac{\alpha_\textrm{c}}{2}$. 
As $m_\textrm{d} \to \frac{\alpha_\textrm{c}}{2} - 1$, $\Gamma(m_\textrm{d}+1 - \frac{\alpha_\textrm{c}}{2}) \to \infty$ and thus $\rho ( m_\textrm{d}, \alpha_\textrm{c} ) \to \infty$. In fact, from the proof of Prop. \ref{pro:4}, we can see that $\rho ( m_\textrm{d}, \alpha_\textrm{c} )$ denotes the mean residual D2D-to-cellular interference after canceling the $m_\textrm{d}$ nearest D2D intererers. If $m_\textrm{d} \leq \frac{\alpha_\textrm{c}}{2} - 1$, the mean residual D2D-to-cellular interference is infinite, and the lower bound (\ref{eq:15}) becomes $0$, which is trivially true.

Next we show that the loss of cellular spectral efficiency due to D2D underlay can be recovered if we scale the number $m_\textrm{d}$ of canceled D2D interferers to infinity as $M \to \infty$. Further, the growth rate of $m_\textrm{d}$ can be arbitrarily slow. 

\begin{pro}
With perfect CSI, arbitrary but fixed $m_\textrm{c}$, scaled cellular transmit power ${P_\textrm{c}}/{M}$, and conditioned on $\Xi_{0k}^{(\textrm{c})}$ and $x_{0k}^{(\textrm{c})}$, if $m_\textrm{d} \in \omega (1)$ and $m_\textrm{d} \in o(M)$, i.e., $m_\textrm{d}$ increases to infinity at a rate slower than $\Theta(M)$, the spectral efficiency $R^{(\textrm{c})}_{k}$ of cellular UE $k$ in the central cell converges as follows.
\begin{align}
R^{(\textrm{c})}_{k}  &\to  \log \left( 1 +   \snr_{0k}^{(\textrm{c})}  \right), \quad \textrm{ as } M \to \infty.
\end{align}
\label{pro:5}
\end{pro}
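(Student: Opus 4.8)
The plan is to show directly that $\sinr_k^{(\textrm{c})}\cp\snr_{0k}^{(\textrm{c})}$ under the stated hypotheses and to sandwich $R^{(\textrm{c})}_k$ between $\log(1+\snr_{0k}^{(\textrm{c})})$ from above and, in the limit, from below. The only feature not already present in Prop. \ref{pro:4} is that the number $m_\textrm{d}$ of canceled D2D interferers grows with $M$; the point is that every estimate in the proofs of Prop. \ref{pro:1} and Prop. \ref{pro:4} that relied only on $m_\textrm{c}+m_\textrm{d}\in o(M)$, rather than on $m_\textrm{d}$ being constant, still applies, and the extra hypothesis $m_\textrm{d}\in\omega(1)$ is what kills the residual D2D interference.

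\emph{Upper bound, valid for every $M$.} Drop the nonnegative interference terms and use $|\bw_k^{(\textrm{c})*}\bh_{0k}^{(\textrm{c})}|^2=\|\bw_k^{(\textrm{c})}\|^4$ (since $\bw_k^{(\textrm{c})}$ is the image of $\bh_{0k}^{(\textrm{c})}$ under an orthogonal projector), so with the scaled power $\sinr_k^{(\textrm{c})}\le\snr_{0k}^{(\textrm{c})}\,\|\bw_k^{(\textrm{c})}\|^2/M$. Conditioned on the $m_\textrm{c}+m_\textrm{d}$ canceled channels, $\|\bw_k^{(\textrm{c})}\|^2$ is the squared norm of a standard complex Gaussian vector projected onto an $(M-m_\textrm{c}-m_\textrm{d})$-dimensional subspace, so $\mathbb{E}[\|\bw_k^{(\textrm{c})}\|^2]=M-m_\textrm{c}-m_\textrm{d}\le M$; Jensen's inequality (concavity of $\log$, conditioning on $\Xi_{0k}^{(\textrm{c})},x_{0k}^{(\textrm{c})}$) then gives $R^{(\textrm{c})}_k\le\log(1+\snr_{0k}^{(\textrm{c})})$, hence $\limsup_{M\to\infty}R^{(\textrm{c})}_k\le\log(1+\snr_{0k}^{(\textrm{c})})$.

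\emph{Lower bound.} As in the proof of Prop. \ref{pro:1}, $m_\textrm{c}+m_\textrm{d}\in o(M)$ forces $\|\bw_k^{(\textrm{c})}\|^2/M\as1$, so $S_k^{(\textrm{c})}/(\|\bw_k^{(\textrm{c})}\|^2N_0)\as\snr_{0k}^{(\textrm{c})}$ and $I^{(\textrm{c}\to\textrm{c})}_k/(\|\bw_k^{(\textrm{c})}\|^2N_0)\as0$ (finitely many cellular terms, each $O(1/M)$ after this normalization). For the D2D term, condition on $\bw_k^{(\textrm{c})}$; since each uncanceled $\bh_i^{(\textrm{d})}$ is independent of it, $|\bw_k^{(\textrm{c})*}\bh_i^{(\textrm{d})}|^2=\|\bw_k^{(\textrm{c})}\|^2\eta_i$ with i.i.d. $\eta_i\sim\textrm{Exp}(1)$, and averaging over the fading, the shadowing and the PPP with the nearest $m_\textrm{d}$ points removed yields $\mathbb{E}\!\left[I^{(\textrm{d}\to\textrm{c})}_k/(\|\bw_k^{(\textrm{c})}\|^2N_0)\right]=\rho(m_\textrm{d},\alpha_\textrm{c})$, the residual-interference mean of (\ref{eq:13}), which is finite once $m_\textrm{d}+1>\tfrac{\alpha_\textrm{c}}{2}$, i.e., for all large $M$ since $m_\textrm{d}\in\omega(1)$. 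The Gamma-ratio asymptotics $\Gamma(m+1-\tfrac{\alpha_\textrm{c}}{2})/\Gamma(m)\sim m^{1-\alpha_\textrm{c}/2}\to0$ (here $\alpha_\textrm{c}>2$ is used) give $\rho(m_\textrm{d},\alpha_\textrm{c})\to0$, so by Markov's inequality $I^{(\textrm{d}\to\textrm{c})}_k/(\|\bw_k^{(\textrm{c})}\|^2N_0)\cp0$. Putting these together, $\sinr_k^{(\textrm{c})}\cp\snr_{0k}^{(\textrm{c})}$. Since a sequence converging in probability has, along every subsequence, a further subsequence converging almost surely, Fatou's lemma applied along such subsequences (the integrand $\log(1+\sinr_k^{(\textrm{c})})$ is nonnegative) gives $\liminf_{M\to\infty}R^{(\textrm{c})}_k\ge\log(1+\snr_{0k}^{(\textrm{c})})$. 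With the upper bound, $R^{(\textrm{c})}_k\to\log(1+\snr_{0k}^{(\textrm{c})})$.

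The step I expect to be the main obstacle is the interchange of the two limits $M\to\infty$ and $m_\textrm{d}=m_\textrm{d}(M)\to\infty$: one cannot merely quote the fixed-$(m_\textrm{c},m_\textrm{d})$ conclusions of Prop. \ref{pro:1} and Prop. \ref{pro:4} but must re-inspect their proofs to confirm that each bound is uniform enough to survive $m_\textrm{d}\to\infty$ while $m_\textrm{d}\in o(M)$ --- notably that $\|\bw_k^{(\textrm{c})}\|^2/M\as1$ still holds (true, as $M-m_\textrm{c}-m_\textrm{d}\to\infty$). The remaining ingredients --- the exact evaluation $\rho(m_\textrm{d},\alpha_\textrm{c})$ of the residual-interference mean (already done in Prop. \ref{pro:4}), the Gamma-ratio limit, and the Markov bound --- are routine.
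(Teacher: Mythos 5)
Your proof is correct, and its engine is the same as the paper's: the Stirling/Gamma-ratio estimate showing $\rho(m_\textrm{d},\alpha_\textrm{c})\to 0$ as $m_\textrm{d}\to\infty$, followed by a sandwich between $\log(1+\snr_{0k}^{(\textrm{c})})$ and a lower bound controlled by $\rho$. Where you differ is in how the sandwich is justified. The paper writes $\lim_{M\to\infty}R^{(\textrm{c})}_{k}\ge\log\bigl(1+\snr_{0k}^{(\textrm{c})}/(\rho(m_\textrm{d},\alpha_\textrm{c})+1)\bigr)$ by invoking the limit (\ref{eq:15}) of Prop.~\ref{pro:4} --- a statement proved for \emph{fixed} $m_\textrm{d}$ as $M\to\infty$ --- and then lets $m_\textrm{d}\to\infty$, so the double limit $m_\textrm{d}=m_\textrm{d}(M)$ is handled only implicitly; you instead work at finite $M$. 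The projector identity $|\bw_k^{(\textrm{c})*}\bh_{0k}^{(\textrm{c})}|^2=\|\bw_k^{(\textrm{c})}\|^4$ together with $\mathbb{E}[\|\bw_k^{(\textrm{c})}\|^2]=M-m_\textrm{c}-m_\textrm{d}$ and Jensen gives a uniform upper bound; the normalization by $\|\bw_k^{(\textrm{c})}\|^2N_0$, the conditional $\textrm{Exp}(1)$ law of $|\bw_k^{(\textrm{c})*}\bh_i^{(\textrm{d})}|^2/\|\bw_k^{(\textrm{c})}\|^2$, and Markov's inequality applied to the mean $\rho(m_\textrm{d}(M),\alpha_\textrm{c})$ (finite once $m_\textrm{d}+1>\alpha_\textrm{c}/2$, hence for all large $M$ since $m_\textrm{d}\in\omega(1)$) give $\sinr_k^{(\textrm{c})}\cp\snr_{0k}^{(\textrm{c})}$ along the actual sequence, after which Fatou along almost-surely convergent subsequences closes the lower bound. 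This buys a rigorous treatment of exactly the interchange-of-limits issue you flag; it is essentially what one would also obtain by specializing the non-asymptotic bound (\ref{eq:8}) of Prop.~\ref{pro:2} to the scaled power and letting $M\to\infty$ with $m_\textrm{c}+m_\textrm{d}\in o(M)$, which is the shortest fully rigorous route. All individual steps check out; the only cosmetic caveat is that statements such as $\|\bw_k^{(\textrm{c})}\|^2/M\as 1$ concern a triangular array and are more safely phrased as convergence in probability, which is all your argument actually uses (and the paper is equally informal on this point).
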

\vspace{-2em}
\begin{proof}
According to Stirling's formula, $\Gamma (t+1) \sim \sqrt{2\pi t} (\frac{t}{e})^t$, when $t$ is large. It follows that
\begin{align}
\frac{\Gamma ( m_\textrm{d}  + 1 -\frac{\alpha_\textrm{c} }{2} ) }{\Gamma ( m_\textrm{d}   )} &\sim   \frac{\sqrt{2\pi ( m_\textrm{d}   -\frac{\alpha_\textrm{c} }{2} )} (\frac{m_\textrm{d}   -\frac{\alpha_\textrm{c} }{2}}{e})^{m_\textrm{d}   -\frac{\alpha_\textrm{c} }{2}}}{\sqrt{2\pi ( m_\textrm{d}   - 1 )} (\frac{m_\textrm{d}   - 1}{e})^{m_\textrm{d}   -1}}
\notag \\
&
= \left( \frac{e}{ m_\textrm{d}   -\frac{\alpha_\textrm{c} }{2} } \right)^{\frac{\alpha_\textrm{c} }{2}-1} \left( \frac{m_\textrm{d}   -\frac{\alpha_\textrm{c} }{2}}{m_\textrm{d}   -1} \right)^{m_\textrm{d}   -\frac{ 1 }{2}} \sim \left( \frac{1}{ m_\textrm{d}   -\frac{\alpha_\textrm{c} }{2} } \right)^{\frac{\alpha_\textrm{c} }{2}-1} .
\end{align}
Therefore, as $m_\textrm{d} \to \infty$, $\rho ( m_\textrm{d}, \alpha_\textrm{c} ) \to 0$ and thus
\begin{align}
\log \left( 1 +   \snr_{0k}^{(\textrm{c})}    \right) \geq \lim_{M\to \infty} R^{(\textrm{c})}_{k}  
&\geq \log \left( 1 + \frac{ \snr_{0k}^{(\textrm{c})} }{ \rho ( m_\textrm{d}, \alpha_\textrm{c} )  + 1 }  \right) \to \log \left( 1 +   \snr_{0k}^{(\textrm{c})}    \right) .
\end{align}
This completes the proof.
\end{proof}


Before ending this subsection, we would like to point out that the power scaling law $\Theta(1/M)$ does not imply that cellular UEs can transmit at a vanishing power level as $M$ increases to infinity. A more appropriate understanding is that large power savings are possible when $M$ is large, and there exists a limiting cellular spectral efficiency. For example, numerical results (e.g., Fig. \ref{fig:4}) show that when $M$ is on the order of several hundred, the limiting cellular spectral efficiency is reached and the power savings are about 20 - 30 dB. Increasing $M$ further theoretically leads to more power savings, but the conclusion becomes fragile since non-ideal effects like spurious emission may not be negligible when ${P_\textrm{c}}/{M}$ becomes too small. The power scaling law for imperfect CSI case studied in Section \ref{sec:imperfect} should also be understood in a similar fashion.

\subsection{Non-asymptotic Cellular Spectral Efficiency}

Next we analyze the cellular spectral efficiency in the non-asymptotic regime to generate more insights into the impact of the various system parameters. To this end, using Jensen's inequality we derive a lower bound for $R^{(c)}_{k}$ in the following proposition.

\begin{pro}
With perfect CSI, $M \geq m_\textrm{c} + m_\textrm{d} +1$ and $m_\textrm{d} > \frac{\alpha_\textrm{c}}{2} - 1$, and conditioned on $\{\Xi_{bk}^{(\textrm{c})}\}$ and $\{x_{bk}^{(\textrm{c})}\}$, the spectral efficiency $R^{(\textrm{c})}_{k}$ of cellular UE $k$ in the central cell is lower bounded as
\begin{align}
R^{(\textrm{c})}_{k} \geq  R^{(\textrm{c,lb})}_{k} = \log \left( 1 + \frac{  (M - m_\textrm{c} - m_\textrm{d}-1) \snr_{0k}^{(\textrm{c})} }{\sum_{b=0}^B \sum_{\ell \in \mathcal{K}^{(\textrm{c})}_{bk}  } \snr_{b\ell}^{(\textrm{c})}  + \rho ( m_\textrm{d}, \alpha_\textrm{c} )   + 1 }  \right) , 
\label{eq:8}
\end{align}
where $\rho(m,\alpha)$ is defined in (\ref{eq:13}).
\label{pro:2}
\end{pro}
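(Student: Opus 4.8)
The plan is to apply the convexity (Jensen) bound that is standard for massive-MIMO spectral efficiencies, arranging the algebra so that the array gain, the residual fading of the uncanceled interferers, and the D2D point process decouple.

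\emph{Step 1 (rewrite the SINR).} Since $\bw_k^{(\textrm{c})}$ is, by construction, the projection of $\bh_{0k}^{(\textrm{c})}$ onto the $(M-m_\textrm{c}-m_\textrm{d})$-dimensional subspace orthogonal to the canceled channels, one has $\bw_k^{(\textrm{c})*}\bh_{0k}^{(\textrm{c})}=\|\bw_k^{(\textrm{c})}\|^2=:G$, so $S^{(\textrm{c})}_k=P_\textrm{c}\Xi_{0k}^{(\textrm{c})}\|x_{0k}^{(\textrm{c})}\|^{-\alpha_\textrm{c}}G^2$. Dividing numerator and denominator of $\sinr_k^{(\textrm{c})}$ by $G$ and writing $e_{b\ell}:=|\bw_k^{(\textrm{c})*}\bh_{b\ell}^{(\textrm{c})}|^2/G$ and $e_i:=|\bw_k^{(\textrm{c})*}\bh_i^{(\textrm{d})}|^2/G$,
\[
\sinr_k^{(\textrm{c})}=\frac{N_0\,\snr_{0k}^{(\textrm{c})}\,G}{\underbrace{N_0\sum_{b=0}^{B}\sum_{\ell\in\mathcal{K}^{(\textrm{c})}_{bk}}\snr_{b\ell}^{(\textrm{c})}e_{b\ell}+P_\textrm{d}\sum_{i\in\Phi_k^{(\textrm{c})}}\Xi_i^{(\textrm{d})}\|x_i^{(\textrm{d})}\|^{-\alpha_\textrm{c}}e_i+N_0}_{=:\,J}}.
\]
Conditioning on the subspace spanned by the canceled channels---which, because the channels are independent across transmitters with i.i.d.\ $\mathcal{CN}(0,1)$ entries and the canceled set is fixed by the (conditioned) geometry, is independent of $\bh_{0k}^{(\textrm{c})}$ and of every uncanceled channel---the isotropy of complex Gaussian vectors gives the two facts I will use: $G\sim\textrm{Gamma}(M-m_\textrm{c}-m_\textrm{d},1)$, and $G$ is independent of the direction $\bw_k^{(\textrm{c})}/\|\bw_k^{(\textrm{c})}\|$; and, given that direction, each $e_{b\ell}$, $e_i$ is $\textrm{Exp}(1)$, independent of $G$, of $\Phi$, and of all shadowing. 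In particular $J\perp G$ and $\mathbb{E}[e_{b\ell}]=\mathbb{E}[e_i]=1$.

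\emph{Step 2 (Jensen).} The map $g(t)=\log(1+1/t)$ is convex on $(0,\infty)$ since $g''(t)=t^{-2}-(t+1)^{-2}>0$, and $\sinr_k^{(\textrm{c})}=1/\big(J/(N_0\snr_{0k}^{(\textrm{c})}G)\big)$. Taking the expectation over the randomness remaining after conditioning on $\{\Xi^{(\textrm{c})}_{bk}\},\{x^{(\textrm{c})}_{bk}\}$ (fast fading, D2D locations $\Phi$, all shadowing) and using $J\perp G$,
\[
R^{(\textrm{c})}_{k}=\mathbb{E}\!\left[g\!\left(\tfrac{J}{N_0\snr_{0k}^{(\textrm{c})}G}\right)\right]\ \geq\ g\!\left(\mathbb{E}\!\left[\tfrac{J}{N_0\snr_{0k}^{(\textrm{c})}G}\right]\right)=\log\!\left(1+\frac{N_0\,\snr_{0k}^{(\textrm{c})}}{\mathbb{E}[J]\,\mathbb{E}[1/G]}\right).
\]

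\emph{Step 3 (evaluate the two means).} From $G\sim\textrm{Gamma}(M-m_\textrm{c}-m_\textrm{d},1)$ we get $\mathbb{E}[1/G]=1/(M-m_\textrm{c}-m_\textrm{d}-1)$; the feasibility hypothesis $M\geq m_\textrm{c}+m_\textrm{d}+1$ makes this meaningful, it being $+\infty$ exactly in the case $M=m_\textrm{c}+m_\textrm{d}+1$ where the numerator of (\ref{eq:8}) vanishes and the bound is trivial. For $\mathbb{E}[J]$, using $\mathbb{E}[e_{b\ell}]=\mathbb{E}[e_i]=1$, the independence of the $e$'s from $\Phi$ and the shadowing, and $\mathbb{E}[\Xi_i^{(\textrm{d})}]=\bar{\Xi}$, the cellular-plus-noise part contributes $N_0\big(\sum_{b,\ell\in\mathcal{K}^{(\textrm{c})}_{bk}}\snr_{b\ell}^{(\textrm{c})}+1\big)$ and the D2D part reduces to the Campbell-type mean $P_\textrm{d}\bar{\Xi}\,\mathbb{E}\big[\sum_{i\in\Phi_k^{(\textrm{c})}}\|x_i^{(\textrm{d})}\|^{-\alpha_\textrm{c}}\big]$. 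Since $\Phi_k^{(\textrm{c})}$ is $\Phi$ with its $m_\textrm{d}$ points nearest the BS deleted, I would invoke the ordered-distance law $\pi\lambda R_n^2\sim\textrm{Gamma}(n,1)$, giving $\mathbb{E}[R_n^{-\alpha_\textrm{c}}]=(\pi\lambda)^{\alpha_\textrm{c}/2}\Gamma(n-\tfrac{\alpha_\textrm{c}}{2})/\Gamma(n)$, and sum over $n\geq m_\textrm{d}+1$; the series telescopes via $(\tfrac{\alpha_\textrm{c}}{2}-1)\tfrac{\Gamma(n-\alpha_\textrm{c}/2)}{\Gamma(n)}=\tfrac{\Gamma(n-\alpha_\textrm{c}/2)}{\Gamma(n-1)}-\tfrac{\Gamma(n+1-\alpha_\textrm{c}/2)}{\Gamma(n)}$, with a tail that vanishes because $\alpha_\textrm{c}>2$, collapsing to $\tfrac{2}{\alpha_\textrm{c}-2}\Gamma(m_\textrm{d}+1-\tfrac{\alpha_\textrm{c}}{2})/\Gamma(m_\textrm{d})$. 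Multiplied by $(\pi\lambda)^{\alpha_\textrm{c}/2}P_\textrm{d}\bar{\Xi}$ this is exactly $N_0\rho(m_\textrm{d},\alpha_\textrm{c})$, so $\mathbb{E}[J]=N_0\big(\sum_{b,\ell}\snr_{b\ell}^{(\textrm{c})}+\rho(m_\textrm{d},\alpha_\textrm{c})+1\big)$, and substituting into the Jensen bound yields (\ref{eq:8}). The hypothesis $m_\textrm{d}>\tfrac{\alpha_\textrm{c}}{2}-1$ enters precisely here: it is what makes $\Gamma(m_\textrm{d}+1-\tfrac{\alpha_\textrm{c}}{2})$---equivalently, the mean residual D2D interference---finite; this is the same evaluation of $\rho$ that appears in the proof of Prop.~\ref{pro:4}, which I would reuse.

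The step I expect to take the most care is the independence bookkeeping in Step 1: one must verify that, after conditioning on the cellular geometry, $G$ is $\textrm{Gamma}(M-m_\textrm{c}-m_\textrm{d},1)$-distributed and independent of the normalized interference coefficients $e_{b\ell},e_i$, which are themselves $\textrm{Exp}(1)$ and independent of $\Phi$ and of the shadowing---this is exactly what lets $\mathbb{E}[J/G]$ factor as $\mathbb{E}[J]\,\mathbb{E}[1/G]$ and what converts Jensen into the stated closed form. The remaining computations (the inverse-Gamma moment and the telescoped PPP sum) are routine.
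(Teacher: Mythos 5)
Your proof is correct and follows essentially the same route as the paper's: Jensen's inequality applied to the convex map $t\mapsto\log(1+1/t)$, the $\Gamma(M-m_\textrm{c}-m_\textrm{d},1)$ law of the projected signal power giving the inverse moment $1/(M-m_\textrm{c}-m_\textrm{d}-1)$, $\textrm{Exp}(1)$ residual fading for each uncanceled interferer, and a PPP mean computation yielding $\rho(m_\textrm{d},\alpha_\textrm{c})$. The only (minor, equally valid) deviation is in evaluating the mean residual D2D interference: you sum the marginal ordered-distance moments $\mathbb{E}[R_n^{-\alpha_\textrm{c}}]$ over $n\geq m_\textrm{d}+1$ and telescope, whereas the paper conditions on the $m_\textrm{d}$-th nearest point, applies Campbell's formula outside that radius, and then deconditions with the density of the $m_\textrm{d}$-th nearest distance.
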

\begin{proof}
See Appendix \ref{proof:pro:2}.
\end{proof}

Note that the first term in the denominator of (\ref{eq:8}) corresponds to the uncanceled cellular interference; it decreases as $m_\textrm{c}$ increases. Similarly, the second term in the denominator of (\ref{eq:8}) corresponds to the uncanceled D2D interference; it  decreases as $m_\textrm{d}$ increases. In contrast, the numerator of (\ref{eq:8}) corresponds to the desired signal power; it decreases as $m_\textrm{c}$ and/or $m_\textrm{d}$ increase. The lower bound (\ref{eq:8}) demonstrates the various tradeoffs when choosing the PZF parameters $m_\textrm{c}$ and $m_\textrm{d}$. Note that such tradeoffs disappear in the asymptotic regime (cf. Prop. \ref{pro:4} and \ref{pro:5}). 

We point out that the received signal power gain is only proportional to $M-m_\textrm{c}-m_\textrm{d}-1$ in the lower bound (\ref{eq:8}). One might think the power gain should be proportional to $M-m_\textrm{c}-m_\textrm{d}$, the number of degrees of freedom left for power boosting after using $m_\textrm{c} + m_\textrm{d}$  degrees of freedom  for interference cancellation. The fallacy of the above argument is that it ignores the effect of fading, which makes a power gain proportional to $M-m_\textrm{c}-m_\textrm{d}$ unachievable.

We may optimize the PZF filter $\bw^{(\textrm{c})}_k$ by choosing  ($m^\star_\textrm{c}, m^\star_\textrm{d}$) such that they maximize the sum spectral efficiency in the central cell, i.e., 
\begin{align}
(m^\star_\textrm{c}, m^\star_\textrm{d}) = \textrm{argmax}_{(m_\textrm{c}, m_\textrm{d}) \in \mathcal{Z}_{\textrm{c}}}  \sum_{k=1}^K R^{(\textrm{c})}_{k} .
\end{align}
This is a combinatorial optimization, and finding the global optimum involves exhaustive search over the feasible space $\mathcal{Z}_{\textrm{c}}$. In practice, each BS only cancels intra-cell cellular interference, leading to  $0\leq m_\textrm{c} \leq K-1$. Further, existing studies (see, e.g., \cite{jindal2011multi}) show that canceling a few Poisson distributed interferers provides close-to-optimal performance. This implies that it suffices to consider a few small values for  $m_\textrm{d}$. These two facts greatly reduce the search space for ($m^\star_\textrm{c}, m^\star_\textrm{d}$). A demonstrative numerical result is given in Fig. \ref{fig:8} in Section \ref{sec:sim}.

\subsection{D2D Spectral Efficiency}

For D2D receiver $r$, the post-processing SINR  with the PZF filter $\bw_r^{(\textrm{d})}$ is 
\begin{align}
\sinr_r^{(\textrm{d})} = \frac{ S^{(\textrm{d})}_r }{ I^{(\textrm{c}\to \textrm{d})}_r +  I^{(\textrm{d}\to \textrm{d})}_r  + \| \bw_r^{(\textrm{d})}  \|^2 N_0  }, 
\end{align}
where $S^{(\textrm{d})}_r = P_\textrm{d} \Xi_{rr}^{(\textrm{d})} (d_{rr}^{(\textrm{d})})^{-\alpha_\textrm{d}}  \| \bw_r^{(\textrm{d})*} \bg_{rr}^{(\textrm{d})} \|^2$ denotes the desired signal power of D2D Tx-Rx pair $r$, $I^{(\textrm{c}\to \textrm{d})}_r$ and $ I^{(\textrm{d}\to \textrm{d})}_r$ respectively denote the cochannel cellular and D2D interference powers experienced by D2D receiver $r$ and are given by
\begin{align}
I^{(\textrm{c}\to \textrm{d})}_r &= \sum_{b=0}^B \sum_{k \in \mathcal{K}^{(\textrm{d})}_{br}  } P_\textrm{c} \Xi_{rbk}^{(\textrm{c})} (d_{rbk}^{(\textrm{c})})^{-\alpha_\textrm{d}} |\bw_r^{(\textrm{d})*} \bg_{rbk}^{(\textrm{c})} |^2 \notag \\
I^{(\textrm{d}\to \textrm{d})}_r &= \sum_{i \in \Phi_r^{(\textrm{d})}}  P_\textrm{d} \Xi_{ri}^{(\textrm{d})} (d_{ri}^{(\textrm{d})})^{-\alpha_\textrm{d}} |\bw_r^{(\textrm{d})*}  \bg_{ri}^{(\textrm{d})} |^2 .
\end{align}
The spectral efficiency of D2D Tx-Rx pair $r$ is defined as
\begin{align}
R^{(\textrm{d})}_{r} = \mathbb{E} \left[ \log ( 1 + \sinr_r^{(\textrm{d})}  ) \right] , 
\end{align}
where the expectation is taken with respect to the fast fading, shadowing and random locations of UEs.

As the number $N$ of antennas at UEs is often limited due to hardware constraints, it is not very meaningful to study the asymptotic performance with $N \to \infty$. Instead, as in the case of cellular spectral efficiency, we provide a lower bound for $R^{(\textrm{d})}_{r}$ in the non-asymptotic regime, which characterizes the impact of the various system parameters on the D2D spectral efficiency.

\begin{pro}
With perfect CSI, $N \geq n_\textrm{c} + n_\textrm{d} +1$ and $n_\textrm{d} > \frac{\alpha_\textrm{d}}{2} - 1$, and conditioned on $\Xi_{rr}^{(\textrm{d})}$, $d_{rr}^{(\textrm{d})}$, $\{\Xi_{bk}^{(\textrm{c})}\}$ and $\{x_{bk}^{(\textrm{c})}\}$, the spectral efficiency $R^{(\textrm{d})}_{r}$ of D2D Tx-Rx pair $r$ is lower bounded as
\begin{align}
R^{(\textrm{d})}_{r} \geq  R^{(\textrm{d,lb})}_{r} = \log \left( 1 + \frac{  (N - n_\textrm{c} - n_\textrm{d}-1) \snr_r^{(\textrm{d})} }{\sum_{b=0}^B\sum_{k \in \mathcal{K}^{(\textrm{d})}_{br}  } \frac{P_\textrm{c}}{N_0} \Xi_{rbk}^{(\textrm{c})}(d_{rbk}^{(\textrm{c})})^{-\alpha_\textrm{d}}  + \rho(n_\textrm{d},\alpha_\textrm{d})   + 1 }  \right) , 
\label{eq:12}
\end{align}
where $\snr_r^{(\textrm{d})} =  {P_\textrm{d} \Xi_{rr}^{(\textrm{d})}  (d_{rr}^{(\textrm{d})})^{-\alpha_\textrm{d}}  }/{N_0}$, and $\rho(m,\alpha)$ is defined in (\ref{eq:13}).
\label{pro:3}
\end{pro}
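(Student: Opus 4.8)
The plan is to mirror the proof of Prop.~\ref{pro:2}, transposed from the central BS to a D2D receiver. Step one is a Jensen bound. Writing $\Lambda_r:=(I^{(\textrm{c}\to\textrm{d})}_r+I^{(\textrm{d}\to\textrm{d})}_r+\|\bw_r^{(\textrm{d})}\|^2 N_0)/S^{(\textrm{d})}_r$ so that $\sinr_r^{(\textrm{d})}=1/\Lambda_r$, and using that $t\mapsto\log(1+1/t)$ is convex on $(0,\infty)$, Jensen's inequality gives $R^{(\textrm{d})}_{r}=\mathbb{E}[\log(1+1/\Lambda_r)]\geq\log(1+1/\mathbb{E}[\Lambda_r])$; hence the whole task reduces to evaluating the single conditional expectation $\mathbb{E}[\Lambda_r]$, the average being over the remaining randomness (all fast fading, the interferer shadowing, and the D2D node locations).

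Step two uses the structure of the PZF filter. By construction $\bw_r^{(\textrm{d})}$ is the orthogonal projection of $\bg_{rr}^{(\textrm{d})}$ onto the $(N-n_\textrm{c}-n_\textrm{d})$-dimensional subspace orthogonal to the channels of the $n_\textrm{c}+n_\textrm{d}$ canceled interferers; since all vector channels have i.i.d.\ $\mathcal{CN}(0,1)$ entries and are mutually independent, the corresponding projection $\bP$ is independent of $\bg_{rr}^{(\textrm{d})}$ and of the channel of every uncanceled interferer. Consequently $\|\bw_r^{(\textrm{d})}\|^2=G$ and $|\bw_r^{(\textrm{d})*}\bg_{rr}^{(\textrm{d})}|^2=G^2$ with $G=\|\bP\bg_{rr}^{(\textrm{d})}\|^2\sim\Gamma(N-n_\textrm{c}-n_\textrm{d},1)$, while for each uncanceled cellular/D2D interferer the cross term equals $G\,\eta$ with $\eta\sim\textrm{Exp}(1)$ independent of $G$ (conditionally on $\bw_r^{(\textrm{d})}$ the inner product is $\mathcal{CN}(0,\|\bw_r^{(\textrm{d})}\|^2)$). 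Substituting, one power of the $G^2$ in $S^{(\textrm{d})}_r$ is canceled by the common factor $G$ in the numerator, leaving
\begin{align}
\Lambda_r=\frac{1}{G}\cdot\frac{\sum_{b=0}^{B}\sum_{k\in\mathcal{K}_{br}^{(\textrm{d})}}P_\textrm{c}\Xi_{rbk}^{(\textrm{c})}(d_{rbk}^{(\textrm{c})})^{-\alpha_\textrm{d}}\eta_{bk}^{(\textrm{c})}+\sum_{i\in\Phi_r^{(\textrm{d})}}P_\textrm{d}\Xi_{ri}^{(\textrm{d})}(d_{ri}^{(\textrm{d})})^{-\alpha_\textrm{d}}\eta_i^{(\textrm{d})}+N_0}{P_\textrm{d}\Xi_{rr}^{(\textrm{d})}(d_{rr}^{(\textrm{d})})^{-\alpha_\textrm{d}}},
\end{align}
with the numerator independent of $G$. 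Thus $\mathbb{E}[\Lambda_r]$ factors into $\mathbb{E}[1/G]=1/(N-n_\textrm{c}-n_\textrm{d}-1)$ times the conditional mean of the numerator; using $\mathbb{E}[\eta]=1$, $\mathbb{E}[\Xi_{ri}^{(\textrm{d})}]=\bar{\Xi}$, and independence, the only nontrivial piece left is $\mathbb{E}\big[\sum_{i\in\Phi_r^{(\textrm{d})}}(d_{ri}^{(\textrm{d})})^{-\alpha_\textrm{d}}\big]$.

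Step three computes that residual-path-loss mean. By Slivnyak's theorem the interfering D2D transmitters form a rate-$\lambda$ PPP, for which the squared distance of the $n_\textrm{d}$-th nearest point to $z_r^{(\textrm{d})}$ satisfies $\pi\lambda\,(d_{r,(n_\textrm{d})}^{(\textrm{d})})^2\sim\Gamma(n_\textrm{d},1)$; Campbell's theorem over the complement of the disc of radius $d_{r,(n_\textrm{d})}^{(\textrm{d})}$ then yields $\mathbb{E}\big[\sum_{i\in\Phi_r^{(\textrm{d})}}(d_{ri}^{(\textrm{d})})^{-\alpha_\textrm{d}}\big]=\frac{2\pi\lambda}{\alpha_\textrm{d}-2}\,\mathbb{E}\big[(d_{r,(n_\textrm{d})}^{(\textrm{d})})^{2-\alpha_\textrm{d}}\big]=\frac{2(\pi\lambda)^{\alpha_\textrm{d}/2}}{\alpha_\textrm{d}-2}\cdot\frac{\Gamma(n_\textrm{d}+1-\frac{\alpha_\textrm{d}}{2})}{\Gamma(n_\textrm{d})}=\frac{N_0}{P_\textrm{d}\bar{\Xi}}\rho(n_\textrm{d},\alpha_\textrm{d})$ with $\rho$ as in (\ref{eq:13}); the hypothesis $n_\textrm{d}>\frac{\alpha_\textrm{d}}{2}-1$ is exactly what keeps this Gamma argument positive and the mean finite. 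Collecting the pieces gives $\mathbb{E}[\Lambda_r]=\frac{1}{N-n_\textrm{c}-n_\textrm{d}-1}\cdot\frac{\sum_b\sum_{k\in\mathcal{K}_{br}^{(\textrm{d})}}P_\textrm{c}\Xi_{rbk}^{(\textrm{c})}(d_{rbk}^{(\textrm{c})})^{-\alpha_\textrm{d}}+N_0\rho(n_\textrm{d},\alpha_\textrm{d})+N_0}{P_\textrm{d}\Xi_{rr}^{(\textrm{d})}(d_{rr}^{(\textrm{d})})^{-\alpha_\textrm{d}}}$; substituting $1/\mathbb{E}[\Lambda_r]$ into the Jensen bound and dividing numerator and denominator by $N_0$ produces exactly $R^{(\textrm{d,lb})}_{r}$ in (\ref{eq:12}), with the boundary case $N=n_\textrm{c}+n_\textrm{d}+1$ making the numerator $(N-n_\textrm{c}-n_\textrm{d}-1)\snr_r^{(\textrm{d})}$ vanish so the bound reads $\log 1=0$ and holds trivially. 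The main obstacle I anticipate is the bookkeeping around the point process: one must check that after conditioning on the desired-link and cellular-interferer quantities the process $\Phi_r^{(\textrm{d})}$ (the PPP with its $n_\textrm{d}$ points closest to $z_r^{(\textrm{d})}$ removed) carries the Slivnyak-reduced law used above, and that $G$, the $\textrm{Exp}(1)$ fading factors, the shadowing, and the node locations are jointly independent in the way needed for $\mathbb{E}[\Lambda_r]$ to factor. Everything else is the same routine calculation as in the proof of Prop.~\ref{pro:2}, with $(M,m_\textrm{c},m_\textrm{d},\alpha_\textrm{c})$ replaced by $(N,n_\textrm{c},n_\textrm{d},\alpha_\textrm{d})$ and the cellular path losses now measured at the D2D receiver.
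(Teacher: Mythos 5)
Your proposal is correct and follows essentially the same route as the paper's (omitted) proof, which is the proof of Prop.~\ref{pro:2} transposed to the D2D receiver: a Jensen bound on $\log(1+1/x)$, the $\Gamma(N-n_\textrm{c}-n_\textrm{d},1)$ law of the projected desired-signal power giving the inverse-Gamma mean $1/(N-n_\textrm{c}-n_\textrm{d}-1)$, $\textrm{Exp}(1)$ cross terms for uncanceled interferers, and the mean residual PPP interference computed by conditioning on the $n_\textrm{d}$-th nearest-point distance. Your only departures are cosmetic (carrying the unnormalized filter so that $\|\bw_r^{(\textrm{d})}\|^2$ factors out explicitly, and explicitly invoking Slivnyak's theorem, which the paper leaves implicit), and your handling of the boundary case $N=n_\textrm{c}+n_\textrm{d}+1$ is a sound addition.
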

\begin{proof}
The proof is similar to that of Prop. \ref{pro:2} and is omitted for brevity.
\end{proof}

Many of the remarks on Prop. \ref{pro:2} apply to Prop. \ref{pro:3} as well and are not repeated here. One additional remark is that the cellular-to-D2D interference is not homogeneous: the D2D receivers located in the boundary of the cellular network experience less cellular interference than the D2D receivers located in the central cell. But if we focus on the D2D performance in the central cell and choose the number of cellular cells large enough, this heterogeneity can be made negligible.

\section{Spectral Efficiency with Imperfect Channel State Information}
\label{sec:imperfect}


\subsection{Estimating UE-BS Channels}

We consider pilot-based CSI estimation in which known training sequences are transmitted and used for estimation purpose. To alleviate the training overhead and coordination complexity, we assume that each BS $b$ does not estimate the channels from other-cell transmitters (either cellular or D2D). Note that as the number $|\Phi_b|$ of D2D transmitters in the cell $b$ is Poisson distributed, there may be less than $m_\textrm{d}$ D2D transmitters in the cell $b$. Therefore, during the training phase, each BS $b$ requires the $K$ cellular UEs and the $m_{\textrm{d},b} \triangleq \min(m_\textrm{d}, |\Phi_b|)$ nearest D2D transmitters (w.r.t. the BS $b$) in its cell to simultaneously transmit orthogonal training sequences. The BSs do not coordinate the other D2D transmitters, which can send independent symbols during the training phase.

Unlike the perfect CSI case, other-cell transmissions (both cellular and D2D) now have a more delicate impact on the performance of the central cell. To accommodate this, in this subsection we extend the previous notation as follows. We add an additional subscript $b$ to $x_i^{(\textrm{d})}, \Xi_i^{(\textrm{d})}$ and $\bh_i^{(\textrm{d})}$, and obtain $x_{bi}^{(\textrm{d})}, \Xi_{bi}^{(\textrm{d})}$ and $\bh_{bi}^{(\textrm{d})}$, indicating that they are associated with D2D transmitter $i$ in the cell $b$. Similarly, we use $\Phi_{bk}^{(\textrm{c})}$ to denote the set of uncanceled D2D interferers in the cell $b, b=0,...,B+1$. Note that the coverage of the ``cell'' $B+1$ is simply the complement (w.r.t. $\mathbb{R}^2$) of the coverage areas of the cells $0,...,B$, and the ``cell'' $B+1$ does not contain a BS.

Denoting by $T_\textrm{c} \geq K+ m_\textrm{d}$ the length of a training sequence, we can represent the training sequences as a $T_\textrm{c} \times (K+m_\textrm{d})$ dimensional matrix $\sqrt{T_\textrm{c}}\bQ^{(\textrm{c})} = \sqrt{T_\textrm{c}} (\bq^{(\textrm{c})}_1,...,\bq^{(\textrm{c})}_{K+m_\textrm{d}})$ satisfying $\bQ^{(\textrm{c})*} \bQ^{(\textrm{c})} = \bI_{K+m_\textrm{d}}$. These pilots are reused over different cells.
In the training phase, the  $M \times T_\textrm{c}$ dimensional baseband received signal $\bY^{(\textrm{c})}_0$ at the central BS  is
\begin{align}
\bY_0^{(\textrm{c})} =& \sum_{b=0}^B \sum_{k\in \mathcal{K}_b} \sqrt{T_\textrm{c} P_\textrm{c} \Xi^{(\textrm{c})}_{bk}}  \|x^{(\textrm{c})}_{bk}\|^{-\frac{\alpha_\textrm{c}}{2}} \bh^{(\textrm{c})}_{bk} \bq^{(\textrm{c})*}_k  + \sum_{b=0}^B \sum_{i=1}^{m_{\textrm{d},b}}\sqrt{T_\textrm{c} P_\textrm{c} \Xi^{(\textrm{d})}_{bi}}  \|x^{(\textrm{d})}_{bi}\|^{-\frac{\alpha_\textrm{c}}{2}} \bh_{bi}^{(\textrm{d})} \bq^{(\textrm{c})*}_{K+i} \notag \\
&+ \sum_{b=0}^{B+1} \sum_{r \in \Phi_{bk}^{(\textrm{c})}} \sqrt{P_\textrm{d}\Xi^{(\textrm{d})}_{br}}  \|x^{(\textrm{d})}_{br}\|^{-\frac{\alpha_\textrm{c}}{2}} \bh_{br}^{(\textrm{d})} \bu_{br}^{(\textrm{d})*}  + \bV_0^{(\textrm{c})} ,
\label{eq:16}
\end{align}
where  the  $T_\textrm{c} \times 1$ dimensional vector $\bu_{br}^{(\textrm{d})}$ contains the data symbols sent by D2D interferer $r$ in the cell $b$, and the $M \times T_\textrm{c}$ dimensional noise matrix $\bV_0^{(\textrm{c})}$ consists of i.i.d. $\mathcal{CN}(0,N_0)$ elements. Note that the coordinated D2D transmitters also use power $P_\textrm{c}$ during the training phase since they now transmit to their associated BSs. 

We assume that the central BS uses linear MMSE estimator for the channel estimation. To this end, we first project the received signal $\bY_0^{(\textrm{c})}$ in the direction of $\bq_{\tilde{k}}^{(\textrm{c})}$ and normalize it to obtain
\begin{align}
\tilde{\by}^{(s)}_k &= \frac{1}{\sqrt{T_\textrm{c} P_\textrm{c} \Xi^{(s)}_{0k}}  \|x^{(s)}_{0k}\|^{-\frac{\alpha_\textrm{c}}{2}}}\bY_0^{(\textrm{c})} \bq_{\tilde{k}}^{(\textrm{c})} \notag \\
&= \bh^{(s)}_{0k}  +  \sum_{b=1}^B \sqrt{\beta^{(s)}_{bk}} \bh^{(s)}_{bk}  +  \tilde{\bv}_k^{(s)}, \quad (s, \tilde{k} ) \in \{ (\textrm{c},k), (\textrm{d}, K + k) \},
\end{align}
where 
\[ \beta^{(s)}_{bk}  \triangleq \left\{ \begin{array}{ll}
         0 & \mbox{if $s =$ d and $k > m_{\textrm{d},b}$ };\\
                  \frac{\Xi^{(s)}_{bk}  \|x^{(s)}_{bk}\|^{-\alpha_\textrm{c}}}{\Xi^{(s)}_{0k}  \|x^{(s)}_{0k}\|^{-\alpha_\textrm{c}} } & \mbox{otherwise}, \end{array} \right. \] 
and        
$\tilde{\bv}_k^{(s)}$ denotes the equivalent channel estimation ``noise'' and is given by
\begin{align}
\tilde{\bv}_k^{(s)} = \frac{1}{\sqrt{T_\textrm{c} P_\textrm{c} \Xi^{(s)}_{0k}}  \|x^{(s)}_{0k}\|^{-\frac{\alpha_\textrm{c}}{2}}}   \left( \sum_{b=0}^{B+1} \sum_{r \in \Phi_{bk}^{(\textrm{c})}} \sqrt{P_\textrm{d}\Xi^{(\textrm{d})}_{br}}  \|x^{(\textrm{d})}_{br}\|^{-\frac{\alpha_\textrm{c}}{2}} \bh_{br}^{(\textrm{d})} \bar{u}_{br}^{(\textrm{d})}  + \bar{\bv}_k^{(\textrm{c})} \right) .
\label{eq:18}
\end{align}
where $\bar{u}_{br}^{(\textrm{d})} = \bu_{br}^{(\textrm{d})*} \bq_{\tilde{k}}^{(\textrm{c})}$ and $\bar{\bv}_k^{(\textrm{c})} = \bV_0^{(\textrm{c})} \bq_{\tilde{k}}^{(\textrm{c})}$.

\begin{lem}
The linear MMSE estimate of $\bh^{(s)}_{0k}, s\in \{\textrm{c}, \textrm{d}\}$, is given by $\hat{\bh}^{(s)}_{0k} = \xi_k^{(s)} \tilde{\by}^{(s)}_k$, where 
\begin{align}
\xi_k^{(s)}
&= \left(1 + \sum_{b=1}^B  \beta^{(s)}_{bk} + \frac{\sum_{b=0}^{B+1} \sum_{r \in \Phi_{bk}^{(\textrm{c})}} P_\textrm{d}\Xi^{(\textrm{d})}_{br}  \|x^{(\textrm{d})}_{br}\|^{-\alpha_\textrm{c}}    + N_0 }{ T_\textrm{c} P_\textrm{c} \Xi^{(s)}_{0k} \|x^{(s)}_{0k}\|^{-\alpha_\textrm{c}} } \right)^{-1}  .
\label{eq:30}
\end{align}
Further, $\mathbb{E}[\hat{\bh}^{(s)}_{0k}] = 0$ and $\mathbb{E}[\hat{\bh}^{(s)}_{0k} \hat{\bh}^{(s)*}_{0k}] =\xi_k^{(s)} \bI_M$. As for the estimation error $\bepsilon_k^{(s)} = \bh^{(s)}_{0k} - \hat{\bh}^{(s)}_{0k}$, $\mathbb{E}[\bepsilon_k^{(s)}] = 0$ and $\mathbb{E}[\bepsilon_k^{(s)} \bepsilon_k^{(s)*}] = (1 - \xi_k^{(s)})  \bI_M$.
\label{lem:1}
\end{lem}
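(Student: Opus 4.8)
The plan is to recognize Lemma~\ref{lem:1} as a direct application of the standard linear MMSE formula to a zero-mean observation whose covariance is a scalar multiple of $\bI_M$. All expectations are taken conditionally on the UE and D2D positions and the shadowing, so that $\beta_{bk}^{(s)}$ and every pathloss/shadowing coefficient entering $\tilde{\bv}_k^{(s)}$ in~(\ref{eq:18}) is deterministic; the remaining randomness is the fast-fading channels, the D2D data symbols $\bu_{br}^{(\textrm{d})}$, and the thermal noise, which are mutually independent and zero-mean. I would first note that, since $\mathbb{E}[\bh_{0k}^{(s)}]=\mathbf{0}$ and $\mathbb{E}[\tilde{\by}_k^{(s)}]=\mathbf{0}$, the linear MMSE estimate is simply $\hat{\bh}_{0k}^{(s)}=\mathbb{E}[\bh_{0k}^{(s)}\tilde{\by}_k^{(s)*}]\bigl(\mathbb{E}[\tilde{\by}_k^{(s)}\tilde{\by}_k^{(s)*}]\bigr)^{-1}\tilde{\by}_k^{(s)}$, and the task reduces to computing these two (conditional) moments.

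Step~1: compute the second-order statistics of $\tilde{\by}_k^{(s)}=\bh_{0k}^{(s)}+\sum_{b=1}^B\sqrt{\beta_{bk}^{(s)}}\,\bh_{bk}^{(s)}+\tilde{\bv}_k^{(s)}$. Each summand is zero-mean. For $\tilde{\bv}_k^{(s)}$ as in~(\ref{eq:18}), the orthonormality $\bQ^{(\textrm{c})*}\bQ^{(\textrm{c})}=\bI_{K+m_\textrm{d}}$ (hence $\|\bq_{\tilde{k}}^{(\textrm{c})}\|=1$) together with $\mathbb{E}[\bu_{br}^{(\textrm{d})}\bu_{br}^{(\textrm{d})*}]=\bI_{T_\textrm{c}}$ gives $\mathbb{E}[|\bar{u}_{br}^{(\textrm{d})}|^2]=\bq_{\tilde{k}}^{(\textrm{c})*}\mathbb{E}[\bu_{br}^{(\textrm{d})}\bu_{br}^{(\textrm{d})*}]\bq_{\tilde{k}}^{(\textrm{c})}=1$, and similarly $\mathbb{E}[\bar{\bv}_k^{(\textrm{c})}\bar{\bv}_k^{(\textrm{c})*}]=N_0\|\bq_{\tilde{k}}^{(\textrm{c})}\|^2\bI_M=N_0\bI_M$. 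Using $\mathbb{E}[\bh_{br}^{(\textrm{d})}\bh_{br}^{(\textrm{d})*}]=\bI_M$, the independence across D2D interferers and of everything from the noise, and the explicit normalizing constant in~(\ref{eq:18}), all cross terms vanish and $\mathbb{E}[\tilde{\bv}_k^{(s)}\tilde{\bv}_k^{(s)*}]=\nu_k^{(s)}\bI_M$, where $\nu_k^{(s)}$ is precisely the fraction appearing in~(\ref{eq:30}). Adding the diagonal covariances of the mutually independent components of $\tilde{\by}_k^{(s)}$ gives $\mathbb{E}[\tilde{\by}_k^{(s)}\tilde{\by}_k^{(s)*}]=\gamma_k^{(s)}\bI_M$ with $\gamma_k^{(s)}=1+\sum_{b=1}^B\beta_{bk}^{(s)}+\nu_k^{(s)}$; and since the $b\ge 1$ channels, the D2D interference terms and the noise are all independent of $\bh_{0k}^{(s)}$ and zero-mean, $\mathbb{E}[\bh_{0k}^{(s)}\tilde{\by}_k^{(s)*}]=\mathbb{E}[\bh_{0k}^{(s)}\bh_{0k}^{(s)*}]=\bI_M$.

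Step~2: plugging into the formula yields $\hat{\bh}_{0k}^{(s)}=(\gamma_k^{(s)})^{-1}\tilde{\by}_k^{(s)}$, which identifies $\xi_k^{(s)}=(\gamma_k^{(s)})^{-1}$, i.e. exactly~(\ref{eq:30}). Then $\mathbb{E}[\hat{\bh}_{0k}^{(s)}]=\xi_k^{(s)}\mathbb{E}[\tilde{\by}_k^{(s)}]=\mathbf{0}$ and $\mathbb{E}[\hat{\bh}_{0k}^{(s)}\hat{\bh}_{0k}^{(s)*}]=(\xi_k^{(s)})^2\gamma_k^{(s)}\bI_M=\xi_k^{(s)}\bI_M$. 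For the error $\bepsilon_k^{(s)}=\bh_{0k}^{(s)}-\hat{\bh}_{0k}^{(s)}$, $\mathbb{E}[\bepsilon_k^{(s)}]=\mathbf{0}$ is immediate, and for its covariance I would invoke the orthogonality principle $\mathbb{E}[\bepsilon_k^{(s)}\hat{\bh}_{0k}^{(s)*}]=\mathbf{0}$, so that $\bI_M=\mathbb{E}[\bh_{0k}^{(s)}\bh_{0k}^{(s)*}]=\mathbb{E}[\hat{\bh}_{0k}^{(s)}\hat{\bh}_{0k}^{(s)*}]+\mathbb{E}[\bepsilon_k^{(s)}\bepsilon_k^{(s)*}]$, giving $\mathbb{E}[\bepsilon_k^{(s)}\bepsilon_k^{(s)*}]=(1-\xi_k^{(s)})\bI_M$; alternatively one expands $\bepsilon_k^{(s)}=(1-\xi_k^{(s)})\bh_{0k}^{(s)}-\xi_k^{(s)}\bigl(\sum_{b\geq 1}\sqrt{\beta_{bk}^{(s)}}\,\bh_{bk}^{(s)}+\tilde{\bv}_k^{(s)}\bigr)$ and uses $(1-\xi_k^{(s)})^2+(\xi_k^{(s)})^2(\gamma_k^{(s)}-1)=1-\xi_k^{(s)}$. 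There is no genuine obstacle here; the only mildly delicate point is the bookkeeping in Step~1 — tracking which interference terms survive the pilot projection (already encoded in the form of~(\ref{eq:18})), verifying that the effective D2D symbol $\bar{u}_{br}^{(\textrm{d})}$ has unit variance and the projected noise $\bar{\bv}_k^{(\textrm{c})}$ retains covariance $N_0\bI_M$ because $\bQ^{(\textrm{c})}$ has orthonormal columns, and confirming that all cross terms between distinct interferers and the noise vanish by independence.
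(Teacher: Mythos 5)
Your proposal is correct and follows essentially the same route as the paper's proof: both compute $\mathbb{E}[\bh_{0k}^{(s)}\tilde{\by}_k^{(s)*}]=\bI_M$ and $\mathbb{E}[\tilde{\by}_k^{(s)}\tilde{\by}_k^{(s)*}]=(\xi_k^{(s)})^{-1}\bI_M$ using the orthonormality of the pilots and the mutual independence of the fading, D2D symbols and noise, then apply the standard zero-mean linear MMSE formula and obtain the error covariance from $\mathbb{E}[\bepsilon_k^{(s)}\bepsilon_k^{(s)*}]=\mathbb{E}[\bh_{0k}^{(s)}\bh_{0k}^{(s)*}]-\mathbb{E}[\hat{\bh}_{0k}^{(s)}\hat{\bh}_{0k}^{(s)*}]$, which is exactly your orthogonality-principle step. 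No gaps.
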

\begin{proof}
See Appendix \ref{proof:lem:1}.
\end{proof}

Lemma \ref{lem:1} shows that the longer the length $T_\textrm{c}$ of a training sequence, the smaller the covariance of the estimation error $\bepsilon_k^{(s)}$ and thus the more accurate the channel estimation $\hat{\bh}^{(s)}_{0k}$, agreeing with intuition. In particular, $\mathbb{E}[\bepsilon_k^{(s)} \bepsilon_k^{(s)*}] \to \frac{\sum_{b=1}^B  \beta^{(s)}_{bk}}{1+\sum_{b=1}^B  \beta^{(s)}_{bk}} \bI_M$, as $T_\textrm{c} \to \infty$. This shows that even with infinitely long training sequences the channel estimation cannot be perfect due to pilot contamination.

\subsection{Asymptotic Cellular Spectral Efficiency}

In this subsection, we examine the asymptotic performance of the cellular links as $M \to \infty$. For simplicity, we focus on $m_\textrm{c} = m_\textrm{d} =0$. Then $\bw^{(\textrm{c})}_k= \hat{\bh}_{0k}^{(\textrm{c})}$ is the MRC filter. Since multiplying the filter by a constant does not affect the post-processing SINR, we may choose $\bw^{(\textrm{c})}_k= \bY_0^{(\textrm{c})} \bq_{k}^{(\textrm{c})}$.
It follows that $\lim_{M\to \infty} \frac{1}{M}\bw^{(\textrm{c})*}_k \by^{(\textrm{c})}_0$ equals
\begin{align}
&\lim_{M\to \infty} \frac{1}{M}\left(\sum_{b=0}^B \sqrt{T_\textrm{c} P_\textrm{c} \Xi^{(\textrm{c})}_{bk}}  \|x^{(\textrm{c})}_{bk}\|^{-\frac{\alpha_\textrm{c}}{2}} \bh^{(\textrm{c})}_{bk}  +     \left( \sum_{i \in \Phi} \sqrt{P_\textrm{d}\Xi^{(\textrm{d})}_{i}}  \|x^{(\textrm{d})}_{i}\|^{-\frac{\alpha_\textrm{c}}{2}} \bh_{i}^{(\textrm{d})} \bar{u}_i^{(\textrm{d})}  + \bar{\bv}^{(\textrm{c})}_0 \right) \right)^*\by^{(\textrm{c})}_0 \notag \\
&=   \sum_{b=0}^{B} \sqrt{T_\textrm{c}}  P_\textrm{c} \Xi^{(\textrm{c})}_{b k} \|x^{(\textrm{c})}_{b k}\|^{-\alpha_{\textrm{c}}} u^{(\textrm{c})}_{b k}  + \sum_{i \in \Phi} P_\textrm{d}\Xi_i^{(\textrm{d})} \|x^{(\textrm{d})}_i\|^{-\alpha_{\textrm{c}}}  \bar{u}_i^{(\textrm{d})*}  u_i^{(\textrm{d})}   .
\label{eq:50}
\end{align}

The first term in (\ref{eq:50}) is the usual phenomenon  appearing in massive MIMO \cite{marzetta2010noncooperative}. In particular, it indicates that asymptotically the effects of uncorrelated receiver noise and fast fading vanish, and there is no intra-cell interference. The remaining effect is the residual other-cell interference due to pilot reuse across the cells \cite{marzetta2010noncooperative}.  With D2D underlay, we observe that a new effect (i.e., the last term in (\ref{eq:50})) indicating the residual D2D-to-cellular interference arises. The reason why the effect of D2D underlay does not vanish can be explained as follows. The interfering signal of D2D transmitter $i$ in the training phase correlates with the interfering signal of D2D transmitter $i$ in the data transmission phase through the common channel vector $\bh_{i}^{(\textrm{d})}$. Therefore, unlike the uncorrelated receiver noises in the estimation phase and in the data transmission phase, when multiplying the estimated channel $\hat{\bh}_{0k}^{(\textrm{c})}$ with the received signal $\by^{(\textrm{c})}_0$, the effect of D2D underlay cannot be eliminated even with infinitely many antennas at the BS. We term this effect \textit{underlay contamination}.

Note that the D2D underlay contamination term in (\ref{eq:50}) involves the products of complex Gaussian random variables $\bar{u}_i^{(\textrm{d})*}  u_i^{(\textrm{d})} $, the D2D interfering signals are not Gaussian distributed. It is known that given a covariance constraint Gaussian noise is the worst-case noise for additive noise channels. Therefore, treating the D2D interfering signals as Gaussian noises, we obtain the following Lemma \ref{lem:2}.
\begin{lem}
With imperfect CSI at the central BS and $(m_\textrm{c}, m_\textrm{d}) =(0, 0)$, i.e., the MRC receiver $\bw^{(\textrm{c})}_k= \hat{\bh}_{0k}^{(\textrm{c})}$, the following spectral efficiency $\hat{R}^{(\textrm{c})}_{k}$ is achievable for cellular UE $k$ in the central cell.
\begin{align}
\hat{R}^{(\textrm{c})}_{k} = \mathbb{E} \left[ \log \left( 1 + \frac{ \hat{S}^{(\textrm{c})}_k }{\hat{I}^{(\textrm{c}\to \textrm{c})}_k + \hat{I}^{(\textrm{d}\to \textrm{c})}_k} \right)  \right] ,
\label{eq:52}
\end{align}
where $\hat{S}^{(\textrm{c})}_k =  T_\textrm{c}  P^2_\textrm{c} |\Xi^{(\textrm{c})}_{0 k}|^2 \|x^{(\textrm{c})}_{0 k}\|^{-2\alpha_{\textrm{c}}}$, and
\begin{align}
\hat{I}^{(\textrm{c}\to \textrm{c})}_k = \sum_{b=1}^B T_\textrm{c}  P^2_\textrm{c} |\Xi^{(\textrm{c})}_{b k}|^2 \|x^{(\textrm{c})}_{b k}\|^{-2\alpha_{\textrm{c}}} , \quad 
\hat{I}^{(\textrm{d}\to \textrm{c})}_k = \sum_{i \in \Phi} P^2_\textrm{d} |\Xi_i^{(\textrm{d})}|^2 \|x^{(\textrm{d})}_i\|^{-2\alpha_{\textrm{c}}}.
\end{align}
\label{lem:2}
\end{lem}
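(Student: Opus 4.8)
\textbf{Proof proposal for Lemma \ref{lem:2}.}

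The plan is to lower-bound the mutual information between the transmitted symbol $u_{0k}^{(\textrm{c})}$ and the asymptotic (scaled) receive statistic in (\ref{eq:50}) by invoking the worst-case-noise property of Gaussian interference. First I would set up the effective scalar channel: by Prop.\ \ref{pro:1}-type reasoning and the limit already computed in (\ref{eq:50}), the decision statistic $\frac{1}{M}\bw_k^{(\textrm{c})*}\by_0^{(\textrm{c})}$ converges almost surely (conditioned on shadowing and UE positions, and over the PPP realization) to a deterministic linear combination of the transmitted symbols plus a residual D2D term. The desired-signal coefficient is $\sqrt{T_\textrm{c}}P_\textrm{c}\Xi_{0k}^{(\textrm{c})}\|x_{0k}^{(\textrm{c})}\|^{-\alpha_\textrm{c}}$, so that $\hat S_k^{(\textrm{c})} = T_\textrm{c}P_\textrm{c}^2|\Xi_{0k}^{(\textrm{c})}|^2\|x_{0k}^{(\textrm{c})}\|^{-2\alpha_\textrm{c}}$ is its power; the other-cell coefficients give $\hat I_k^{(\textrm{c}\to\textrm{c})}$ exactly as stated (the pilot-reuse terms $b=1,\dots,B$, the $b=0$ term being the desired one); and the D2D term has power $\hat I_k^{(\textrm{d}\to\textrm{c})}=\sum_{i\in\Phi}P_\textrm{d}^2|\Xi_i^{(\textrm{d})}|^2\|x_i^{(\textrm{d})}\|^{-2\alpha_\textrm{c}}$ because $\bar u_i^{(\textrm{d})*}u_i^{(\textrm{d})}$ has unit variance (both factors are independent unit-variance circularly symmetric Gaussians, so the product has mean zero and variance one) and the D2D terms are mutually independent across $i$.

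Next I would argue achievability. The transmitted symbols $u_{bk}^{(\textrm{c})}$ for $b\neq 0$ act as interference that is Gaussian and independent of $u_{0k}^{(\textrm{c})}$, so they are already in the worst-case form; the D2D product-noise $\sum_i(\cdot)\bar u_i^{(\textrm{d})*}u_i^{(\textrm{d})}$ is \emph{not} Gaussian, but it is zero-mean, independent of $u_{0k}^{(\textrm{c})}$, and has the finite conditional variance $\hat I_k^{(\textrm{d}\to\textrm{c})}$ computed above. Invoking the standard fact that, among all additive noises with a given variance (independent of the input), circularly symmetric complex Gaussian noise minimizes the mutual information of the input-output pair under a Gaussian input — equivalently $\log(1+\mathrm{SINR})$ with the actual SINR is an achievable rate when one treats all interference as Gaussian — we obtain that $\log\!\big(1+\hat S_k^{(\textrm{c})}/(\hat I_k^{(\textrm{c}\to\textrm{c})}+\hat I_k^{(\textrm{d}\to\textrm{c})})\big)$ is achievable for each realization. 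Taking expectation over shadowing and UE/PPP locations (which is legitimate because a per-realization achievable rate yields an achievable ergodic rate) gives (\ref{eq:52}).

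The main obstacle is making the ``asymptotic SINR is achievable'' step rigorous rather than heuristic: one must be careful that the mutual information of the \emph{finite}-$M$ channel, using the genuine (mismatched, because channel estimates are imperfect) MRC receiver and treating residual interference as noise, converges from below — or is at least eventually lower-bounded — by the limiting expression. The clean way is the use-and-then-forget / worst-case-noise bound of Marzetta-type analyses: write the receive statistic as (known effective channel gain)$\times u_{0k}^{(\textrm{c})}$ plus an uncorrelated ``effective noise'' whose variance is the sum of the beamforming-gain uncertainty, the other-cell/pilot-contamination power, the D2D power, the channel-estimation-error power, and the thermal noise; bound the achievable rate by $\log(1+|\text{mean gain}|^2/\text{effective noise variance})$ via Jensen and the worst-case-noise lemma; then let $M\to\infty$ and show every term except $\hat S_k^{(\textrm{c})}$, $\hat I_k^{(\textrm{c}\to\textrm{c})}$, $\hat I_k^{(\textrm{d}\to\textrm{c})}$ vanishes (using Lemma \ref{lem:1} for the estimation-error scaling and the favorable-propagation convergence for the rest). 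I would relegate these bookkeeping computations to an appendix, since each individual limit is routine given Prop.\ \ref{pro:1} and Lemma \ref{lem:1}, but the \emph{organization} — which terms are $O(1)$, which are $O(1/M)$, and why the cross terms are uncorrelated — is where the real content lies.
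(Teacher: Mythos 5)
Your proposal matches the paper's argument essentially step for step: the paper "proves" Lemma \ref{lem:2} only via the limit computation in (\ref{eq:50}) (reading off the signal coefficient $\sqrt{T_\textrm{c}}P_\textrm{c}\Xi_{0k}^{(\textrm{c})}\|x_{0k}^{(\textrm{c})}\|^{-\alpha_\textrm{c}}$, the pilot-contamination coefficients for $b=1,\dots,B$, and the unit-variance product $\bar u_i^{(\textrm{d})*}u_i^{(\textrm{d})}$ giving the D2D power) followed by the same worst-case Gaussian-noise remark you invoke. Your closing paragraph on making the finite-$M$ achievability rigorous actually goes beyond what the paper supplies, but it is not a divergence in approach.
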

\vspace{-2em}

Unlike the perfect CSI case in which the SINR of a cellular link can be made arbitrarily large (c.f. Prop. \ref{pro:1}), Lemma \ref{lem:2} shows that with imperfect CSI there is a limit on the received SINR in massive MIMO due to the pilot contamination and D2D underlay contamination.
With D2D underlay, conditioned on UE positions and shadowing, the loss of SINR of cellular UE $k$ in the central cell is 
$
10\log_{10} (1+ \hat{I}^{(\textrm{d}\to \textrm{c})}_k/\hat{I}^{(\textrm{c}\to \textrm{c})}_k ) 
$
dB. There are four possible approaches to mitigate the loss. First, we can decrease the D2D transmit power. This approach reduces the link budgets of D2D links, limiting the range of D2D communication. Second, we can increase the cellular transmit power. This approach increases the energy consumption of cellular UEs and also results in more cellular-to-D2D interference. Third, we can increase the length of training sequences. But longer training sequences consume more cellular transmission resources in terms of both power and bandwidth. Fourth, we can deactivate the D2D links in the training phase of massive MIMO. Then we retain the usual asymptotic cellular spectral efficiency in massive MIMO:
\begin{align}
\hat{R}^{(\textrm{c})}_{k} = \mathbb{E} \left[ \log \left( 1 + \frac{  |\Xi^{(\textrm{c})}_{0 k}|^2 \|x^{(\textrm{c})}_{0 k}\|^{-2\alpha_{\textrm{c}}} }{\sum_{b=1}^B  |\Xi^{(\textrm{c})}_{b k}|^2 \|x^{(\textrm{c})}_{b k}\|^{-2\alpha_{\textrm{c}}} }  \right)  \right] .
\end{align}  
Certainly, the last approach reduces time resources for D2D communication.

The following Corollary \ref{cor:1} shows that with D2D underlay contamination it is impossible to scale down cellular transmit powers, and thus D2D underlay hurts the energy efficiency of cellular UEs in massive MIMO. 
\begin{cor}
Scaling down cellular transmit powers results in a vanishing cellular spectral efficiency, i.e., $\hat{R}^{(\textrm{c})}_{k} \to 0$, as $P_\textrm{c} \to 0$.
\label{cor:1}
\end{cor}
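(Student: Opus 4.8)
The plan is to show that the integrand in (\ref{eq:52}) tends to zero pointwise (almost surely) as $P_\textrm{c}\to 0$, and then to push the limit through the expectation by dominated convergence. The structural reason the corollary holds is already visible in Lemma~\ref{lem:2}: both $\hat{S}^{(\textrm{c})}_k$ and $\hat{I}^{(\textrm{c}\to\textrm{c})}_k$ carry the factor $T_\textrm{c}P_\textrm{c}^2$, whereas the underlay-contamination term $\hat{I}^{(\textrm{d}\to\textrm{c})}_k=\sum_{i\in\Phi}P_\textrm{d}^2|\Xi_i^{(\textrm{d})}|^2\|x_i^{(\textrm{d})}\|^{-2\alpha_\textrm{c}}$ is independent of $P_\textrm{c}$, so it dominates the denominator as the cellular power shrinks.

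First I would establish the almost-sure convergence. Fix a realization of the D2D point process $\Phi$, the shadowing variables and the UE positions. Then $\hat{S}^{(\textrm{c})}_k=\Theta(P_\textrm{c}^2)\to 0$, while the denominator is bounded below by $\hat{I}^{(\textrm{d}\to\textrm{c})}_k$, which does not depend on $P_\textrm{c}$ and is strictly positive almost surely: $\Phi$ is a homogeneous PPP of intensity $\lambda>0$, hence a.s.\ non-empty (in fact has infinitely many points), and each summand is a.s.\ positive. Therefore $\sinr_k^{(\textrm{c})}=\hat{S}^{(\textrm{c})}_k/(\hat{I}^{(\textrm{c}\to\textrm{c})}_k+\hat{I}^{(\textrm{d}\to\textrm{c})}_k)\to 0$ a.s., and so $\log(1+\sinr_k^{(\textrm{c})})\to 0$ a.s.

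Next I would exhibit an integrable, $P_\textrm{c}$-free dominating function. The key trick is to bound the ratio by dropping $\hat{I}^{(\textrm{d}\to\textrm{c})}_k$ (rather than $\hat{I}^{(\textrm{c}\to\textrm{c})}_k$) from the denominator, which cancels the factor $T_\textrm{c}P_\textrm{c}^2$:
\[
\sinr_k^{(\textrm{c})}\;\le\;\frac{\hat{S}^{(\textrm{c})}_k}{\hat{I}^{(\textrm{c}\to\textrm{c})}_k}\;=\;\frac{|\Xi^{(\textrm{c})}_{0k}|^2\|x^{(\textrm{c})}_{0k}\|^{-2\alpha_\textrm{c}}}{\sum_{b=1}^B |\Xi^{(\textrm{c})}_{bk}|^2\|x^{(\textrm{c})}_{bk}\|^{-2\alpha_\textrm{c}}}\;=:\;Z_k,
\]
a random variable not involving $P_\textrm{c}$ (here I use $B\ge 1$, the regime of interest). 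Then $\log(1+\sinr_k^{(\textrm{c})})\le\log(1+Z_k)$, and writing $a_b=|\Xi^{(\textrm{c})}_{bk}|^2\|x^{(\textrm{c})}_{bk}\|^{-2\alpha_\textrm{c}}$ one has $|\log Z_k|\le\log B+\sum_{b=0}^B\big(2\,|\log|\Xi^{(\textrm{c})}_{bk}||+2\alpha_\textrm{c}\,|\log\|x^{(\textrm{c})}_{bk}\||\big)$. Combined with $\log(1+z)\le\log 2+|\log z|$, this shows $\mathbb{E}[\log(1+Z_k)]<\infty$: the standard lognormal shadowing has finite log-moments, and $\mathbb{E}\,|\log\|x^{(\textrm{c})}_{bk}\||<\infty$ because the cell regions are bounded and $\int_0|\log r|\,r\,dr<\infty$ (for $b\ge 1$ the distance to BS $0$ is moreover bounded away from $0$). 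Dominated convergence then gives $\hat{R}^{(\textrm{c})}_k\to\mathbb{E}[0]=0$.

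I expect the only genuine obstacle to be this interchange of limit and expectation; the pointwise statement is immediate from (\ref{eq:52}). The subtlety is that one must dominate via $\hat{I}^{(\textrm{c}\to\textrm{c})}_k$: bounding $\sinr_k^{(\textrm{c})}$ by $\hat{S}^{(\textrm{c})}_k/\hat{I}^{(\textrm{d}\to\textrm{c})}_k$ fails as a route to integrability since $\mathbb{E}[\|x^{(\textrm{c})}_{0k}\|^{-2\alpha_\textrm{c}}]=\infty$, and $1/\hat{I}^{(\textrm{d}\to\textrm{c})}_k$ is only controlled through the (light) tail of the PPP shot noise. An alternative that sidesteps the dominating function is to observe from (\ref{eq:52}) that $\sinr_k^{(\textrm{c})}$ has the form $aP_\textrm{c}^2/(bP_\textrm{c}^2+c)$ with $a,c>0$ a.s., hence is monotone decreasing to $0$ as $P_\textrm{c}\downarrow 0$; monotone convergence then applies once one checks $\hat{R}^{(\textrm{c})}_k<\infty$ at a single value of $P_\textrm{c}$, which again follows from the light tails of the shadowing law and covers $B=0$ as well.
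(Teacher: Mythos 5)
Your proposal is correct, and it supplies more than the paper does: the paper states Corollary~\ref{cor:1} with no proof at all, treating it as immediate from Lemma~\ref{lem:2} via the observation that $\hat{S}^{(\textrm{c})}_k$ and $\hat{I}^{(\textrm{c}\to\textrm{c})}_k$ scale as $P_\textrm{c}^2$ while the underlay-contamination term $\hat{I}^{(\textrm{d}\to\textrm{c})}_k$ is independent of $P_\textrm{c}$ and almost surely positive. That is exactly your pointwise argument, so in spirit you take the same route; your added value is the justification of the interchange of limit and expectation, which the paper silently skips. Both of your suggested mechanisms work: the dominating function $\log(1+Z_k)$ obtained by discarding $\hat{I}^{(\textrm{d}\to\textrm{c})}_k$ from the denominator is integrable for $B\ge 1$ by the log-moment bounds you give, and the monotonicity of $aP_\textrm{c}^2/(bP_\textrm{c}^2+c)$ in $P_\textrm{c}$ handles the general case (including $B=0$) via monotone convergence, modulo the one-point finiteness check you correctly flag as resting on the light lower tail of the PPP shot noise. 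No gaps.
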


To achieve a non-vanishing cellular spectral efficiency while scaling down cellular transmit powers, one solution is to schedule two independent sets of active D2D transmitters in the estimation phase and in the data transmission phase of massive MIMO. This solves underlay contamination. The disadvantage is that the BSs cannot use the estimated D2D UE-BS channels in the estimation phase to cancel the interference from the other set of D2D transmitters in the data transmission phase. Therefore, its performance is not clear in the non-asymptotic regime. Another simple solution is to deactivate the D2D links in the training phase of massive MIMO. Then we can scale down cellular transmit powers as in the following Prop. \ref{pro:6}.
\begin{pro}
With D2D links deactivated in the training phase of massive MIMO and scaled cellular transmit power ${P_\textrm{c}}/\sqrt{M}$, as $M \to \infty$, the achievable spectral efficiency $\hat{R}^{(\textrm{c})}_{k}$ of cellular UE $k$ in the central cell converges as follows.
\begin{align}
\hat{R}^{(\textrm{c})}_{k}  &\to \mathbb{E} \left[ \log \left( 1 + \frac{T_\textrm{c} ( \snr_{0k}^{(\textrm{c})} )^2 }{  \sum_{b=1}^B T_\textrm{c} ( \snr_{bk}^{(\textrm{c})} )^2  + \sum_{i \in \Phi} \frac{P_\textrm{d}}{N_0} \Xi_i^{(\textrm{d})} \|x^{(\textrm{d})}_i\|^{-\alpha_{\textrm{c}}} +1}  \right)  \right]  .
\label{eq:20}
\end{align}
\label{pro:6}
\end{pro}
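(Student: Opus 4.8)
The plan is to adapt the reasoning behind Lemma~\ref{lem:2} and (\ref{eq:50}) to the present setting, which differs from Lemma~\ref{lem:2} in exactly two ways: the D2D signals are absent from the training phase, and the cellular transmit power is $P_\textrm{c}/\sqrt{M}$ in \emph{both} the training and the data phases (throughout, $m_\textrm{c}=m_\textrm{d}=0$, so the combiner is MRC). With D2D deactivated in training, projecting (\ref{eq:16}) — stripped of its D2D terms, with $P_\textrm{c}$ replaced by $P_\textrm{c}/\sqrt{M}$ — onto $\bq^{(\textrm{c})}_k$ gives $\bw^{(\textrm{c})}_k=\bY_0^{(\textrm{c})}\bq^{(\textrm{c})}_k=\sum_{b=0}^{B}\sqrt{T_\textrm{c}(P_\textrm{c}/\sqrt{M})\Xi^{(\textrm{c})}_{bk}}\,\|x^{(\textrm{c})}_{bk}\|^{-\alpha_\textrm{c}/2}\bh^{(\textrm{c})}_{bk}+\bar{\bv}^{(\textrm{c})}_k$, with $\bar{\bv}^{(\textrm{c})}_k=\bV_0^{(\textrm{c})}\bq^{(\textrm{c})}_k\sim\mathcal{CN}(0,N_0\bI_M)$ independent of the data-phase noise $\bv^{(\textrm{c})}_0$; as in Lemma~\ref{lem:2} this choice is without loss of generality since scaling the combiner does not alter the SINR. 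I would then expand the post-combining statistic $\bw^{(\textrm{c})*}_k\by^{(\textrm{c})}_0$, with (\ref{eq:1}) evaluated at cellular power $P_\textrm{c}/\sqrt{M}$, into the desired-symbol term, the pilot-contamination terms carrying $\{u^{(\textrm{c})}_{bk}\}_{b=1}^{B}$, the D2D-interference terms, and the noise terms, and apply the worst-case-Gaussian-noise bound of Lemma~\ref{lem:2}: the combiner coefficients depend only on channels and noise, hence are independent of every data symbol, so each residual is zero-mean and uncorrelated with $u^{(\textrm{c})}_{0k}$ (and the residual families are pairwise uncorrelated, so their powers add), and $\mathbb{E}[\log(1+\sinr_M)]$ with $\sinr_M=(\text{power of the coefficient of }u^{(\textrm{c})}_{0k})/(\text{total residual power})$ is achievable.

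The crux is the bookkeeping of which terms survive after dividing numerator and denominator by $MN_0^2$. The desired-signal and pilot-contamination coefficients involve the \emph{same} channel $\bh^{(\textrm{c})}_{bk}$ in the estimate and in the data signal, so each carries $\|\bh^{(\textrm{c})}_{bk}\|^2$ with $\mathbb{E}[\|\bh^{(\textrm{c})}_{bk}\|^2]=M$ and $\mathbb{E}[\|\bh^{(\textrm{c})}_{bk}\|^4]=M^2+M$; combined with the product of the two $\sqrt{P_\textrm{c}/\sqrt{M}}$ factors these give powers $(M+O(1))T_\textrm{c}P_\textrm{c}^2|\Xi^{(\textrm{c})}_{bk}|^2\|x^{(\textrm{c})}_{bk}\|^{-2\alpha_\textrm{c}}$, i.e.\ $T_\textrm{c}(\snr_{0k}^{(\textrm{c})})^2$ and $\sum_{b=1}^{B}T_\textrm{c}(\snr_{bk}^{(\textrm{c})})^2$ in the limit (the variance of the desired coefficient is $\Theta(1)$ and drops). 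Since the D2D signals never enter the combiner, D2D interference is incoherent: its contribution through $(\bh^{(\textrm{c})}_{bk})^*\bh^{(\textrm{d})}_i$ has second moment $\propto(P_\textrm{c}/\sqrt{M})M=\sqrt{M}P_\textrm{c}=o(M)$ and vanishes, whereas its contribution through $\bar{\bv}^{(\textrm{c})*}_k\bh^{(\textrm{d})}_i$ has $\mathbb{E}[|\bar{\bv}^{(\textrm{c})*}_k\bh^{(\textrm{d})}_i|^2]=MN_0$ \emph{exactly}, hence contributes $\frac{P_\textrm{d}}{N_0}\Xi^{(\textrm{d})}_i\|x^{(\textrm{d})}_i\|^{-\alpha_\textrm{c}}$ per D2D transmitter for every $M$; summing over $\Phi$ yields the second denominator term in (\ref{eq:20}). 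The training-noise/data-noise cross term $\bar{\bv}^{(\textrm{c})*}_k\bv^{(\textrm{c})}_0$ has second moment $MN_0^2$, giving the $+1$; every remaining cross product (data noise against the channel part of the combiner, cellular users not sharing $\bq^{(\textrm{c})}_k$, etc.) has power $O(\sqrt{M})$ or $O(1)$ and disappears after division by $M$. Thus $\sinr_M$ equals the fraction in (\ref{eq:20}) up to multiplicative $1+O(1/M)$ on the pilot-contamination terms and additive $O(1/\sqrt{M})$ in the denominator, so it converges pointwise to the limiting SINR; dominated convergence — with dominating integrand $\log(1+T_\textrm{c}(\snr_{0k}^{(\textrm{c})})^2)$, integrable in the shadowing and the UE positions over which, jointly with $\Phi$, the expectation in (\ref{eq:20}) is taken — gives $\hat{R}^{(\textrm{c})}_k\to$ the right-hand side of (\ref{eq:20}).

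The main obstacle is the sheer volume of the term-by-term second-moment computation: one must list every cross product in $\bw^{(\textrm{c})*}_k\by^{(\textrm{c})}_0$ and verify that precisely three families survive the $1/(MN_0^2)$ normalization — the cellular coherent terms ($\to T_\textrm{c}(\snr^{(\textrm{c})})^2$), the training-noise-against-D2D-channel terms ($\to\frac{P_\textrm{d}}{N_0}\Xi^{(\textrm{d})}\|x^{(\textrm{d})}\|^{-\alpha_\textrm{c}}$), and the training-noise-against-data-noise term ($\to1$) — while all others are $o(M)$. The structural point worth emphasizing is the asymmetry created by the training-phase deactivation: in Lemma~\ref{lem:2} D2D \emph{is} present in training, so its interference is coherent and survives the $1/M$ normalization as the \emph{squared} quantity $P_\textrm{d}^2|\Xi^{(\textrm{d})}|^2\|x^{(\textrm{d})}\|^{-2\alpha_\textrm{c}}$, which does not scale with $P_\textrm{c}$ (whence Corollary~\ref{cor:1}); here it is incoherent and survives only as the \emph{linear} quantity $\frac{P_\textrm{d}}{N_0}\Xi^{(\textrm{d})}\|x^{(\textrm{d})}\|^{-\alpha_\textrm{c}}$, still independent of $P_\textrm{c}$, but now the desired signal power $\propto MP_\textrm{c}^2$ outgrows both it and the noise floor under the $\Theta(1/\sqrt{M})$ scaling — which is exactly why the $\Theta(1/\sqrt{M})$ power law is recovered. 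Finally, as in the perfect-CSI case, one notes that $\sum_{i\in\Phi}\|x^{(\textrm{d})}_i\|^{-\alpha_\textrm{c}}$ is a.s.\ finite for $\alpha_\textrm{c}>2$ despite having infinite mean (a PPP is a.s.\ locally finite with no atom at the origin, and the far field is summable since $\alpha_\textrm{c}>2$), so the limiting SINR and $\log(1+\sinr)$ are a.s.\ well defined.
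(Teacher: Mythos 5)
Your proposal is correct and follows essentially the same route as the paper's proof: with D2D absent from training, the MRC combiner is the sum of the coherent cellular pilot channels and the training noise, and under the $P_\textrm{c}/\sqrt{M}$ scaling the only terms in $\bw_k^{(\textrm{c})*}\by_0^{(\textrm{c})}$ that survive the normalization are the coherent cellular terms (giving $T_\textrm{c}(\snr^{(\textrm{c})})^2$), the training-noise-against-D2D-channel terms (giving $\frac{P_\textrm{d}}{N_0}\Xi_i^{(\textrm{d})}\|x_i^{(\textrm{d})}\|^{-\alpha_\textrm{c}}$), and the training-noise-against-data-noise term (giving $1$), after which the worst-case-Gaussian-noise argument of Lemma \ref{lem:2} yields achievability. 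The only point to tighten is the vanishing of the pilot-channel-against-D2D-data cross term summed over the infinite PPP, where a per-term second-moment bound does not by itself control the infinite sum; the paper handles this by invoking the two-region (near/far field) convergence-in-probability argument from the proof of Prop.\ \ref{pro:1}, which your remark on the a.s.\ finiteness of $\sum_{i\in\Phi}\|x_i^{(\textrm{d})}\|^{-\alpha_\textrm{c}}$ essentially supplies.
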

\begin{proof}
See Appendix \ref{proof:pro:6}.
\end{proof}

Finally, we give a more explicit expression for the asymptotic cellular spectral efficiency to allow for efficient numerical evaluation.
\begin{pro}
The achievable spectral efficiency $\hat{R}^{(\textrm{c})}_{k}$ of cellular UE $k$ in the central cell given in (\ref{eq:52}) equals
\begin{align}
\hat{R}^{(\textrm{c})}_{k} 
&= \int_0^\infty \frac{1}{z} (1- \mathbb{E} [e^{-z \hat{S}^{(\textrm{c})}_k}]) \mathbb{E} [e^{-z\hat{I}^{(\textrm{c}\to \textrm{c})}_k}] \mathbb{E} [ e^{-z \hat{I}^{(\textrm{d}\to \textrm{c})}_k)}] \dint z ,
\end{align}
where $\mathbb{E} [e^{-z \hat{S}^{(\textrm{c})}_k}] = \mathbb{E} [e^{-z T_\textrm{c}  P^2_\textrm{c} |\Xi^{(\textrm{c})}_{0 k}|^2 \|x^{(\textrm{c})}_{0 k}\|^{-2\alpha_{\textrm{c}}}}]$, $\mathbb{E} [e^{-z\hat{I}^{(\textrm{c}\to \textrm{c})}_k}] = \prod_{b=1}^B \mathbb{E} [e^{-z T_\textrm{c} P^2_\textrm{c} |\Xi^{(\textrm{c})}_{b k}|^2 \|x^{(\textrm{c})}_{b k}\|^{-2\alpha_{\textrm{c}}}}]$, and 
\begin{align}
\mathbb{E} [ e^{-z \hat{I}^{(\textrm{d}\to \textrm{c})}_k)}] 
=& \exp \left( -\pi \lambda \Gamma(1 - {1}/{\alpha_{\textrm{c}}}) P^{2/\alpha_{\textrm{c}}}_\textrm{d} \mathbb{E}[ \Xi^{2/\alpha_{\textrm{c}}} ] z^{1/\alpha_{\textrm{c}}} \right).
\label{eq:51}
\end{align}
\end{pro}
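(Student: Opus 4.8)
The plan is to start from the integral representation of the logarithm and then evaluate the three Laplace transforms that appear. First I would use the Frullani-type identity $\log\!\big(1 + S/I\big) = \int_0^\infty z^{-1}\big(e^{-zI} - e^{-z(S+I)}\big)\,\dint z$, valid for any $S \geq 0$ and $I > 0$, applied with $S = \hat S^{(\textrm{c})}_k$ and $I = \hat I^{(\textrm{c}\to\textrm{c})}_k + \hat I^{(\textrm{d}\to\textrm{c})}_k$ (note $I>0$ a.s.\ since $\Phi$ has infinitely many points a.s.). Since the integrand equals $z^{-1}e^{-zI}(1 - e^{-zS}) \geq 0$, Tonelli's theorem lets me move the expectation in (\ref{eq:52}) inside the $z$-integral, giving $\hat R^{(\textrm{c})}_k = \int_0^\infty z^{-1}\big(\mathbb{E}[e^{-zI}] - \mathbb{E}[e^{-z(S+I)}]\big)\,\dint z$.

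Next I would exploit the mutual independence, guaranteed by the modelling assumptions of Section~\ref{sec:model}, of $\hat S^{(\textrm{c})}_k$ (a function only of the central-cell link's shadowing and position), $\hat I^{(\textrm{c}\to\textrm{c})}_k$ (a function of the other cells' cellular shadowings and positions), and $\hat I^{(\textrm{d}\to\textrm{c})}_k$ (a function of $\Phi$ and the D2D shadowings). This factors $\mathbb{E}[e^{-zI}] = \mathbb{E}[e^{-z\hat I^{(\textrm{c}\to\textrm{c})}_k}]\,\mathbb{E}[e^{-z\hat I^{(\textrm{d}\to\textrm{c})}_k}]$ and $\mathbb{E}[e^{-z(S+I)}] = \mathbb{E}[e^{-z\hat S^{(\textrm{c})}_k}]$ times the same product, so the integrand collapses to $z^{-1}\big(1 - \mathbb{E}[e^{-z\hat S^{(\textrm{c})}_k}]\big)\mathbb{E}[e^{-z\hat I^{(\textrm{c}\to\textrm{c})}_k}]\mathbb{E}[e^{-z\hat I^{(\textrm{d}\to\textrm{c})}_k}]$, which is the asserted formula. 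The product forms $\mathbb{E}[e^{-z\hat I^{(\textrm{c}\to\textrm{c})}_k}] = \prod_{b=1}^B \mathbb{E}[e^{-zT_\textrm{c}P_\textrm{c}^2|\Xi^{(\textrm{c})}_{bk}|^2\|x^{(\textrm{c})}_{bk}\|^{-2\alpha_\textrm{c}}}]$ and the stated expression for $\mathbb{E}[e^{-z\hat S^{(\textrm{c})}_k}]$ are then immediate from independence of shadowing across cells.

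The only genuine computation is the D2D factor (\ref{eq:51}). Writing $\hat I^{(\textrm{d}\to\textrm{c})}_k = \sum_{i\in\Phi} P_\textrm{d}^2|\Xi^{(\textrm{d})}_i|^2\|x^{(\textrm{d})}_i\|^{-2\alpha_\textrm{c}}$ as a sum over the PPP $\Phi$ with i.i.d.\ marks $\Xi^{(\textrm{d})}_i$, the probability generating functional of a marked PPP gives $\mathbb{E}[e^{-z\hat I^{(\textrm{d}\to\textrm{c})}_k}] = \exp\!\big(-\lambda\int_{\mathbb{R}^2}\mathbb{E}_\Xi[1 - e^{-zP_\textrm{d}^2\Xi^2\|x\|^{-2\alpha_\textrm{c}}}]\,\dint x\big)$. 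Passing to polar coordinates, substituting $s = r^2$ and then $w = zP_\textrm{d}^2\Xi^2 s^{-\alpha_\textrm{c}}$, and integrating by parts once (the boundary terms vanish because $\alpha_\textrm{c} > 1$) reduces the radial integral to $\pi z^{1/\alpha_\textrm{c}}P_\textrm{d}^{2/\alpha_\textrm{c}}\Xi^{2/\alpha_\textrm{c}}\Gamma(1 - 1/\alpha_\textrm{c})$; taking the $\Xi$-expectation and multiplying by $\lambda$ yields (\ref{eq:51}). The main point to be careful about is that $\mathbb{E}[\hat I^{(\textrm{d}\to\textrm{c})}_k] = \infty$, so one cannot argue term by term; the right justification is that the generating-functional integrand obeys $1 - e^{-f(x)} \leq \min\{1, f(x)\}$ and $\int_{\mathbb{R}^2}\min\{1, c\|x\|^{-2\alpha_\textrm{c}}\}\,\dint x < \infty$ since $2\alpha_\textrm{c} > 2$, so the Laplace functional is finite and all the exchanges of expectation with integration (again Tonelli, nonnegative integrands) are valid. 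That finiteness check, rather than any single manipulation, is the part that needs the most attention.
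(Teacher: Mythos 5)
Your proposal is correct and follows essentially the same route as the paper: the Frullani-type identity you use is exactly the paper's $\log(1+x)=\int_0^\infty z^{-1}(1-e^{-xz})e^{-z}\,\dint z$ after the substitution $z\to zI$, followed by the same factorization via independence and the same Laplace-functional computation for the PPP shot noise. The only difference is that you explicitly justify the interchange of expectation and integration (Tonelli) and the finiteness of the Laplace functional despite $\mathbb{E}[\hat I^{(\textrm{d}\to\textrm{c})}_k]=\infty$, details the paper leaves implicit.
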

\begin{proof}
For any $x>0$, $\log(1+x) = \int_0^\infty \frac{1}{z} (1-e^{-xz}) e^{-z} \dint z$ \cite{hamdi2008capacity}. Therefore,
\begin{align}
\mathbb{E} \left[ \log\left(1 + \frac{X}{Y} \right) \right] = \mathbb{E} \left[\int_0^\infty \frac{1}{z} (1-e^{-z \frac{X}{Y}}) e^{-z} \dint z \right] = \mathbb{E} \left[\int_0^\infty \frac{1}{z} (1-e^{-z X}) e^{-zY} \dint z \right].
\end{align}
Using the above equality, the linearity of expectation, and the independence of $\hat{S}^{(\textrm{c})}_k$, $\hat{I}^{(\textrm{c}\to \textrm{c})}_k$ and $\hat{I}^{(\textrm{d}\to \textrm{c})}_k$, 
\begin{align}
\hat{R}^{(\textrm{c})}_{k} &= \mathbb{E} \left[\int_0^\infty \frac{1}{z} (1-e^{-z \hat{S}^{(\textrm{c})}_k}) e^{-z(\hat{I}^{(\textrm{c}\to \textrm{c})}_k + \hat{I}^{(\textrm{d}\to \textrm{c})}_k)} \dint z \right] \notag \\
&= \int_0^\infty \frac{1}{z} (1- \mathbb{E} [e^{-z \hat{S}^{(\textrm{c})}_k}]) \mathbb{E} [e^{-z\hat{I}^{(\textrm{c}\to \textrm{c})}_k}] \mathbb{E} [ e^{-z \hat{I}^{(\textrm{d}\to \textrm{c})}_k)}] \dint z .
\end{align}
The expressions for $\mathbb{E} [e^{-z \hat{S}^{(\textrm{c})}_k}]$ and $\mathbb{E} [e^{-z\hat{I}^{(\textrm{c}\to \textrm{c})}_k}]$ follow by definitions. Using the Laplace functional of the PPP $\Phi$ \cite{baccelli2003elements}, we have 
\begin{align}
\mathbb{E} [ e^{-z \hat{I}^{(\textrm{d}\to \textrm{c})}_k)}] 
=& \exp\left( - 2\pi \lambda \int_0^\infty \left( 1- \mathbb{E} [\exp(-z P^2_\textrm{d} \Xi^2 r^{-2\alpha_{\textrm{c}}})] \right) r \dint r  \right),
\end{align}
which equals (\ref{eq:51}).
\end{proof}

\section{Simulation and Numerical Results}
\label{sec:sim}

In this section, we provide simulation and numerical results to demonstrate the analytical results and obtain insights into how the various system parameters affect the cellular and D2D spectral efficiencies. The specific parameters used are summarized in Table \ref{tab:sys:para} unless otherwise specified. The cellular network consists of $19$ hexagonal cells; the side length of each cell is $R_c$. There are $K$ uniformly distributed cellular UEs in each cell, while D2D UEs are distributed as a PPP. The shadowing is lognormal with deviation $\sigma$ (dB). The pathloss parameters given in Table \ref{tab:sys:para} correspond to a carrier frequency of $2$ GHz. Specifically, we use the 3GPP macrocell propagation model (urban area) for UE-BS channels  \cite{3gppRF} and the revised \textit{Winner + B1} model (non-light-of-sight with $-5$ dB offset) for UE-UE channels \cite{3gppD2dRF}. Note that different pathloss reference values $C_{\textrm{c},0}$ and $C_{\textrm{d},0}$ are used in the UE-BS and UE-UE channels. Therefore, when evaluating the analytical expressions using the parameters in Table \ref{tab:sys:para}, $P_\textrm{c} = 23- C_{\textrm{c},0}$ (dBm) and $P_\textrm{d} = 13- C_{\textrm{c},0}$ (dBm) for the UE-BS channels while $P_\textrm{c} = 23- C_{\textrm{d},0}$ (dBm) and $P_\textrm{d} = 13- C_{\textrm{d},0}$ (dBm) for the UE-UE channels.


%
%
%

\begin{table}
\centering
\begin{tabular}{|l||r|} \hline
BS coverage radius $R_c$  & $500$ m  \\ \hline 
$\#$ cellular UEs $K$  & $4$  \\ \hline 
Density of D2D UEs $\lambda$   & $\frac{12}{\pi R_c^2}$ m$^{-2}$  \\ \hline 
$\#$ BS antennas $M$   & $100$  \\ \hline 
$\#$ UE Rx antennas $N$   & $4$  \\ \hline 
UE-BS PL exponent  $\alpha_\textrm{c}$ & $3.76$ \\ \hline
UE-UE PL exponent $\alpha_\textrm{d}$ & $4.37$ \\ \hline
UE-BS PL reference $C_{\textrm{c},0}$ & $15.3$ dB \\ \hline
UE-UE PL reference $C_{\textrm{d},0}$ & $38.5$ dB \\ \hline
Cellular Tx power $P_\textrm{c}$  & $23$ dBm  \\ \hline
D2D Tx power $P_\textrm{d}$  & $13$ dBm  \\ \hline
Channel bandwidth    & $10$ MHz \\ \hline
Noise PSD    & $-174$ dBm/Hz \\ \hline
BS noise figure    & $6$ dB \\ \hline
UE noise figure    & $9$ dB \\ \hline
Lognormal shadowing $\sigma$   & $7$ dB \\ \hline
\end{tabular}
\caption{Simulation/Numerical Parameters}
\label{tab:sys:para}
\end{table}

%
%

We first compare the simulated cellular spectral efficiency to the corresponding analytical lower bound (\ref{eq:8}) under various PZF parameters $(m_\textrm{c}, m_\textrm{d})$ in Fig. \ref{fig:2}. The conditioned random variables in (\ref{eq:8}) are averaged out in Fig. \ref{fig:2}.
We can see that the analytical lower bound (\ref{eq:8}) closely matches the simulation. The larger $m_\textrm{d}$, the better match between the simulation and the analytical lower bound (\ref{eq:8}). This is because larger $m_\textrm{d}$ implies less D2D-to-cellular interferers and thus smaller interference variance. As a result, the lower bound based on Jensen's inequality becomes more accurate with larger $m_\textrm{d}$. 
Comparing the spectral efficiency with $(m_\textrm{c}, m_\textrm{d}) = (0,2)$ to that of $(m_\textrm{c}, m_\textrm{d}) = (3,2)$, we can see that the latter is better and the spectral efficiency gain is about $1.6$ bps/Hz. This implies that it is beneficial to appropriately suppress the cochannel cellular interference in practical non-asymptotic regime. 

\begin{figure}
\centering
\includegraphics[width=10cm]{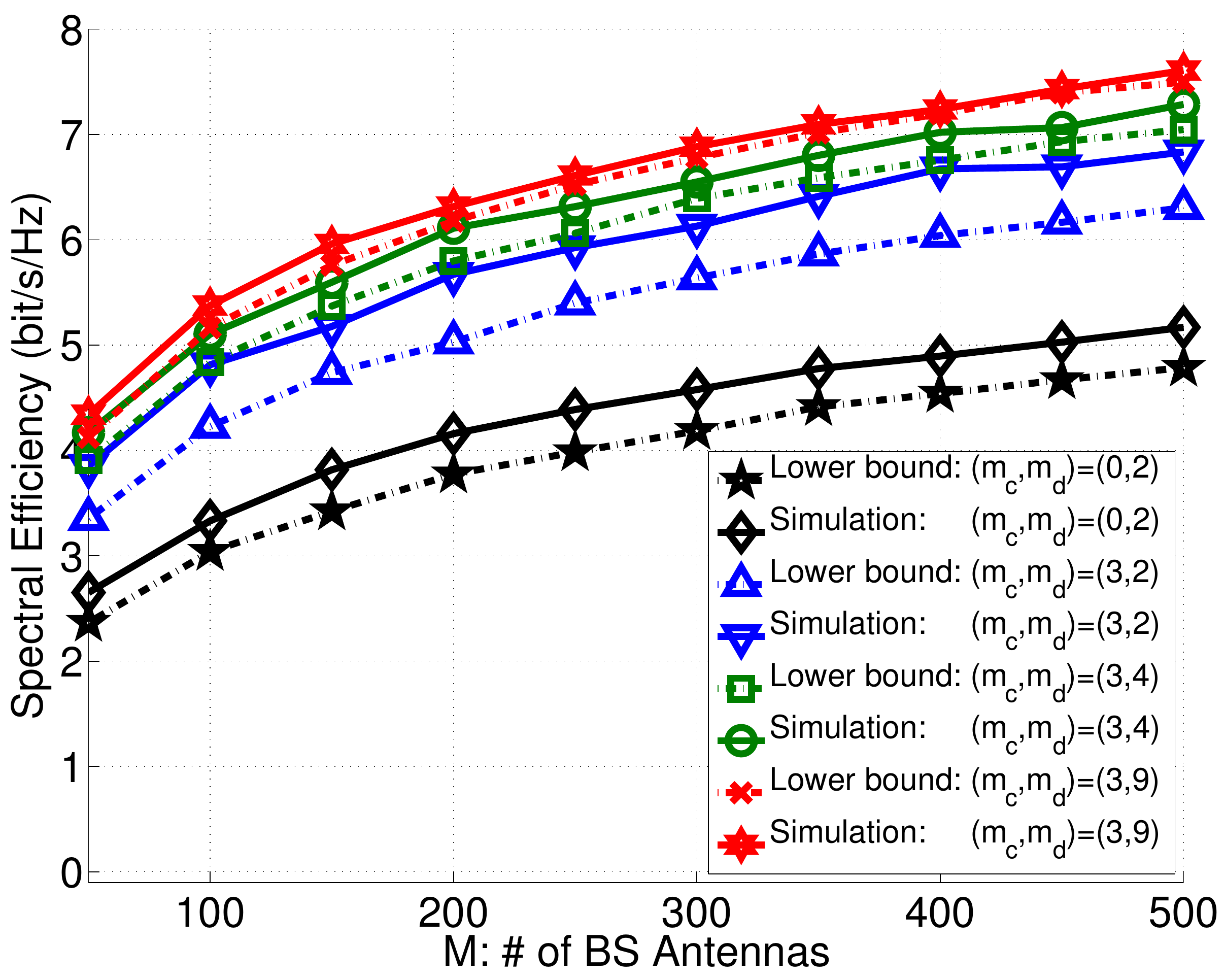}
\caption{Simulated cellular spectral efficiency vs. analytical lower bound (\ref{eq:8}) with perfect CSI.}
\label{fig:2}
\end{figure}

Since the lower bound (\ref{eq:8}) is accurate, next we use it to demonstrate the cellular spectral efficiency with scaled cellular transmit power (i.e., $P_\textrm{c} \to P_\textrm{c}/M$) in Fig. \ref{fig:4}. We consider two PZF choices: PZF with constant $m_\textrm{d}$  and PZF with scaled $m_\textrm{d}=\Theta(\sqrt{M})$. As a benchmark, we also include the curves corresponding the scenarios without D2D underlay. Also, D2D transmit power is decreased by $10$ times to accelerate the convergence.
Several observations from Fig. \ref{fig:4} are in order. First, unlike the case with unscaled cellular transmit power, Fig. \ref{fig:4} shows that it does not matter asymptotically whether cellular interference is canceled or not. Second, adopting a constant $m_\textrm{d}$ results in a fixed loss in the cellular spectral efficiency due to the underlaid D2D interference; this loss cannot be overcome by increasing the number of BS antennas when the cellular transmit power is also scaled down as $\Theta(1/M)$. This observation confirms the analytical results in Prop. \ref{pro:4}. Third, the loss in the cellular spectral efficiency due to D2D underlay can be overcome by scaling $m_\textrm{d}$ as $\Theta(\sqrt{M})$, validating the theoretical finding in Prop. \ref{pro:5}. But the convergence rate is relatively slow.

\begin{figure}
\centering
\includegraphics[width=10cm]{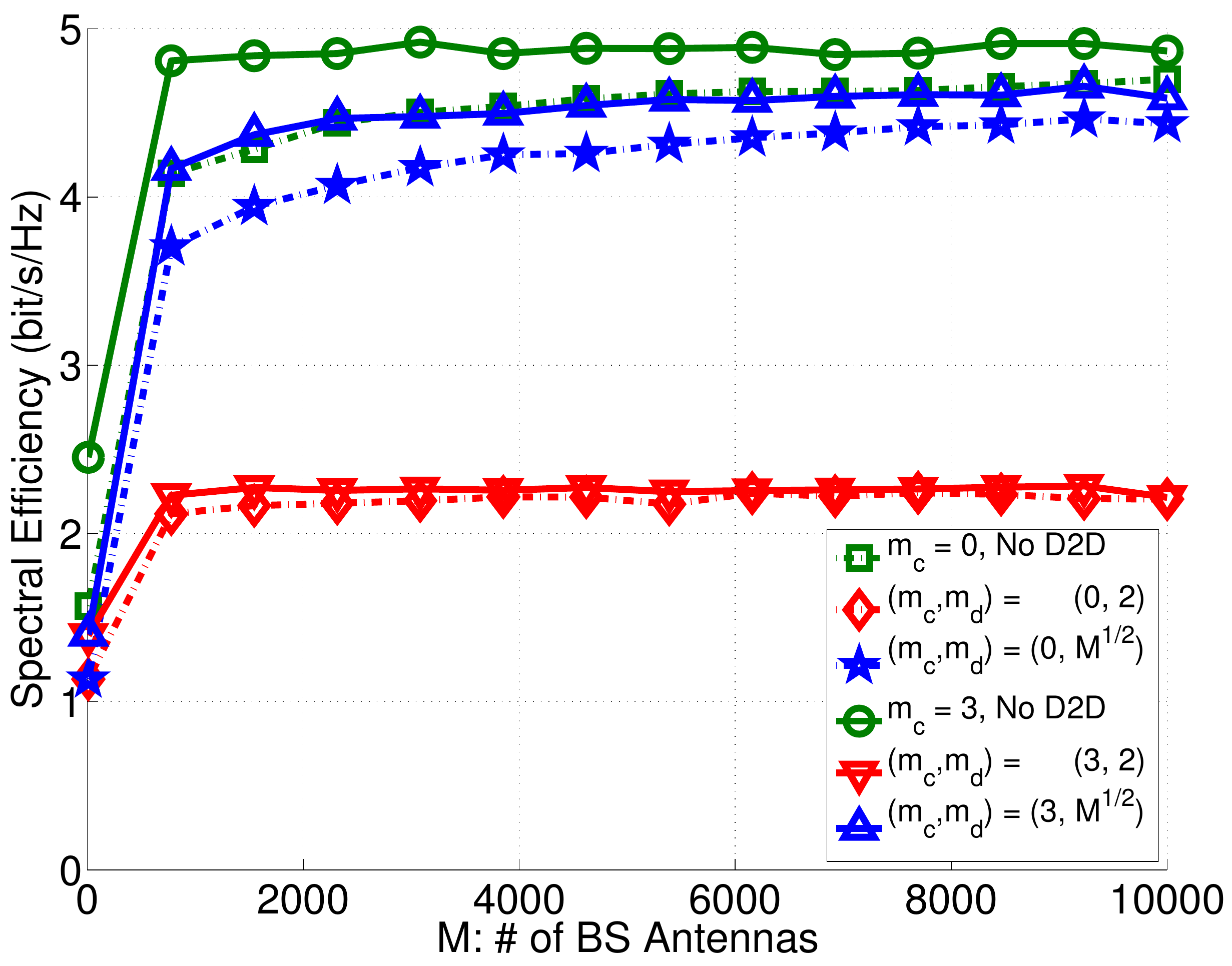}
\caption{Cellular spectral efficiency with scaled cellular transmit power and perfect CSI.}
\label{fig:4}
\end{figure}


Fig. \ref{fig:3} compares the simulated D2D spectral efficiency to the corresponding analytical lower bound (\ref{eq:12}) under different deterministic D2D distances and $(n_\textrm{c},n_\textrm{d})=(0,2)$.  The conditioned random variables in (\ref{eq:12}) are averaged out in Fig. \ref{fig:3}. We can see that the analytical lower bound (\ref{eq:12}) closely matches the simulation when $N\geq 6$ while being a bit loose when $N<6$. The accuracy of the lower bound obtained from Jensen's inequality implies that after canceling $2$ nearest D2D interferers, the variance of the residual interference is relatively small. Fig. \ref{fig:3} also shows that D2D spectral efficiency is quite sensitive to its communication range: there is a loss of about $3$ bps/Hz in spectral efficiency if D2D range is increased from $20$ m to $35$ m. 


\begin{figure}
\centering
\includegraphics[width=10cm]{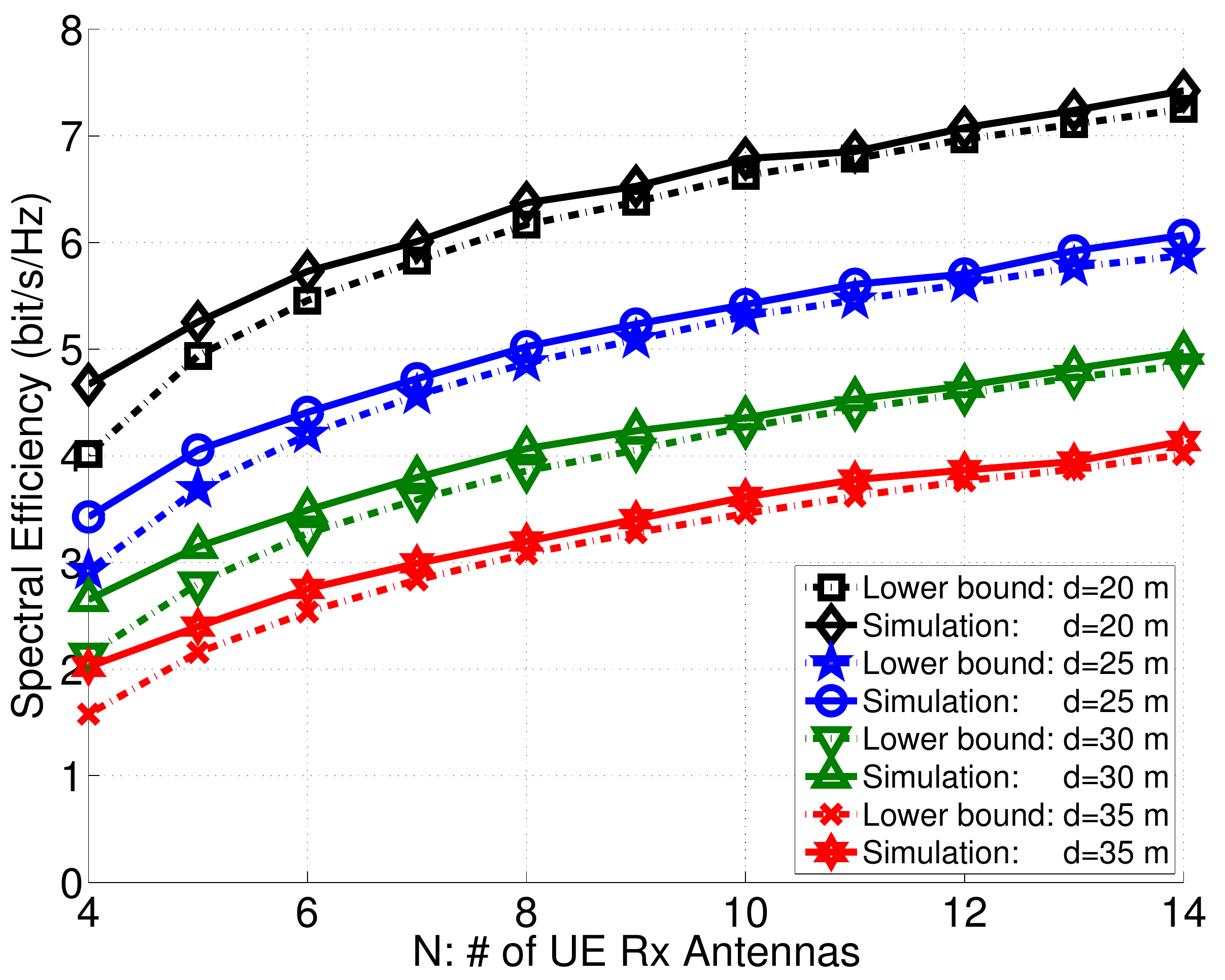}
\caption{Simulated D2D spectral efficiency vs. analytical lower bound (\ref{eq:12}) with perfect CSI and $(n_\textrm{c},n_\textrm{d})=(0,2)$.}
\label{fig:3}
\end{figure}

Next we evaluate the effect of multi-user cellular transmission on D2D spectral efficiency. Fig. \ref{fig:5} shows
the D2D spectral efficiency as a function of the number $K$ of co-channel cellular UEs per cell. Not surprisingly, as $K$ increases, D2D spectral efficiency decreases due to the increased cellular-to-D2D interference. The interesting observation from Fig. \ref{fig:5} is that even with $(n_\textrm{c},n_\textrm{d})=(0,0)$ (i.e., the MRC receiver) the average D2D spectral efficiency is not severely affected by scaling up the number of cellular UEs. For example, when $K$ increases from $10$ to $20$, the loss in D2D spectral efficiency is less than $0.5$ bps/Hz. This implies that we can scale up the uplink capacity in a massive MIMO system without much loss in the average D2D spectral efficiency. 

\begin{figure}
\centering
\includegraphics[width=10cm]{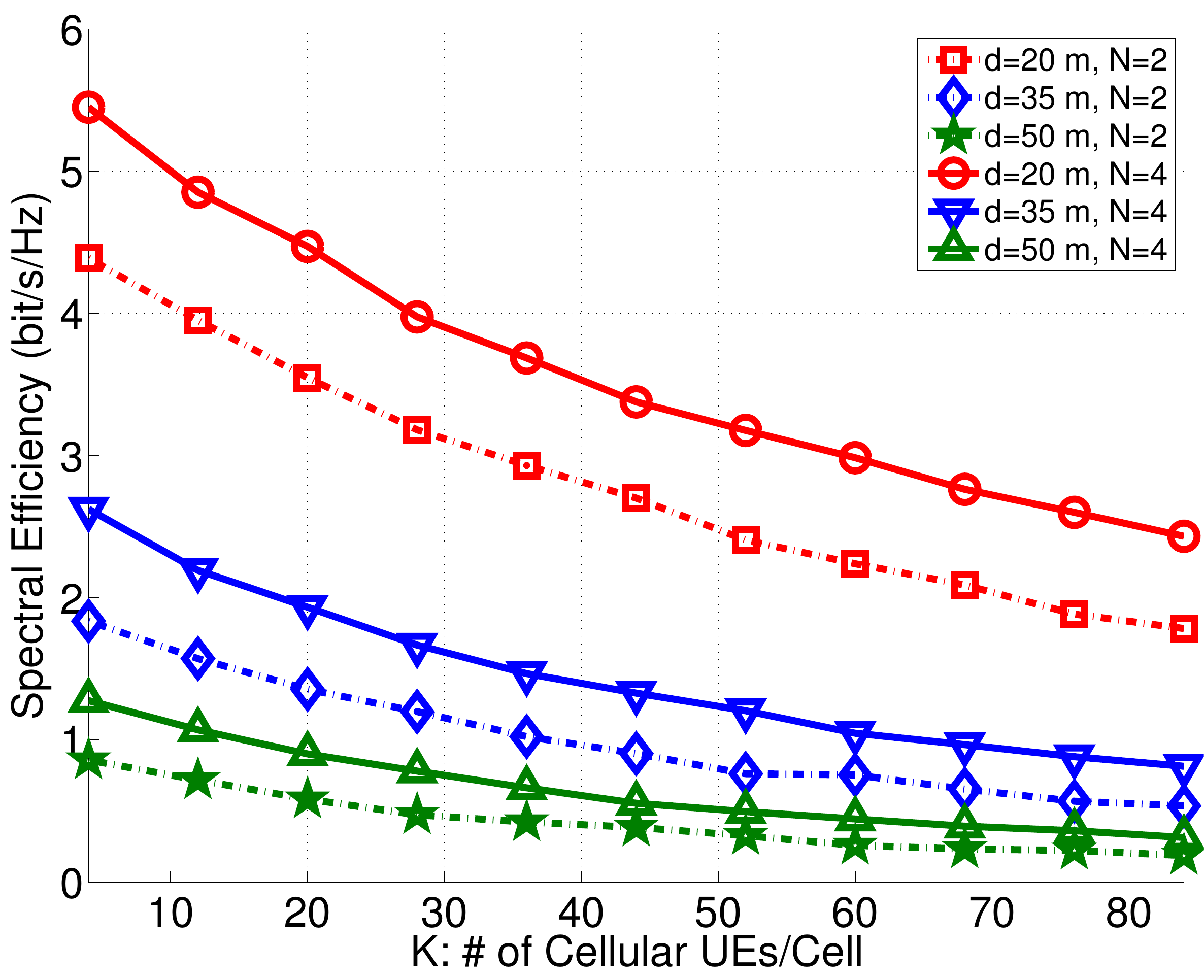}
\caption{Effect of multi-user cellular transmission on D2D spectral efficiency with perfect CSI and $(n_\textrm{c},n_\textrm{d})=(0,0)$.}
\label{fig:5}
\end{figure}

%
%

%
%

Fig. \ref{fig:8} illustrates how PZF parameters $(m_\textrm{c},m_\textrm{d})$ should be chosen to optimize cellular spectral efficiency. Several observations are in order. First, it is beneficial to suppress intra-cell cellular interference since the spectral efficiency gain is large as $m_\textrm{c}$ increases from $0$ to $3$. Second, canceling further other-cell cellular interference (i.e., $m_\textrm{c}=4$) provides additional marginal gain; this gain may not be large enough to justify the additional training and BS coordination overhead. Third, it suffices to cancel about 2 to 4 D2D interferers to get close-to-optimal performance. Of course, the last observation depends on D2D transmitter density $\lambda$ and the number of BS antennas $M$. With larger $\lambda$ and $M$, we may like to cancel a few more D2D interferers.
For D2D spectral efficiency, as pointed out in Section \ref{sec:model}, a general PZF filter is hard to implement in practice; instead, a simple MRC filter with $(n_\textrm{c},n_\textrm{d})=(0,0)$ or a MMSE filter should be used. Therefore, we do not consider optimizing D2D spectral efficiency over PZF parameters.

\begin{figure}
\centering
\includegraphics[width=10cm]{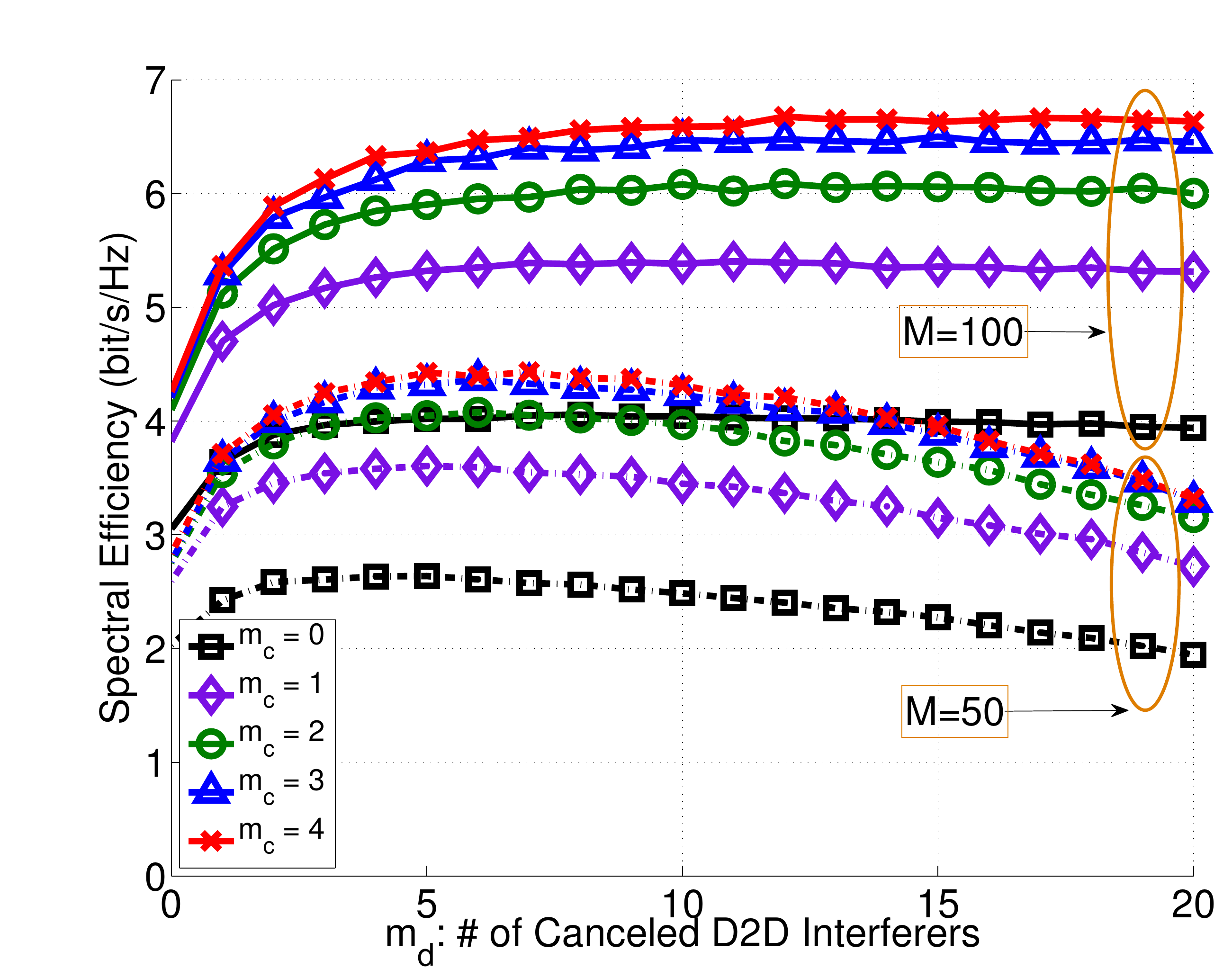}
\caption{Optimizing cellular spectral efficiency over PZF parameters.}
\label{fig:8}
\end{figure}

Fig. \ref{fig:7} illustrates the effect of D2D underlay contamination on the asymptotic cellular spectral efficiency of massive MIMO. Compared to the case without D2D, where only pilot contamination exists, D2D underlay contamination degrades the achievable asymptotic massive MIMO spectral efficiency. For example, with shadowing deviation $\sigma=7$ dB and $\pi R^2_c \lambda=4$, the spectral efficiency is reduced from $6$ bps/Hz to about $3.8$ bps/Hz. Further, the more the underlaid D2D UEs, the smaller the asymptotic cellular spectral efficiency. Fig. \ref{fig:7}  shows that when $\pi R^2_c \lambda\geq 22$ the effect of D2D underlay contamination dominates in the overall effect of pilot and underlay contamination.

\begin{figure}
\centering
\includegraphics[width=10cm]{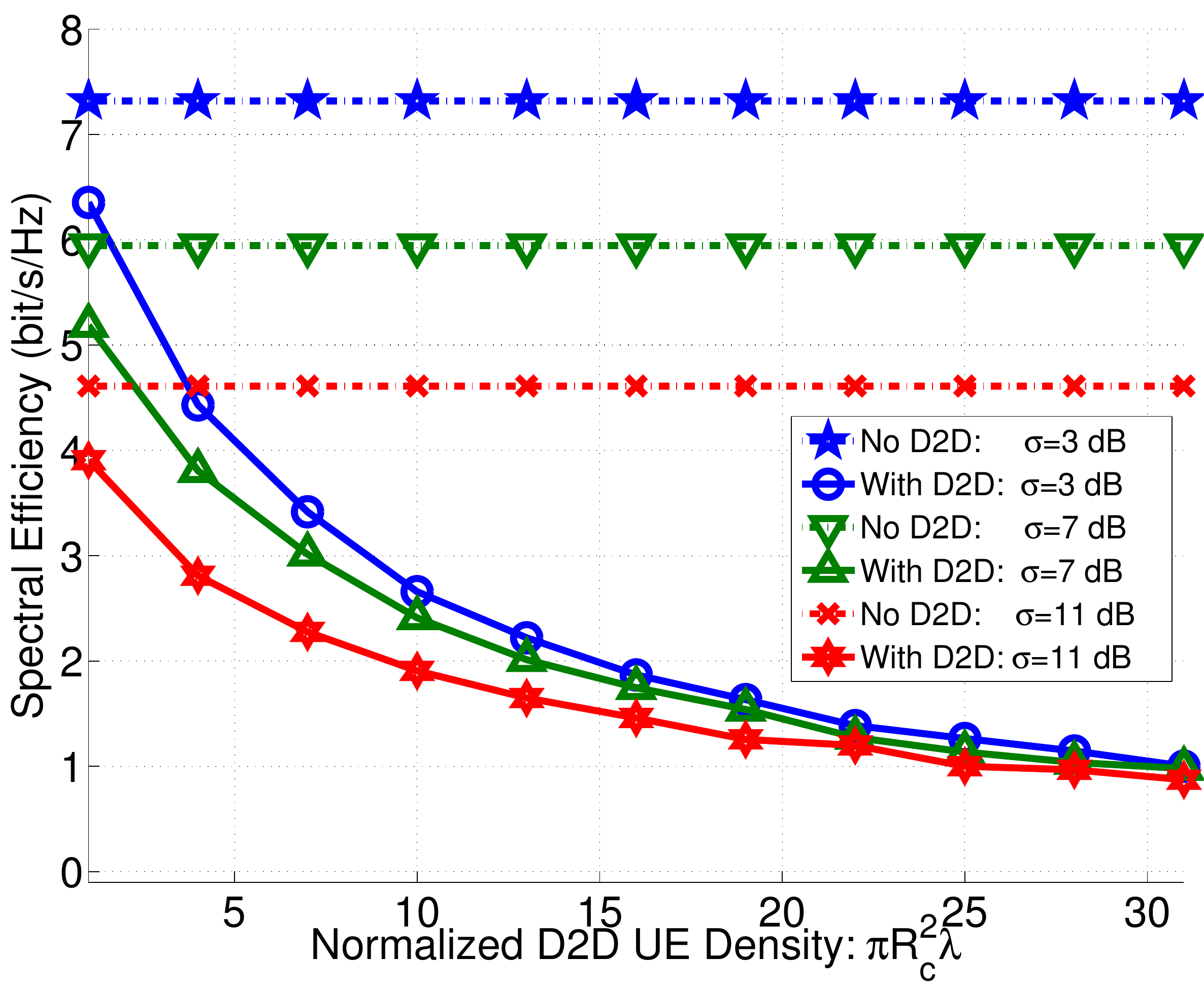}
\caption{Effect of D2D underlay contamination on asymptotic cellular spectral efficiency of massive MIMO with $(m_\textrm{c},m_\textrm{d})=(0,0)$ and $T_\textrm{c}=4$.}
\label{fig:7}
\end{figure}

\section{Conclusions}
\label{sec:conclusions}

In this paper, we have studied the spectral efficiency of a D2D underlaid massive MIMO system under perfect and imperfect CSI. We have found that massive MIMO can efficiently handle the D2D-to-cellular interference. Meanwhile, from an average perspective, D2D links are relatively robust to the cellular-to-D2D interference even if there are quite many cochannel cellular users. D2D interference does make the estimated CSI in massive MIMO less accurate and thus in turn hurts the cellular spectral efficiency. One simple approach to alleviating this effect is to deactivate D2D links in the cellular training phase. Overall, our study suggests that D2D may be much simpler in massive MIMO cellular systems than in current cellular systems.

Spectral efficiency is used as the sole metric throughout this paper. Future work may carry out throughput analysis by taking into account the overhead cost in channel training, scheduling and possibly retransmissions. Also, it is of interest to consider other more sophisticated receivers like MMSE receivers and linear receivers with successive interference cancellation and compare their system-level performance with that of PZF receivers studied in this paper.  

\section*{Acknowledgment}

The authors thank Editor Chia-Han Lee and the anonymous reviewers for their valuable comments and suggestions. The authors also thank Yingxiao Zhang for bringing Lemma 1 in \cite{hamdi2008capacity} to our attention.

\appendix

\subsection{Proof of Proposition \ref{pro:1}}
\label{proof:pro:1}

We show that a PZF receiver with $m_\textrm{c} = m_\textrm{d} =0$, i.e., the MRC receiver, at the BS suffices. With $m_\textrm{c}= m_\textrm{d} =0$, the PZF receiver $\bw^{(\textrm{c})}_{k} = \bh^{(\textrm{c})}_{0k}$. 
By the law of large numbers, $\frac{1}{M} \| \bh_{0k}^{(\textrm{c})} \|^2    \as 1$, $\frac{1}{M}\bh^{(\textrm{c})*}_{0k}  \bh_{b\ell}^{(\textrm{c})} \as 0, \ell \neq k$ or $b\neq 0$. It follows that when conditioned on $\Xi^{(\textrm{c})}_{0k}$ and $x^{(\textrm{c})}_{0k}$,
\begin{align}
\frac{1}{M^2} P_\textrm{c} \Xi^{(\textrm{c})}_{0k} \|x^{(\textrm{c})}_{0k}\|^{-\alpha_\textrm{c}} \| \bh^{(\textrm{c})}_{0k} \|^4 \as P_\textrm{c} \Xi^{(\textrm{c})}_{0k} \|x^{(\textrm{c})}_{0k}\|^{-\alpha_\textrm{c}}.
\end{align}
Also, the noise term normalized by $M^2$ converges as $\frac{1}{M^2}  N_0  \| \bh_{0k}^{(\textrm{c})} \|^2 \as 0$.
Further, interchanging the order of the limit and the finite sum,  the cellular interference normalized by $M^2$ converges as
\begin{align}
&\lim_{M \to \infty} \frac{1}{M^2} \sum_{b=0}^B \sum_{\ell \in \mathcal{K}^{(\textrm{c})}_{bk}  } P_\textrm{c} \Xi_{b\ell}^{(\textrm{c})} \|x_{b\ell}^{(\textrm{c})} \|^{-\alpha_\textrm{c}} |\bh_{0k}^{(\textrm{c})*} \bh_{b\ell}^{(\textrm{c})} |^2  \notag \\
&=  \sum_{b=0}^B \sum_{\ell \in \mathcal{K}^{(\textrm{c})}_{bk}  } P_\textrm{c} \Xi_{b\ell}^{(\textrm{c})} \|x_{b\ell}^{(\textrm{c})} \|^{-\alpha_\textrm{c}} \left(\lim_{M \to \infty}   \frac{1}{M^2} |\bh_{0k}^{(\textrm{c})*} \bh_{b\ell}^{(\textrm{c})} |^2  \right)  \as 0 . 
\end{align}

Next we show that the D2D interference normalized by $M^2$ converges to $0$ as $M \to \infty$. Note that in this case \textit{we cannot directly interchange the order of the limit and the infinite sum} to conclude that it converges to $0$ almost surely. Instead, we can prove its convergence in probability, i.e., for any $\epsilon > 0$,
\begin{align}
\lim_{M \to \infty} \mathbb{P} \left(   \frac{1}{M^2}\sum_{i \in \Phi}  P_\textrm{d} \Xi_i^{(\textrm{d})} \|x_i^{(\textrm{d})}\|^{-\alpha_\textrm{c}} |\bh^{(\textrm{c})*}_{0k}  \bh_i^{(\textrm{d})} |^2    < \epsilon  \right) = 1 .
\label{eq:app:05}
\end{align}
To this end, we partition the D2D transmitters into two groups: one group is composed of those transmitters located within distance $r_o$ from the BS and the other group is composed of those transmitters located with distance grater than $r_o$ from the BS.  Then 
we have
\begin{align}
&\mathbb{P} \left(   \frac{1}{M^2} \sum_{i \in \Phi}  P_\textrm{d} \Xi_i^{(\textrm{d})} \|x_i^{(\textrm{d})}\|^{-\alpha_\textrm{c}} |\bh^{(\textrm{c})*}_{0k}  \bh_i^{(\textrm{d})} |^2   \geq \epsilon  \right)   \leq  \mathbb{P} \left(   X   \geq \frac{\epsilon}{2}  \right)   +\mathbb{P} \left(   Y   \geq \frac{\epsilon}{2}  \right)  .
\label{eq:4}
\end{align} 
where 
\begin{align}
X&= \frac{1}{M^2} \sum_{i \in \Phi \cap \mathcal{B}^c (o,r_o) }  P_\textrm{d} \Xi_i^{(\textrm{d})} \|x_i^{(\textrm{d})}\|^{-\alpha_\textrm{c}} |\bh^{(\textrm{c})*}_{0k}  \bh_i^{(\textrm{d})} |^2   \\
Y&= \frac{1}{M^2} \sum_{i \in \Phi \cap \mathcal{B} (o,r_o) }  P_\textrm{d}  \Xi_i^{(\textrm{d})}  \|x_i^{(\textrm{d})}\|^{-\alpha_\textrm{c}} |\bh^{(\textrm{c})*}_{0k}  \bh_i^{(\textrm{d})} |^2 .
\end{align}
Next we show in two steps that the two terms on the right hand side of (\ref{eq:4}) can be made arbitrarily small by choosing $M$ large enough.

\textit{Step 1.} For the first term on the right hand side of (\ref{eq:4}), we have
\begin{align}
\mathbb{P} \left(  X   
\geq \frac{\epsilon}{2}  \right)  &\leq \frac{2}{\epsilon} \mathbb{E} \left[  X \right] =  \frac{2P_\textrm{d}\bar{\Xi}}{\epsilon M}   \mathbb{E} [   \sum_{i \in \Phi \cap \mathcal{B}^c (o,r_o) }   \|x_i^{(\textrm{d})}\|^{-\alpha_\textrm{c}}   ]  \label{eq:app:02} \\
&=  \frac{4\pi \lambda P_\textrm{d}\bar{\Xi}}{\epsilon M}   \int_{r_o}^{\infty}  r^{1-\alpha_\textrm{c}}  \dint   r =  \frac{4\pi \lambda P_\textrm{d}\bar{\Xi}}{\epsilon M}  \frac{1}{(\alpha_\textrm{c} -2)r_o^{\alpha_\textrm{c} -2}}, \label{eq:app:04} 
\end{align} 
where the first inequality is due to the Markov inequality, the first equality is due to $\mathbb{E}[\Xi_i^{(\textrm{d})} ] = \bar{\Xi}$ and $\mathbb{E}[|\bh_{0k}^{(\textrm{c})*}  \bh_i^{(\textrm{d})} |^2 ] = M$, and the second equality is due to Campbell's formula \cite{baccelli2003elements}, and we use the assumption that $\alpha_\textrm{c} >2$ in the last equality.
It follows that that there exists $M_1$ large enough such that for all $M \geq M_1$,
\begin{align}
\mathbb{P} \left(  \frac{1}{M^2} \!\!\!\! \sum_{i \in \Phi \cap \mathcal{B}^c (o,r_o) } \!\!\!\!  P_\textrm{d} \Xi_i^{(\textrm{d})} \|x_i^{(\textrm{d})}\|^{-\alpha_\textrm{c}} |\bh^{(\textrm{c})*}_{0k}  \bh_i^{(\textrm{d})} |^2    \geq \frac{\epsilon}{2}  \right) < \frac{\delta}{2},
\label{eq:5}
\end{align}
where $\delta$ is an arbitrary small positive constant.

\textit{Step 2.} For the second term on the right hand side of (\ref{eq:4}), 
\begin{align}
& \mathbb{P} \left(  Y   \geq \frac{\epsilon}{2}  \right)  = \mathbb{P} \left(  Y   \geq \frac{\epsilon}{2} \big|  E \right)  \mathbb{P} \left(   E \right)   + \mathbb{P} \left(  Y       \geq \frac{\epsilon}{2} \big|  E^c \right)  \mathbb{P} \left( E^c \right) . \notag 
\end{align} 
Here $E = \{|\Phi \cap \mathcal{B} (o,r_o)| \leq C\}$, where $C$ is a constant to be chosen, and $E^c$ is the complement of $E$.

\textit{Step 2(a).} Note that the number of D2D transmitters in $\mathcal{B} (o,r_o)$, denoted as $ |\Phi \cap \mathcal{B} (o,r_o)|$, is Poisson distributed with mean $\lambda \pi r_o^2$. We can choose $C$ large enough but finite such that
\begin{align}
& \mathbb{P} \left(  Y       \geq \frac{\epsilon}{2} \big|  E^c \right)  \mathbb{P} \left( E^c \right)  \leq \mathbb{P} \left(  E^c \right)   = 1 - \sum_{n=0}^{C} \frac{ (\lambda \pi r_o^2)^n }{n!} e^{ - \lambda \pi r_o^2}   < \frac{\delta}{4} .
\label{eq:6}
\end{align}
Note that the choice of $C$ depends on $\delta$, and thus we write $C(\delta)$ to explicitly spell out this dependency.

\textit{Step 2(b).} Since $\frac{1}{M}\bh^{(\textrm{c})*}_{0k}  \bh_i^{(\textrm{d})} \as 0$, conditioning on $|\Phi \cap \mathcal{B} (o,r_o)| \leq C(\delta)$, we have
$
Y   \as 0.
$
It follows that there exists $M_2$ large enough such that for all $M \geq M_2$,
\begin{align}
&\mathbb{P} \left(  Y     \geq \frac{\epsilon}{2} \big|  E \right)  \mathbb{P} \left(   E \right)   \leq \mathbb{P} \left(  Y  \geq \frac{\epsilon}{2} \big|  E \right) < \frac{\delta}{4} .
\label{eq:7}
\end{align}
 
Combining (\ref{eq:5}), (\ref{eq:6}) and (\ref{eq:7}) obtained in Steps 1, 2(a) and 2(b) respectively, we can find $C(\delta)$ large enough such that for all $M \geq \max \{M_1,M_2\}$,
$$
\mathbb{P} \left(    \frac{1}{M^2} \sum_{i \in \Phi }  P_\textrm{d} \Xi_i^{(\textrm{d})} \|x_i^{(\textrm{d})}\|^{-\alpha_\textrm{c}} |\bh^{(\textrm{c})*}_{0k}  \bh_i^{(\textrm{d})} |^2    \geq \epsilon  \right) \leq \frac{\delta}{2} + \frac{\delta}{4} + \frac{\delta}{4} =  \delta .
$$
As $\delta$ is an arbitrary positive constant, we conclude that (\ref{eq:app:05}) holds. This completes the proof.

\subsection{Proof of Proposition \ref{pro:4}}
\label{proof:pro:4}

When the transmit powers of cellular UEs scale as $P_\textrm{c}/M$, as in the proof of Prop. \ref{pro:1}, we can show that as $M \to \infty$, the desired signal power $S^{(\textrm{c})}_k$, the cellular interference power $I^{(\textrm{c}\to \textrm{c})}_k$,  and the noise power $ \| \bw_k^{(\textrm{c})}  \|^2 N_0$ normalized by $M$ converge as follows.  
\begin{align}
&\lim_{M \to \infty} \frac{1}{M}  S^{(\textrm{c})}_k \as  P_\textrm{c} \Xi_{0k}^{(\textrm{c})} \|x_{0k}^{(\textrm{c})}\|^{-\alpha_\textrm{c}},  
\lim_{M \to \infty} \frac{1}{M}  I^{(\textrm{c}\to \textrm{c})}_k \as  0,   \lim_{M \to \infty} \frac{1}{M} \| \bw_k^{(\textrm{c})}  \|^2 N_0 \as  N_0 .
\end{align}

Next we study how the D2D interference power behaves asymptotically. To this end, we first prove the following lemma.
\begin{lem}
Consider a PPP $\{x_i\}$ in $\mathbb{R}^2$ with density $\lambda$. Each point $x_i$ is associated with a sequence of non-negative random marks $\{F_{i,m}\}_m$. For each $m$, $\{F_{i,m}\}_i$ are i.i.d., and $\mathbb{E}[F_{i,m}]<\infty$. Further, $F_{i,m} \cd F_{i,\infty}$. If $\mathbb{E}[F_{i,m}^{\frac{2}{\alpha}}] \to \mathbb{E}[F_{i,\infty}^{\frac{2}{\alpha}}]$, where  $\alpha > 2$ is a constant, then $Y_m \triangleq \sum_i \| x_i \|^{-\alpha} F_{i,m}$ is well defined and $Y_m \cd Y_\infty \triangleq \sum_i \| x_i \|^{-\alpha}  F_{i,\infty}$.
\label{lem:3}
\end{lem}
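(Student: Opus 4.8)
The plan is to work entirely through Laplace transforms, exploiting the fact that the distribution of $Y_m$ sees the marks only through the single number $\mathbb{E}[F_{i,m}^{2/\alpha}]$ --- exactly the quantity the hypothesis controls. First I would compute, via the Laplace functional of the independently marked PPP $\{x_i\}$ \cite{baccelli2003elements} followed by a switch to polar coordinates,
\begin{align}
\mathbb{E}\big[e^{-s Y_m}\big] = \exp\!\left(-2\pi\lambda\int_0^\infty \big(1 - \mathbb{E}[e^{-s r^{-\alpha} F_{i,m}}]\big)\, r\, \dint r\right), \qquad s > 0 .
\end{align}
Pulling the expectation over the mark outside the $r$-integral by Tonelli and, for a fixed value $F$ of the mark, substituting $t = sFr^{-\alpha}$ reduces the inner integral to $\tfrac12\Gamma(1-2/\alpha)\,s^{2/\alpha}F^{2/\alpha}$, where I use that $\alpha>2$ makes $2/\alpha\in(0,1)$ so that $\int_0^\infty(1-e^{-t})t^{-2/\alpha-1}\dint t = \tfrac{\alpha}{2}\Gamma(1-2/\alpha)<\infty$. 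Taking expectation over the mark then gives the closed form
\begin{align}
\mathbb{E}\big[e^{-s Y_m}\big] = \exp\!\left(-\pi\lambda\,\Gamma(1-2/\alpha)\, s^{2/\alpha}\, \mathbb{E}[F_{i,m}^{2/\alpha}]\right) .
\end{align}

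Well-definedness then follows immediately: since $2/\alpha<1$ and $\mathbb{E}[F_{i,m}]<\infty$, Jensen's inequality gives $\mathbb{E}[F_{i,m}^{2/\alpha}]\le(\mathbb{E}[F_{i,m}])^{2/\alpha}<\infty$, so the exponent above is finite for every $s>0$ and tends to $0$ as $s\downarrow 0$; hence $\mathbb{E}[e^{-sY_m}]\to 1$, which forces $\mathbb{P}(Y_m<\infty)=1$. The limit $\mathbb{E}[F_{i,\infty}^{2/\alpha}]$ is finite by hypothesis (in any case $\mathbb{E}[F_{i,\infty}^{2/\alpha}]\le\liminf_m\mathbb{E}[F_{i,m}^{2/\alpha}]$ by Fatou, using $F_{i,m}\cd F_{i,\infty}$), so the same reasoning shows $Y_\infty=\sum_i\|x_i\|^{-\alpha}F_{i,\infty}$ is well-defined with Laplace transform $\exp(-\pi\lambda\Gamma(1-2/\alpha)s^{2/\alpha}\mathbb{E}[F_{i,\infty}^{2/\alpha}])$.

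It then remains to pass to the limit and invoke the continuity theorem for Laplace transforms of non-negative random variables. For each fixed $s>0$, the hypothesis $\mathbb{E}[F_{i,m}^{2/\alpha}]\to\mathbb{E}[F_{i,\infty}^{2/\alpha}]$ yields
\begin{align}
\mathbb{E}\big[e^{-s Y_m}\big]\to \exp\!\left(-\pi\lambda\,\Gamma(1-2/\alpha)\, s^{2/\alpha}\, \mathbb{E}[F_{i,\infty}^{2/\alpha}]\right) = \mathbb{E}\big[e^{-s Y_\infty}\big],
\end{align}
and because the limiting transform is continuous at $s=0$ with value $1$, we conclude $Y_m\cd Y_\infty$.

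I do not anticipate a genuine difficulty, but the part that must be gotten right --- and that a first attempt is likely to overlook --- is the choice of functional. A naive truncation of the sum at radius $R$ would bound the far part by its mean $2\pi\lambda\,\mathbb{E}[F_{i,m}]\int_R^\infty r^{1-\alpha}\dint r$, which is uniformly small only if $\mathbb{E}[F_{i,m}]$ is bounded in $m$ --- something the hypotheses do not provide. The $2/\alpha$-moment circumvents this: it is at once the only feature of the marks that the PPP's Laplace transform records and the feature the hypothesis pins down. Everything else is routine --- the Tonelli interchange (all integrands are non-negative) and the convergence of the elementary integral (immediate since $\alpha>2$).
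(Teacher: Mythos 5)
Your proof is correct and follows essentially the same route as the paper's: both compute the Laplace transform of $Y_m$ via the Laplace functional of the independently marked PPP, observe that it depends on the marks only through $\mathbb{E}[F_{i,m}^{2/\alpha}]$, and conclude by the continuity theorem for Laplace transforms. (Incidentally, your constant $\pi\lambda\,\Gamma(1-2/\alpha)$ is the correct one for the kernel $\|x\|^{-\alpha}$ in $\mathbb{R}^2$; the paper's $C(\alpha)=\pi\lambda\,\Gamma(1-1/\alpha)$ in this proof appears to be a typo.)
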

\begin{proof}
By definition, $Y_m$ equals the value of a shot-noise random field evaluated at the origin. The shot-noise random field is associated with a marked PPP. As $\mathbb{E}[F_{i,m}]<\infty$, $Y_m$ is almost surely finite \cite{baccelli2009stochastic} and thus is well defined. To show  $Y_m \cd Y$, we show that the Laplace transform of the former converges to that of the latter as follows.
\begin{align}
\lim_{m\to\infty} L_{Y_m} (s) &=\lim_{m\to\infty}   \mathbb{E} [ e^{-sY_m}] 
= \lim_{m\to\infty} \exp \left( -C(\alpha) s^{\frac{2}{\alpha}} \mathbb{E}[ F_{1,m}^{\frac{2}{\alpha}} ] \right) \notag \\
&= \exp \left( -C(\alpha) s^{\frac{2}{\alpha}} \lim_{m\to\infty} \mathbb{E}[ F_{1,m}^{\frac{2}{\alpha}} ] \right) = \exp \left( -C(\alpha) s^{\frac{2}{\alpha}}  \mathbb{E}[ F_{1,\infty}^{\frac{2}{\alpha}} ] \right) = L_{Y_\infty} (s) ,
\end{align}
where $C(\alpha) = \pi \lambda \Gamma(1 - {1}/{\alpha})$, we have used the Laplace functional of the PPP $\Phi$ \cite{baccelli2009stochastic} in the second equality and the assumption $\mathbb{E}[F_{i,m}^{\frac{2}{\alpha}}] \to \mathbb{E}[F_{i,\infty}^{\frac{2}{\alpha}}]$ in the penultimate equality.
\end{proof}
To apply Lemma \ref{lem:3}, with a slight abuse of notation, we also denote by $I^{(\textrm{d}\to \textrm{c})}_k$ the asymptotic interference power $\sum_{i \in \Phi_k^{(\textrm{c})}}    \|x_i^{(\textrm{d})}\|^{-\alpha_\textrm{c}} F_{i,\infty} $ and 
$
I^{(\textrm{d}\to \textrm{c})}_k (M) = \sum_{i \in \Phi_k^{(\textrm{c})}}  \|x_i^{(\textrm{d})}\|^{-\alpha_\textrm{c}}  F_{i,M}  ,
$
where $F_{i,\infty} = P_\textrm{d} \Xi_i^{(\textrm{d})}  \eta_i$ and $F_{i,M} = P_\textrm{d} \Xi_i^{(\textrm{d})}   \frac{1}{M} |\bw^{(\textrm{c})*}_k  \bh^{(\textrm{d})}_i|^2 $. 
By the Central Limit Theorem, $\frac{1}{\sqrt{M}}  \bw^{(\textrm{c})*}_k  \bh^{(\textrm{d})}_i \cd \mathcal{CN}(0,1)$, where $\cd$ denotes convergence in distribution. It follows that $F_{i,M} \cd F_{i,\infty}$. Note that
\begin{align}
I^{(\textrm{d}\to \textrm{c})}_k (M)  &= \sum_{i \in \Phi}    \|x_i^{(\textrm{d})}\|^{-\alpha_\textrm{c}} F_{i,M} - \sum_{i \in \Phi \backslash \Phi_k^{(\textrm{c})}}    \|x_i^{(\textrm{d})}\|^{-\alpha_\textrm{c}} F_{i,M}  \label{eq:rv2:1} \\
I^{(\textrm{d}\to \textrm{c})}_k &= \sum_{i \in \Phi}    \|x_i^{(\textrm{d})}\|^{-\alpha_\textrm{c}} F_{i,\infty} - \sum_{i \in \Phi \backslash \Phi_k^{(\textrm{c})}}    \|x_i^{(\textrm{d})}\|^{-\alpha_\textrm{c}} F_{i,\infty}. \label{eq:rv2:2} 
\end{align}
For any finite fixed $m_\textrm{d}$, the second term on the right hand side of (\ref{eq:rv2:1}) converges in distribution to the second term on the right hand side of (\ref{eq:rv2:2}). It remains to show that $\sum_{i \in \Phi}    \|x_i^{(\textrm{d})}\|^{-\alpha_\textrm{c}} F_{i,M} \cd \sum_{i \in \Phi}    \|x_i^{(\textrm{d})}\|^{-\alpha_\textrm{c}} F_{i,\infty}$, which holds if $\mathbb{E}[F_{i,M}^{\frac{2}{\alpha_\textrm{c}}}] \to \mathbb{E}[F_{i,\infty}^{\frac{2}{\alpha_\textrm{c}}}]$ by Lemma \ref{lem:3}. By Theorem 5.5.2 in \cite{durrett2010probability}, $\mathbb{E}[F_{i,M}^{\frac{2}{\alpha_\textrm{c}}}] \to \mathbb{E}[F_{i,\infty}^{\frac{2}{\alpha_\textrm{c}}}]$ if and only if $\{F_{i,M}^{\frac{2}{\alpha_\textrm{c}}}\}$ are uniformly integrable. A sufficient condition for $\{X_m\}_m$ to be uniformly integrable is that $\mathbb{E}[|X_m|^p]\leq C<\infty, \forall m$, where $p>1$ \cite{durrett2010probability}.
In our case,
\begin{align}
\mathbb{E}[ |F_{i,M}^{\frac{2}{\alpha_\textrm{c}}} |^{\frac{\alpha_\textrm{c}}{2}} ] = \mathbb{E} [F_{i,M}] = \mathbb{E} \left[ P_\textrm{d} \Xi_i^{(\textrm{d})}   \frac{1}{M} |\bw^{(\textrm{c})*}_k  \bh^{(\textrm{d})}_i|^2 \right] \leq P_\textrm{d} \bar{\Xi} <\infty.
\end{align}  
It follows that $\{F_{i,M}^{\frac{2}{\alpha_\textrm{c}}}\}$ are uniformly integrable and thus $\mathbb{E}[F_{i,M}^{\frac{2}{\alpha_\textrm{c}}}] \to \mathbb{E}[F_{i,\infty}^{\frac{2}{\alpha_\textrm{c}}}]$.
To sum up, the D2D-to-cellular interference converges in distribution to  $\sum_{i \in \Phi_k^{(\textrm{c})}}  P_\textrm{d} \Xi_i^{(\textrm{d})} \|x_i^{(\textrm{d})}\|^{-\alpha_\textrm{c}} \eta_i$. Therefore, the spectral efficiency of cellular UE $k$ converges as in (\ref{eq:14}).

The lower bound (\ref{eq:15}) is due to Jensen's inequality:
\begin{align}
&\mathbb{E} \left[ \log \left( 1 + \frac{  P_\textrm{c} \Xi_{0k}^{(\textrm{c})} \|x_{0k}^{(\textrm{c})}\|^{-\alpha_\textrm{c}} }{ I^{(\textrm{d}\to \textrm{c})}_k + N_0 }  \right)  \right] 
\geq \log \left( 1 + \frac{ P_\textrm{c} \Xi_{0k}^{(\textrm{c})} \|x_{0k}^{(\textrm{c})}\|^{-\alpha_\textrm{c}} }{ \mathbb{E}[ I^{(\textrm{d}\to \textrm{c})}_k ] + N_0 }  \right) .
\label{eq:app:14}
\end{align}
As the BS uses $m_\textrm{d}$ degrees of freedom to cancel the interference from the $m_\textrm{d}$ nearest D2D transmitters when detecting the signal of cellular UE $k$, $\Phi_k^{(\textrm{c})}$ consists of the points from the original PPP $\Phi$ except the nearest $m_\textrm{d}$ points to the origin. Let us order the points in $\Phi$ based on their distances to the BS in an ascending manner, i.e., $\|x_1^{(\textrm{d})} \| \leq \|x_2^{(\textrm{d})} \| \leq ... $. Then
\begin{align}
\mathbb{E} [ I^{(\textrm{d}\to \textrm{c})}_k ] &= \mathbb{E}_{\Phi} [\sum_{i \in \Phi_k^{(\textrm{c})}}  P_\textrm{d} \mathbb{E} [\Xi_i^{(\textrm{d})}] \|x_i^{(\textrm{d})}\|^{-\alpha_\textrm{c}}  \mathbb{E} [ \eta_i ] ]  = P_\textrm{d} \bar{\Xi} \mathbb{E}_{\Phi} [\sum_{i \in \Phi_k^{(\textrm{c})}}  \|x_i^{(\textrm{d})}\|^{-\alpha_\textrm{c}}   ]  \notag \\
&=P_\textrm{d} \bar{\Xi} \mathbb{E}_{\Phi} [\sum_{i=m_\textrm{d}+1}^{\infty}   \|x_i^{(\textrm{d})}\|^{-\alpha_\textrm{c}}   ].  
\label{eq:app:13}
\end{align}
Conditioning on the location  $x_{m_\textrm{d}}^{(\textrm{d})} = (r, \theta)$ of the $m_\textrm{d}$-th nearest point in $\Phi$, 
\begin{align}
\mathbb{E} [ I^{(\textrm{d}\to \textrm{c})}_k | x_{m_\textrm{d}}^{(\textrm{d})} = (r, \theta) ] 
&=P_\textrm{d} \bar{\Xi} \mathbb{E}_{\Phi} \left[ \sum_{i=m_\textrm{d}+1}^{\infty}  \|x_i^{(\textrm{d})}\|^{-\alpha_\textrm{c}}  \big| x_{m_\textrm{d}}^{(\textrm{d})} = (r, \theta)  \right]  \notag \\
&= P_\textrm{d} \bar{\Xi} 2\pi \lambda \int_{r}^{\infty} t^{1-\alpha_\textrm{c}} \dint t   =  \frac{P_\textrm{d} \bar{\Xi} 2\pi \lambda}{ \alpha_\textrm{c} - 2}   r^{2-\alpha_\textrm{c}},
\label{eq:app:06}
\end{align}
where the second equality is due to Campbell formula \cite{baccelli2003elements}. To decondition on $x_{m_\textrm{d}}^{(\textrm{d})} = (r, \theta)$, we need the PDF of $\| x_{m_\textrm{d}}^{(\textrm{d})} \|$ derived in \cite{haenggi2005distances}:
\begin{align}
f_{\|x_{m_\textrm{d}}^{(\textrm{d})} \|} (r) = \frac{ 2(\lambda \pi r^2)^{m_\textrm{d}} }{r (m_\textrm{d} - 1)!}  e^{-\lambda \pi r^2} , \quad r \geq 0.
\end{align}


Using the fact that $x_{m_\textrm{d}}^{(\textrm{d})} $ is uniform in direction and $f_{\|x_{m_\textrm{d}}^{(\textrm{d})} \|} (r)$, we decondition on $x_{m_\textrm{d}}^{(\textrm{d})} $ in (\ref{eq:app:06}) and obtain
\begin{align}
\mathbb{E} [ I^{(\textrm{d}\to \textrm{c})}_k  ] 
&=  \frac{P_\textrm{d} \bar{\Xi}2\pi \lambda}{ \alpha_\textrm{c} - 2} \int_{0}^{\infty}    r^{2-\alpha_\textrm{c}} f_{\|x_{m_\textrm{d}}^{(\textrm{d})} \|} (r) \dint r  \notag \\
&= \frac{P_\textrm{d} \bar{\Xi}2\pi \lambda}{ \alpha_\textrm{c} - 2} \cdot \frac{1}{(m_\textrm{d}-1)!} (\lambda \pi)^{\frac{\alpha_\textrm{c}}{2} - 1} \int_{0}^{\infty}    t^{m_\textrm{d}-\frac{\alpha_\textrm{c}}{2}} e^{-t} \dint t, \label{eq:app:07} 
\end{align}
where we have changed variable $t=\lambda \pi r^2$ in (\ref{eq:app:07}). By the definition of the Gamma function,
\begin{align}
\mathbb{E} [ I^{(\textrm{d}\to \textrm{c})}_k  ] 
&= \frac{2 P_\textrm{d}\bar{\Xi} }{ \alpha_\textrm{c} - 2}  (\pi \lambda)^{\frac{\alpha_\textrm{c}}{2}}  \frac{ \Gamma ( m_\textrm{d} + 1 -\frac{\alpha_\textrm{c}}{2} ) }{\Gamma(m_\textrm{d})}  ,
\label{eq:app:11}
\end{align}
Plugging (\ref{eq:app:11}) into (\ref{eq:app:14}) yields the desired lower bound (\ref{eq:15}).

\subsection{Proof of Proposition \ref{pro:2}}
\label{proof:pro:2}

Using the convexity of the function $\log ( 1 + \frac{1}{x} )$ and applying Jensen's inequality \cite{ngo2011energy},
\begin{align}
R^{(\textrm{c})}_{k} \geq  R^{(\textrm{c,lb})}_{k}  &= \log \left( 1 + \left( \mathbb{E} \left[ \frac{1}{\sinr_k^{(\textrm{c})}}  \right] \right)^{-1}  \right ) \notag \\
&= \log \left( 1 + \left( \mathbb{E} \left[\frac{1}{S^{(\textrm{c})}_k} \right] \cdot ( \mathbb{E} [I^{(\textrm{c}\to \textrm{c})}_k ] + \mathbb{E} [I^{(\textrm{d}\to \textrm{c})}_k ] + N_0 ) \right)^{-1} \right) .
\label{eq:app:08}
\end{align}
In the following three steps, we calculate $\mathbb{E} \left[\frac{1}{S^{(\textrm{c})}_k} \right]$, $\mathbb{E} [I^{(\textrm{c}\to \textrm{c})}_k ]$, and $\mathbb{E} [I^{(\textrm{d}\to \textrm{c})}_k ]$, respectively. Without loss of generality, we assume that $\bw_k^{(\textrm{c})}$ is normalized, i.e., $\|\bw_k^{(\textrm{c})}\|=1$.

\textit{Step 1: calculating $\mathbb{E} \left[\frac{1}{S^{(\textrm{c})}_k} \right]$.} By definition $\| \bw_k^{(\textrm{c})*} \bh_{0k}^{(\textrm{c})} \|^2$ is the squared norm of the projection of the vector $\bh_{0k}^{(\textrm{c})}$ onto the subspace orthogonal to the one spanned by the channel vectors of canceled interferers. The space is of $M - m_\textrm{c} - m_\textrm{d}$ dimensions and is independent of $\bh_{0k}^{(\textrm{c})}$. It follows that $\| \bw_k^{(\textrm{c})*} \bh_{0k}^{(\textrm{c})} \|^2 \sim \chi^2_{2(M - m_\textrm{c} - m_\textrm{d})}$, i.e., $\| \bw_k^{(\textrm{c})*} \bh_{0k}^{(\textrm{c})} \|^2 \sim \Gamma (M - m_\textrm{c} - m_\textrm{d}, 1)$. Therefore, conditioned on $\Xi_{0k}^{(\textrm{c})}$ and $x_{0k}^{(\textrm{c})}$, $\frac{1}{S^{(\textrm{c})}_k} $ is inverse-Gamma distributed and its mean equals
\begin{align}
\mathbb{E} \left[\frac{1}{S^{(\textrm{c})}_k} \right]  = \frac{1}{P_\textrm{c} \Xi_{0k}^{(\textrm{c})} \|x_{0k}^{(\textrm{c})}\|^{-\alpha_\textrm{c}}(M - m_\textrm{c} - m_\textrm{d}-1)}.
\label{eq:app:09}
\end{align}

\textit{Step 2: calculating $\mathbb{E} [I^{(\textrm{c}\to \textrm{c})}_k ]$.} Since $\|\bw_k^{(\textrm{c})}\|=1$ and $\bw_k^{(\textrm{c})}$ is independent of $\bh_{b\ell}^{(\textrm{c})}, \forall \ell \in \mathcal{K}^{(\textrm{c})}_{bk}, \forall b $,  $\bw_k^{(\textrm{c})*} \bh_{b\ell}^{(\textrm{c})}$ is a linear combination of complex Gaussian random variables and thus is distributed as $\mathcal{CN}(0,1)$. It follows that $|\bw_k^{(\textrm{c})*} \bh_{b\ell}^{(\textrm{c})} |^2 \sim \textrm{Exp}(1)$ and
\begin{align}
\mathbb{E} [ I^{(\textrm{c}\to \textrm{c})}_k ] &= \mathbb{E} [\sum_{b=0}^B \sum_{\ell \in \mathcal{K}^{(\textrm{c})}_{bk}  } P_\textrm{c} \Xi_{b\ell}^{(\textrm{c})} \|x_{b\ell}^{(\textrm{c})} \|^{-\alpha_\textrm{c}} |\bw_k^{(\textrm{c})*} \bh_{b\ell}^{(\textrm{c})} |^2 ] =   \sum_{b=0}^B \sum_{\ell \in \mathcal{K}^{(\textrm{c})}_{bk}  } P_\textrm{c} \bar{\Xi} \|x_{b\ell}^{(\textrm{c})} \|^{-\alpha_\textrm{c}} .
\label{eq:app:10}
\end{align}

\textit{Step 3: calculating $\mathbb{E} [I^{(\textrm{d}\to \textrm{c})}_k ]$.} With a similar argument as in Step 2, we have $|\bw_k^{(\textrm{c})*} \bh_i^{(\textrm{d})} |^2 \sim \textrm{Exp}(1)$ and
\begin{align}
\mathbb{E} [ I^{(\textrm{d}\to \textrm{c})}_k ] &= \mathbb{E}_{\Phi} [\sum_{i \in \Phi_k^{(\textrm{c})}}  P_\textrm{d} \Xi_i^{(\textrm{d})} \|x_i^{(\textrm{d})}\|^{-\alpha_\textrm{c}} \mathbb{E}_{\bh} [|\bw_k^{(\textrm{c})*}  \bh_i^{(\textrm{d})} |^2] ] = P_\textrm{d} \bar{\Xi} \mathbb{E}_{\Phi} [\sum_{i \in \Phi_k^{(\textrm{c})}}   \|x_i^{(\textrm{d})}\|^{-\alpha_\textrm{c}}   ] .
\end{align}
The remaining steps for calculating $\mathbb{E} [I^{(\textrm{d}\to \textrm{c})}_k ]$ follow the same steps in the proof of Prop. \ref{pro:4}, i.e., the steps after (\ref{eq:app:13}), and $\mathbb{E} [I^{(\textrm{d}\to \textrm{c})}_k ]$ is given in  (\ref{eq:app:11}).

Finally, plugging  (\ref{eq:app:09}), (\ref{eq:app:10}) and (\ref{eq:app:11}) into (\ref{eq:app:08}) completes the proof.

\subsection{Proof of Lemma \ref{lem:1}}
\label{proof:lem:1}
Since the $M \times T_\textrm{c}$ dimensional noise matrix $\bV_0^{(\textrm{c})}$ consists of i.i.d. $\mathcal{CN}(0,N_0)$ elements and $\bq_{\tilde{k}}^{(\textrm{c})}$ is an orthonomal vector, $\bV_0^{(\textrm{c})} \bq_{\tilde{k}}^{(\textrm{c})}$ also consists of i.i.d. $\mathcal{CN}(0,N_0)$ elements. Similarly, $\bu_{br}^{(\textrm{d})*} \bq_{\tilde{k}}^{(\textrm{c})} \sim \mathcal{CN}(0,1)$.  Using the independence of $\bh_{br}^{(\textrm{d})}, \bu_{br}^{(\textrm{d})*} \bq_k^{(\textrm{c})}$ and $\bV_0^{(\textrm{c})} \bq_{\tilde{k}}^{(\textrm{c})}$, we have $\mathbb{E} [ \tilde{\bv}_k^{(s)}  ]=0$, and
\begin{align}
\mathbb{E} [ \tilde{\bv}_k^{(s)} \tilde{\bv}_k^{(s)*} ] &= \frac{ 1 }{ T_\textrm{c} P_\textrm{c} \Xi^{(s)}_{0k} \|x^{(s)}_{0k}\|^{-\alpha_\textrm{c}} }  \bigg( \sum_{b=0}^{B+1} \sum_{r \in \Phi_{bk}^{(\textrm{c})}} P_\textrm{d}\Xi^{(\textrm{d})}_{br}  \|x^{(\textrm{d})}_{br}\|^{-\alpha_\textrm{c}} \times  \notag \\
&\quad\mathbb{E} [ \bh_{br}^{(\textrm{d})} \bu_{br}^{(\textrm{d})*} \bq_{\tilde{k}}^{(\textrm{c})}   \bq_{\tilde{k}}^{(\textrm{c})*} \bu_{br}^{(\textrm{d})} \bh_{br}^{(\textrm{d})*}  ]   + \mathbb{E} [ \bV^{(\textrm{c})} \bq_{\tilde{k}}^{(\textrm{c})} \bq_{\tilde{k}}^{(\textrm{c})*}  \bV^{(\textrm{c})*} ] \bigg) \notag \\
&= \frac{ 1 }{ T_\textrm{c} P_\textrm{c} \Xi^{(s)}_{0k} \|x^{(s)}_{0k}\|^{-\alpha_\textrm{c}} } \left( \sum_{b=0}^{B+1} \sum_{r \in \Phi_{bk}^{(\textrm{c})}} P_\textrm{d}\Xi^{(\textrm{d})}_{br}  \|x^{(\textrm{d})}_{br}\|^{-\alpha_\textrm{c}}    \mathbb{E} [ \bh_{br}^{(\textrm{d})} \bh_{br}^{(\textrm{d})*}  ]  + N_0 \bI_{M} \right) \notag \\
&= \frac{\sum_{b=0}^{B+1} \sum_{r \in \Phi_{bk}^{(\textrm{c})}} P_\textrm{d}\Xi^{(\textrm{d})}_{br}  \|x^{(\textrm{d})}_{br}\|^{-\alpha_\textrm{c}}    + N_0 }{ T_\textrm{c} P_\textrm{c} \Xi^{(s)}_{0k} \|x^{(s)}_{0k}\|^{-\alpha_\textrm{c}} }  \bI_M.
\end{align}

Further, using that $\bh_{bk}^{(s)}$ consists of i.i.d. $\mathcal{CN}(0,1)$ elements and the independence of $\bh_{bk}^{(s)}$  and $ \tilde{\bv}_k^{(s)}$,
\begin{align}
\mathbb{E} [ \bh_{0k}^{(s)}  \tilde{\by}_k^{(s)*} ] = \mathbb{E} [ \bh_{0k}^{(s)}  \bh^{(s)*}_{0k}  +\sum_{b=1}^B \sqrt{\beta^{(s)}_{bk}} \bh_{0k}^{(s)} \bh^{(s)*}_{bk} + \bh_{0k}^{(s)} \tilde{\bv}_k^{(s)*} ] = \bI_M .
\end{align}
Similarly, we have 
\begin{align}
\mathbb{E} [ \tilde{\by}_k^{(s)}  \tilde{\by}_k^{(s)*} ] &= \mathbb{E} [ \bh_{0k}^{(s)}  \bh^{(s)*}_{0k} + \sum_{b=1}^B  \beta^{(s)}_{bk} \bh_{bk}^{(s)} \bh^{(s)*}_{bk}  + \tilde{\bv}_k^{(s)} \tilde{\bv}_k^{(s)*} ] 
\notag \\
&= \left(1 + \sum_{b=1}^B  \beta^{(s)}_{bk} + \frac{\sum_{b=0}^{B+1} \sum_{r \in \Phi_{bk}^{(\textrm{c})}} P_\textrm{d}\Xi^{(\textrm{d})}_{br}  \|x^{(\textrm{d})}_{br}\|^{-\alpha_\textrm{c}}    + N_0 }{ T_\textrm{c} P_\textrm{c} \Xi^{(s)}_{0k} \|x^{(s)}_{0k}\|^{-\alpha_\textrm{c}} } \right) \bI_M = \frac{1}{\xi^{(s)}_k} \bI_M .
\end{align}
Therefore, the MMSE estimate of $\bh^{(s)}_{0k}$ is
\begin{align}
\hat{\bh}^{(s)}_{0k} &= \mathbb{E} [ \bh_{0k}^{(s)}  \tilde{\by}_k^{(s)*} ] (\mathbb{E} [ \tilde{\by}_k^{(s)}  \tilde{\by}_k^{(s)*} ])^{-1} \tilde{\by}^{(s)}_k =\xi^{(s)}_k \tilde{\by}^{(s)}_k .
\end{align}
Clearly, $\hat{\bh}^{(s)}_{0k}$ is zero mean and its covariance is
$
\mathbb{E}[ \hat{\bh}^{(s)}_{0k} \hat{\bh}^{(s)*}_{0k} ] = \xi^{(s)}_k  \bI_M .
$
As for the estimation error $\bepsilon_k^{(s)} = \bh^{(s)}_{0k} - \hat{\bh}^{(s)}_{0k}$, it is clearly zero mean and its covariance is 
\begin{align}
\mathbb{E}[ \bepsilon_k^{(s)} \bepsilon_k^{(s)*} ] = \mathbb{E} [ \bh_{0k}^{(s)}  \bh^{(s)*}_{0k} ] - \mathbb{E}[ \hat{\bh}^{(s)}_{0k} \hat{\bh}^{(s)*}_{0k} ] = (1-\xi^{(s)}_k) \bI_M .
\end{align}

\subsection{Proof of Proposition \ref{pro:6}}
\label{proof:pro:6}

With D2D links deactivated in the training phase, we have
\begin{align}
&\lim_{M\to \infty} \frac{1}{\sqrt{M}}\left( \frac{1}{M^{1/4}} \sum_{b=0}^B \sqrt{T_\textrm{c} P_\textrm{c} \Xi^{(\textrm{c})}_{bk}}  \|x^{(\textrm{c})}_{bk}\|^{-\frac{\alpha_\textrm{c}}{2}} \bh^{(\textrm{c})}_{bk}    + \bar{\bv}^{(\textrm{c})}_0  \right)^*\by^{(\textrm{c})}_0 
=  \sum_{b=0}^{B}  \sqrt{T_\textrm{c}}  P_\textrm{c} \Xi^{(\textrm{c})}_{b k} \|x^{(\textrm{c})}_{b k}\|^{-\alpha_{\textrm{c}}} u^{(\textrm{c})}_{b k}  \notag \\
&+ \lim_{M\to \infty}  \frac{1}{M^{3/4}} \sum_{b=0}^B \sqrt{T_\textrm{c} P_\textrm{c} \Xi^{(\textrm{c})}_{bk}}  \|x^{(\textrm{c})}_{bk}\|^{-\frac{\alpha_\textrm{c}}{2}}  \sum_{i \in \Phi} \sqrt{P_\textrm{d}\Xi_i^{(\textrm{d})} }\|x^{(\textrm{d})}_i\|^{-\frac{\alpha_{\textrm{c}}}{2}}  \bh^{(\textrm{c})*}_{bk} \bh^{(\textrm{d})}_{i}  u_i^{(\textrm{d})}   \notag \\
&+  \lim_{M\to \infty}  \frac{1}{M^{3/4}} \sum_{b=0}^B \sqrt{T_\textrm{c} P_\textrm{c} \Xi^{(\textrm{c})}_{bk}}  \|x^{(\textrm{c})}_{bk}\|^{-\frac{\alpha_\textrm{c}}{2}} \bh^{(\textrm{c})*}_{bk} {\bv}^{(\textrm{c})}_0  + \lim_{M\to \infty}  \frac{1}{M^{3/4}} \sum_{b=0}^B \sqrt{P_\textrm{c} \Xi^{(\textrm{c})}_{bk}}  \|x^{(\textrm{c})}_{bk}\|^{-\frac{\alpha_\textrm{c}}{2}} \bar{\bv}^{(\textrm{c})*}_0 \bh^{(\textrm{c})}_{bk} u^{(\textrm{c})}_{b k} \notag \\
&+\lim_{M\to \infty}  \frac{1}{\sqrt{M}} \sum_{i \in \Phi} \sqrt{P_\textrm{d}\Xi_i^{(\textrm{d})} } \|x^{(\textrm{d})}_i\|^{-\frac{\alpha_{\textrm{c}}}{2}}    \bar{\bv}^{(\textrm{c})*}_0 \bh^{(\textrm{d})}_{i}  u_i^{(\textrm{d})}  + \lim_{M\to \infty}  \frac{1}{\sqrt{M}}\bar{\bv}^{(\textrm{c})*}_0 {\bv}^{(\textrm{c})}_0.
\label{eq:53}
\end{align}
For the second term on the right hand side of (\ref{eq:53}), we can show that it converges to $0$ in probability by following the same arguments of the proof of Prop. \ref{pro:1}. For the third and fourth terms on the right hand side of (\ref{eq:53}), it is clear that they converge to $0$ almost surely. The last term on the right hand side of (\ref{eq:53}) converges in distribution to a zero-mean complex Gaussian random variable of variance $N_0^2$. The fifth term is zero mean and has variance $\sum_{i \in \Phi} P_\textrm{d}\Xi_i^{(\textrm{d})}  \|x^{(\textrm{d})}_i\|^{-\alpha_{\textrm{c}}} N_0$ but not Gaussian. Using the worst-case noise argument, we conclude (\ref{eq:20}) is achievable.

\bibliographystyle{IEEEtran}
\bibliography{IEEEabrv,Reference}

\end{document}